\documentclass[11pt, a4paper]{article} 
\usepackage[english]{babel}
\usepackage[utf8]{inputenc}
\usepackage[a4paper]{geometry}

\usepackage{color}
\usepackage{graphicx}

\usepackage{mathrsfs}

\usepackage{float}

\usepackage{bbm}
\usepackage{enumerate}
\usepackage{tikz}
\usetikzlibrary{matrix}
\usepackage{pgfplots}
\usetikzlibrary{arrows.meta}
\usepackage{subcaption}

\usepackage[colorlinks=false,linkcolor=black,citecolor=black]{hyperref}

\usepackage[symbol]{footmisc}

\usepackage{amsmath}
\usepackage{amssymb}
\usepackage{amsthm}
\usepackage{stmaryrd}

\newcommand{\F}{\mathcal{F}}
\newcommand{\PM}{\mathbb{P}}

\newcommand{\E}{\mathbb{E}}

\newsavebox\dotbox
\sbox{\dotbox}{\(\displaystyle\bigodot\)}

\newcommand{\beao}{\begin{eqnarray*}}
\newcommand{\eeao}{\end{eqnarray*}\noindent}

\newcommand{\beam}{\begin{eqnarray}}
\newcommand{\eeam}{\end{eqnarray}\noindent}

\def\bbr{{\Bbb R}}   
\def\bbn{{\Bbb N}}

\newcommand{\eps}{{\varepsilon}}

\newcommand{\ov}{\overline}
\newcommand{\un}{\underline}
\newcommand{\wh}{\widehat}
\newcommand{\wt}{\widetilde}

\theoremstyle{theorem}
\newtheorem{theorem}{Theorem}[section]
\newtheorem{lemma}[theorem]{Lemma}
\newtheorem{corollary}[theorem]{Corollary}
\newtheorem{proposition}[theorem]{Proposition}

\newtheorem{note}[theorem]{Note}
\newtheorem{remark}[theorem]{Remark}
\newtheorem{example}[theorem]{Example}
\newtheorem{assumption}[theorem]{Assumption}
\theoremstyle{definition}
\newtheorem{definition}[theorem]{Definition}

\numberwithin{equation}{section}

\title{Optimal investment under capital gains taxes}

\author{Alexander Dimitrov\thanks{Institute of Mathematics, Goethe University Frankfurt, D-60054 Frankfurt a.M., Germany, e-mail: \{dimitrov, ckuehn\}@math.uni-frankfurt.de} \and Christoph K\"uhn\footnotemark[1]}

\date{}

\begin{document}
	\maketitle
	\begin{abstract}
We generalize classical results on the existence of optimal portfolios in discrete time frictionless market models to models with capital gains taxes.
We consider the realistic but mathematically challenging rule that losses do not trigger negative taxes but can only be offset against potential gains in the future. 
Central to the analysis is a well-known phenomenon from arbitrage-free markets with proportional transaction costs that does not exist in arbitrage-free frictionless markets: an investment in specific quantities of stocks that is completely riskless but may provide an advantage over holding money in the bank account. 	
As a result of this phenomenon, on an infinite probability space, no-arbitrage does not imply that the set of attainable terminal
wealth is closed in probability. We show closedness under the slightly stronger {\em no unbounded non-substitutable
investment with bounded risk} condition.

As a by-product, we provide a proof that in discrete time frictionless models with 
short-selling constraints, no-arbitrage implies that the set of attainable terminal wealth is closed in probability---even if there are redundant stocks.
\end{abstract}

\begin{tabbing}
{\footnotesize Keywords:} utility maximization, capital gains taxes, limited use of losses\\

{\footnotesize Mathematics Subject Classification (2020): 91G10, 91B16, 60G99, 90C25}\\
\end{tabbing}	

\section{Introduction}

Capital gains taxes have a major impact on investors' returns, and their influence on optimal investment decisions has been well studied
in the financial economics literature. 
In an influential early contribution, Dybvig and Koo~\cite{dybvig.koo.1996} analyze a multiperiod utility maximization problem under the so-called {\em exact tax basis} or {\em specific share identification method}. This corresponds to the tax legislation in many countries, especially in the U.S., and seems economically to be the most reasonable tax basis: when an investor wants to reduce a position, she 
explicitly identifies which shares---purchased at which time and at which price---are being sold and thus relevant for taxation.
The model was further developed by Damman, Spatt, and Zhang~\cite{dammon.spatt.zhang.2001, dammon.spatt.zhang.2004}, among many others, using the average tax basis and including a tax-deferred account. The average tax basis simplifies tax payments by considering an average purchasing price of identical shares in the portfolio. 

In the earlier financial economics literature, tax systems with the {\em full use of losses~(FUL)} have been considered, that is, overall losses in a year lead to negative tax payments, so-called tax credits. More realistic is the {\em limited use of losses~(LUL)}, that is, losses do not trigger negative taxes but can only be offset against potential gains in the future. Optimal portfolios with LUL compared to FUL are analyzed by Ehling, Gallmeyer, Srivastava, Tompaidis, and Yang~\cite{ehling.et.al.2018} and Fischer and Gallmeyer~\cite{fischer.gallmeyer.2017}.
Haugh, Iyengar, and Wang~\cite{haugh.iyengar.wang.2016} apply the duality method based on relaxing the adaptedness constraint of stochastic controls
to obtain dual upper bounds for the value of an utility maximization problem. The method allows to evaluate heuristic strategies in problems with a 
rather large number of assets and time periods.   

Moreover, several advanced continuous time portfolio optimization problems with taxes have been analyzed:
In their pioneering work, Jouini, Koehl, and Touzi~\cite{jouini.koehl.touzi.1999, jouini.koehl.touzi.2000} derive first-order conditions for  
optimal consumption plans with the {\em first in first out~(FIFO)} tax basis and a quite general tax code, but only with a deterministic stock price process.
In Ben Tahar, Soner, and Touzi~\cite{tahar.soner.touzi.2007, tahar.soner.touzi.2010} the continuous time Merton problem for the average tax basis, FUL, and non-zero transaction costs is analyzed. It is shown that the value function is the unique viscosity  solution of the corresponding
dynamic programming equation.
Cai, Chen, and Dai~\cite{cai.chen.dai.2018} and
Bian, Chen, Dai, and Qian~\cite{bian.chen.dai.qian.2021} consider the same problem without transaction costs. They show that the value function is the minimal viscosity solution of the corresponding dynamic programming equation and approximate it by value functions of the problem with constraint trading rates. For the original problem, the analysis is based on $\eps$-optimal strategies.
In Seifried~\cite{seifried.2010} the continuous time optimization problem is solved for the LUL, but for a tax-deferred account, that is, taxes are only paid at maturity.

Nevertheless, a comprehensive mathematical theory for market models with taxes, comparable to those for frictionless markets and for markets with proportional transaction costs, is still lacking. This paper aims to contribute to closing this gap and to relate tax models to transaction costs models. Of course, tax legislation varies from country to country, but there are two core principles: taxes are realization-based, that is, they are paid
when gains are realized by selling securities and not when they arise on paper; and there is a limited use of losses~(LUL) as described above. 
We consider a model that captures this, but for simplicity, we abstract from many other detailed tax regulations that would distract from the mathematical core of the problem. We work with the exact tax basis, which is a natural choice for a theoretical analysis.

In a model with the full use of losses~(FUL), the after-tax wealth is affine linear in the trading strategy, specifying the number of stocks in the portfolio, and the model can be formally expressed as a transaction costs model by issuing at each point in time new artificial securities (see K\"uhn~\cite[Section~5]{kuehn.2019}). But LUL, that we consider in the current paper, makes the after-tax wealth non-linear in the trading strategy, which is why many standard arguments from the theory of proportional transaction costs are not applicable. On the other hand, a useful property of LUL is that the after-tax wealth cannot exceed the wealth of the same trading strategy but without taxes. This estimate plays a key role in the paper. In addition, under LUL, the no-arbitrage property is equivalent to that in the corresponding frictionless market. By contrast, with FUL, the after-tax wealth may exceed that of a tax-exempt investor, and the no-arbitrage property is even stronger than that under no taxes (cf. \cite[Remark~2.2]{kuehn.2019}).

In the present paper, we first examine the closedness of the set of attainable terminal wealth in the LUL tax model, which forms the basis for the further analysis of optimal portfolios. On a finite probability space, this set is always closed---with respect to the supremum norm and even if the model would allow for an arbitrage~(Proposition~\ref{28.11.2025.1.ck}). For the general case, however, no-arbitrage alone does not imply that the set of attainable terminal wealth is closed in probability as in discrete time frictionless markets (see Example~\ref{example:Nonclosedness}).

Namely, in models with taxes, one runs into a phenomenon well-known from models with proportional transaction costs. Let us describe this phenomenon in detail. First note that both transaction costs and taxes tend to make portfolio rebalancing disadvantageous. In the first case, an investment needs time to amortize trading costs, and in the second case, realizing book profits can trigger (earlier) tax payments. Now, there can exist an investment in specific quantities of stocks
that, in the first period, leads to the same pre-tax return as the bank account for sure.
On the one hand, a debt-financed purchase of these stocks is riskless in the first period (both for a tax-exempt and a taxable investor).
On the other hand, this purchase can lead to unrealized book profits and realized losses of the same amount at time~$1$. In the future,
this could provide an advantage for the taxable investor over doing nothing in the first period, since it could help her to defer or even avoid paying taxes. Consequently, with limited initial information, the investor can buy arbitrarily many of these stocks financed with debt at time~$0$ and then, based on additional information, sell the shares that are not needed at time~$1$---without any risk. On an infinite probability space, this can destroy the closedness of the set of attainable terminal wealth. The reason is that the amount of purchases at time~$0$ that could provide a tax advantage need not be bounded ex ante. Exactly the same phenomenon was observed by Schachermayer~\cite[Example~3.1]{schachermayer2004fundamental} in the context of arbitrage-free transaction costs models. By contrast, in an arbitrage-free frictionless market, the phenomenon cannot occur since there, an ``advantage'' boils down to a higher wealth expressed in terms of monetary units.
Consequently, in discrete time frictionless models, no-arbitrage already implies that the set of attainable terminal wealth is closed in probability (Dalang, Morton, and Willinger~\cite{dmw.1990}).

To rule out the phenomenon described above, we introduce the {\em no unbounded non-substitutable investment with bounded risk~(NUIBR)} condition, which is slightly stronger than no-arbitrage and under which the set of attainable terminal wealth turns out to be closed in probability~(Theorem~\ref{theo:main closedness}). 
Then, in Section~\ref{29.12.2025.3}, we analyze the utility maximization problem of terminal wealth, which requires the above closedness property (guaranteed, e.g., by NUIBR). We work with a utility function that takes the value~$-\infty$ for a negative terminal wealth, which seems to be quite natural. The utility function has to satisfy only minimal assumptions, that is, it has to be nondecreasing and concave, but it need not even be differentiable. First, we show that the value function is finite if and only if the value function of the corresponding problem without taxes is finite~(Proposition~\ref{lemma:IntegrabilityAllx}). Then, under the assumptions that the set of attainable terminal wealth is closed in probability and the value function is finite, we show that an optimal strategy always exists~(Theorem~\ref{Theo:MainUtility}). This generalizes R\'asonyi and Stettner~\cite[Theorem~1.1]{rasonyi.stettner.2005} for frictionless models. Example~\ref{30.1.2026.1} shows that the optimal strategy need not be unique.
We note that in continuous time, Kramkov and Schachermayer~\cite[Example~5.2]{kramkov.schachermayer.1999} provide a counterexample showing that the closedness of the set of attainable terminal wealth---there, it is guaranteed by NFLVR---does not imply that the finite supremum over expected utilities is attained. The counterexample justifies the stronger assumptions on the utility function made for the continuous time maximzation problem in \cite{kramkov.schachermayer.1999}.

As a by-product, we generalize R\'asonyi and Stettner~\cite[Theorem~1.1]{rasonyi.stettner.2005} to frictionless markets with short-selling constraints. 
We obtain a different integrable majorant for the utilities of attainable terminal wealth, that may be of independent interest:
Let $x>0$ and $d\in\bbn$ denotes the number of risky assets. 
Then, in an arbitrage-free one-period model, there exists a trading strategy with initial capital~$(2^{d+1}-1)x$ whose terminal wealth dominates $P$-a.s. the terminal wealth of {\em any} strategy with initial capital~$x$ and $P$-a.s. nonnegative wealth. The result directly extends to a $T$-period model, where the required initial capital increases to $(2^{d+1}-1)^T x$.
Since the value function is finite for all positive initial capitals if it is finite for initial capital~$x$, the utility of the dominating strategy 
is an integrable majorant of all utilities that can be achieved with initial capital~$x$.
The proof in \cite[Theorem~1.1]{rasonyi.stettner.2005} relies on a similar one-period estimate, but with an integrable majorant which need not be the utility of an attainable terminal wealth for some initial capital. This is the reason why their estimate cannot be directly extended to a multiperiod model. To make their result applicable to the multiperiod case, \cite{rasonyi.stettner.2005} establish a dynamic programming principle and consider one-period majorants with regard to the value function instead of the utility function.

\section{The model}\label{23.12.2025.1.ck}

Throughout the paper, we fix a terminal time~$T\in\bbn$ and a filtered probability \linebreak space~$(\Omega,\mathcal{F},(\mathcal{F}_t)_{t=0,1,\ldots,T},P)$. (In)equalities between random variables are understood almost surely unless stated otherwise. In contexts involving random sets, however, ``a.s.'' is written explicitly. 
There are $d\in\bbn$ non-shortable risky assets with adapted price processes~$S^j=(S^j_t)_{t=0,1,\ldots,T}$, $j=1,\ldots,d$. In addition, the investor can both lend to and borrow from a bank account at the same interest rate modeled by the {\em nonnegative} adapted process~$r=(r_t)_{t=1,\ldots,T}$. That is, $r_t$ is the interest rate between $t-1$ and $t$. It seems to be reasonable to assume that $r_t$ is even $\mathcal{F}_{t-1}$-measurable, but mathematically this is not used anywhere in the paper.
Following the notation in \cite{dybvig.koo.1996}, $N_{i,t,j}\in L^0(\mathcal{F}_t; \bbr_+)$ denotes the number of assets of type~$j\in\{1,\ldots,d\}$ that are bought at time $i\in\{0,\ldots,T-1\}$ and kept in the portfolio at least after trading 
at time $t\in\{i,\ldots,T\}$. Here, $L^0(\mathcal{F}_t; \bbr_+)$ denotes the space of equivalence classes of $\F_t$-measurable $\bbr_+$-valued random 
variables, equipped with the topology of convergence in probability. 
By the implicit assumption that an asset cannot be bought and resold at the same time, $N_{i,i,j}$ is the number of shares of type~$j$ purchased at time~$i$. Short-selling is excluded, and one has the constraint
\beao
 N_{t,t,j}\geq N_{t,t+1,j}\geq\ldots\geq N_{t,T,j}=0\quad\mbox{for\ }t=0,\ldots,T-1,
\eeao
which forces liquidation at $T$. The set of all such strategies~$N$ is denoted by $\mathcal{N}$.
For brevity we sometimes write $S$ for the vector~$(S^1,\ldots,S^d)$, $N_{i,t}$ for the vector~$(N_{i,t,1},\ldots,N_{i,t,d})$, and denote by $\langle\cdot,\cdot\rangle$ and $|\cdot|$ the scalar product and the Euclidean norm in $\bbr^d$, respectively. By $\eta_t\in L^0(\mathcal{F}_t; \bbr)$ we denote the number of monetary units {\em after} paying taxes and trading at time~$t$. First, we define the accumulated realized profits and losses by
\beam\label{13.11.2025.1.ck}
G_0:=0,\quad G_t:= \sum_{u=1}^t\left(\eta_{u-1} r_u + \sum_{i=0}^{u-1}\langle N_{i,u-1}-N_{i,u},S_u-S_i\rangle\right),\quad t\ge 1.
\eeam
This means that taxes on the interest payments of the bank account cannot be deferred.
Then, the tax payment stream with limited use of losses~(LUL) is defined by
\beam\label{13.11.2025.2.ck}
\Pi_t := \alpha\max_{u=0,1,\ldots,t} G_u,\quad t\ge 0,
\eeam
where $\alpha\in[0,1)$ is the constant tax rate. A strategy~$(\eta,N)$ is called self-financing for the initial capital~$x\in\bbr$ iff $\eta_{-1}=x$ and 
\beam\label{13.11.2025.3.ck}
\eta_t - \eta_{t-1} =  r_t \eta_{t-1}1_{(t\ge 1)}  
+ \left\langle\sum_{i=0}^{t-1}(N_{i,t-1}-N_{i,t}) - N_{t,t}, S_t\right\rangle - (\Pi_t-\Pi_{t-1})1_{(t\ge 1)},\quad t\ge 0.
\eeam
We note that given $x\in\bbr$ and $N\in\mathcal{N}$, definition~(\ref{13.11.2025.1.ck})-(\ref{13.11.2025.3.ck}) yields an explicit construction of a unique adapted process~$\eta=:\eta(x,N)$ such that $(\eta,N)$ is self-financing with initial capital~$x$. This is because $G_t$ and thus $\Pi_t$ do not depend on $\eta_t$. 
For every $\alpha\in[0,1)$, the terminal wealth is denoted by
\beao
V^\alpha(x,N):=\eta_T.
\eeao
For the special case of a frictionless market, i.e., $\alpha=0$, the wealth {\em process} can be defined by
\beao
V^0_t(x,N):=\eta_t + \left\langle\sum_{i=0}^t N_{i,t}, S_t\right\rangle,\quad t=0,1,\ldots,T.
\eeao
It satisfies the recursion $V_0^0(x,N)=x$ and 
\begin{equation}\label{eq:frictionlesswealth}
    V_t^0(x,N)=(1+r_t)V_{t-1}^0(x,N)+\left\langle\sum_{i=0}^{t-1}N_{i,t-1},S_t-(1+r_t)S_{t-1}\right\rangle,\quad t=1,\ldots,T.
\end{equation}
By contrast, for $\alpha>0$, there is no canonical one-dimensional wealth for $t<T$, as there is no one-to-one relation between 
a stock position~$N_{i,t,j}$ with unrealized gain~$N_{i,t,j}(S^j_t-S^j_i)$ and a taxed amount in the bank account.

\begin{remark} 
Natural generalizations are to introduce dividends or to limit tax payments to a subset of time points in $\{1,\ldots,T\}$. 
We refrain from doing so to avoid complicating the notation, but we stress that the main results of the paper would still hold.
\end{remark}
\begin{definition}
The model satisfies the no-arbitrage (NA) condition iff for every $N\in\mathcal{N}$ the following implication holds:
\beao
V^\alpha(0,N)\ge 0\quad\mbox{a.s.}\quad\implies\quad V^\alpha(0,N)=0\quad\mbox{a.s.}
\eeao
\end{definition}

\begin{lemma}\label{lemma:concave}
The following basic properties hold.
    \begin{enumerate}
    \item Let $x\in\bbr$ and $N\in\mathcal{N}$. The process $\eta$ from (\ref{13.11.2025.3.ck}) that makes $(\eta,N)$ self-financing for the initial capital~$x$ satisfies
    \beam\label{eq:bankalt1}
    \eta_t=x+G_t-\Pi_t-\sum_{i=0}^t\langle N_{i,t},S_i\rangle, \quad t=0,1,\ldots,T.
    \eeam
    \item The terminal wealth is positive homogeneous in $(x,N)$, i.e., for all $\lambda\in\bbr_+$, $x\in \bbr$, and $ N\in \mathcal{N}$, we have $V^\alpha(\lambda x,\lambda N)=\lambda V^\alpha(x,N)$.
    \item For every $x\in \mathbb{R}$ and every $N\in \mathcal{N}$, we have that $V^\alpha(x,N)\le V^0(x,N)$.
    \item The terminal wealth  is concave in $(x,N)$, i.e., for all $\lambda\in[0,1]$, $x^{(1)},x^{(2)}\in \bbr$, and $ N^{(1)},N^{(2)}\in \mathcal{N}$, we have 
\beao
V^\alpha(\lambda x^{(1)}+(1-\lambda) x^{(2)},\lambda N^{(1)}+(1-\lambda)N^{(2)})\ge \lambda V^\alpha(x^{(1)},N^{(1)})+(1-\lambda) V^\alpha(x^{(2)},N^{(2)}),
\eeao
and thus for all $x\in\bbr$, the set~$\{V^\alpha(x,N)\ :\ N\in \mathcal{N}\}-L^0(\F_T;\bbr_+)$ is convex. 
\item The NA condition holds if and only if it holds for the corresponding tax-free model. 
\end{enumerate}
\end{lemma}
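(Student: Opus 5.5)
We prove the five items in order, since later items build on the explicit formula in item~1.

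\emph{Item 1.} This is an induction on $t$ using the self-financing equation~(\ref{13.11.2025.3.ck}). For $t=0$ we have $\eta_0=x-\langle N_{0,0},S_0\rangle$, since $G_0=0$ and $\Pi_0=\alpha\max(G_0)=0$. For the induction step we subtract the claimed formula at $t-1$ from the one at $t$. Using (\ref{13.11.2025.1.ck}), the increment $G_t-G_{t-1}$ equals $\eta_{t-1}r_t+\sum_{i<t}\langle N_{i,t-1}-N_{i,t},S_t-S_i\rangle$. Hence the claimed increment of $\eta$ is
\[
\eta_{t-1}r_t+\sum_{i<t}\langle N_{i,t-1}-N_{i,t},S_t\rangle-\langle N_{t,t},S_i\rangle|_{i=t}-(\Pi_t-\Pi_{t-1}),
\]
because the $S_i$ terms in the $G$-increment cancel against the change of $\sum_i\langle N_{i,t},S_i\rangle$. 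This is exactly (\ref{13.11.2025.3.ck}).

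\emph{Item 2.} Positive homogeneity follows from the recursive construction. By induction in $t$, the quantities $\eta_t$, $G_t$ and $\Pi_t$ scale by $\lambda$, because $\max$ commutes with multiplication by $\lambda\ge0$.

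\emph{Item 3.} Apply item~1 at $t=T$, where $N_{i,T}=0$, to get $V^\alpha(x,N)=x+G_T-\Pi_T$. Here $G_T$ depends on $\eta$, so we must compare against the frictionless quantities. Write $G^0_t$ and $\eta^0_t$ for the $\alpha=0$ analogues. We show by induction that $\eta_t\le\eta^0_t$ and $G_t-\Pi_t\le G^0_t$. Since $\Pi_t\ge \alpha G_t$ and $\Pi_t\ge0$, we have $G_t-\Pi_t\le (1-\alpha)G_t^+ + G_t^-$-type bounds. The cleaner route is to show directly that $\eta_t-\eta^0_t$ satisfies
\[
\eta_t-\eta^0_t=(1+r_t)(\eta_{t-1}-\eta^0_{t-1})-(\Pi_t-\Pi_{t-1}).
\]
The stock terms are identical, and $\Pi$ is nondecreasing with $r_t\ge0$, so $\eta_t\le\eta^0_t$ inductively. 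At $T$ all stock positions vanish, so $V^\alpha=\eta_T\le\eta^0_T=V^0$. This is the key use of LUL, namely that $\Pi$ is nondecreasing.

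\emph{Item 4.} This is the main obstacle, because $\eta$ enters $G$ nonlinearly through $\Pi$. We prove by induction that $\eta_t$ is concave in $(x,N)$ and that each $G_t$ is concave as well. $\Pi_t=\alpha\max_{u\le t}G_u$ is not convex in general for concave $G_u$, so instead we use the representation
\[
\eta_t=\eta^0_t-\sum_{u\le t}\Big(\prod_{v=u+1}^t(1+r_v)\Big)(\Pi_u-\Pi_{u-1})
\]
and attempt to show that $\eta$ is the pointwise minimum over ``tax-paying policies'' of affine functionals. The most promising device is monotonicity. $G_t$ is affine in $(N,\eta_{0},\dots,\eta_{t-1})$ and increasing in each $\eta_{u}$ since $r\ge0$. $\eta_t$ is affine in $N$ and $\eta_{t-1}$, minus the increment of a running maximum. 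We show jointly by induction that $\eta_t$ and the pair $(G_t,\,x+G_t-\Pi_t)$ have the needed concavity. Convexity of $\Pi_t$ in the strategy is delicate, so the cleaner invariant to carry is that $x+G_t-\Pi_t$ and $\eta_t$ are concave, together with monotonicity of the maps in the $\eta$-inputs. The closing step, convexity of $\{V^\alpha(x,N)\}-L^0_+$, then follows from concavity together with the convexity of $\mathcal N$.

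\emph{Item 5.} For NA, item~3 gives one direction: if $V^\alpha(0,N)\ge0$, then $V^0(0,N)\ge0$, hence $V^0(0,N)=0$ under frictionless NA, hence $V^\alpha=0$. Conversely, given a frictionless arbitrage $N$, we use the fact that with FUL-free taxes and $\alpha<1$, the after-tax terminal wealth $x+G_T-\Pi_T$ is nonnegative whenever $G_T\ge 0$ pathwise in an appropriate sense. We then pick a frictionless arbitrage that is a one-period buy-and-sell strategy, which exists in discrete time by the usual reduction. For such a strategy, $G_T=V^0$, there is at most one realization, and $\Pi_T=\alpha G_T^+$, so $V^\alpha=(1-\alpha)V^0$, which is still an arbitrage.
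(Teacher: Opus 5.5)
\textbf{The gap is item (iv).} You correctly note that $\Pi_t=\alpha\max_{u\le t}G_u$ is a maximum of (inductively) concave functionals, so $-\Pi_t$ need not be concave. But you never resolve this. You only list devices you might use: a min-over-tax-rules representation that you ``attempt'' but never establish, monotonicity in the $\eta$-inputs, and the invariant ``$x+G_t-\Pi_t$ is concave''. That last invariant is circular: by (i) it differs from $\eta_t$ only by the linear term $\sum_{i\le t}\langle N_{i,t},S_i\rangle$.

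The missing idea is that concavity of $-\Pi_t$ is not needed. One-sided bounds coming from the running maximum suffice, applied after a pointwise case split on the \emph{combined} strategy. Set $x:=\lambda x^{(1)}+(1-\lambda)x^{(2)}$, $N:=\lambda N^{(1)}+(1-\lambda)N^{(2)}$ and $D_t:=\eta_t(x,N)-\lambda\eta_t(x^{(1)},N^{(1)})-(1-\lambda)\eta_t(x^{(2)},N^{(2)})$. Prove $D_t\ge0$ by strong induction, starting from $D_0=0$.
\begin{itemize}
\item On $\{\Pi_t(x,N)=\Pi_{t-1}(x,N)\}$, use the self-financing recursion; its stock terms cancel by linearity. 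Monotonicity of $\Pi(x^{(j)},N^{(j)})$ then gives $D_t\ge(1+r_t)D_{t-1}\ge0$.
\item On $\{\Pi_t(x,N)>\Pi_{t-1}(x,N)\}$, the combined strategy pays exactly $\Pi_t(x,N)=\alpha G_t(x,N)$, while $\Pi_t(x^{(j)},N^{(j)})\ge\alpha G_t(x^{(j)},N^{(j)})$. Insert this into (\ref{eq:bankalt1}); there, all terms except $\Pi_t$ and the interest terms $r_s\eta_{s-1}$ of $G_t$ are linear in $(x,N)$. This yields $D_t\ge(1-\alpha)\sum_{s=1}^t r_sD_{s-1}\ge0$, using the hypothesis for all $s<t$ and $r\ge0$.
\end{itemize}
Without this step, neither concavity nor the convexity of $\{V^\alpha(x,N): N\in\mathcal{N}\}-L^0(\F_T;\mathbb{R}_+)$ is proved. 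The same holds unless you actually prove $V^\alpha(x,N)=\min_\tau\eta^{(\tau)}_T(x,N)$ over tax rules with fixed tax dates, which needs the same kind of comparison argument.

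\textbf{The other items are essentially fine.} For (iii), your recursion $\eta_t-\eta^0_t=(1+r_t)(\eta_{t-1}-\eta^0_{t-1})-(\Pi_t-\Pi_{t-1})$ is a valid alternative to the paper's route via (\ref{eq:bankalt1}) and $G^\alpha_t\le G^0_t$. Drop the garbled intermediate bound before it, which has a sign error. For (v) you follow the paper: item (iii) plus a one-period frictionless arbitrage. One inaccuracy there: $V^\alpha=(1-\alpha)V^0$ holds only if the arbitrage period is the last one. If $\beta$ is bought at $t$ with $X:=\langle\beta,S_{t+1}-(1+r_{t+1})S_t\rangle\ge0$, the after-tax gain $(1-\alpha)X$ afterwards earns taxed interest. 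So $V^\alpha(0,N)=(1-\alpha)X\prod_{s=t+2}^T(1+(1-\alpha)r_s)$, and likewise $G_T\neq V^0$ in general. This is still nonnegative and positive on $\{X>0\}$, so your conclusion survives.
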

\begin{proof}
Ad $(i)$. Follows by considering the increments between $t-1$ and $t$.

Ad $(ii)$. Using (\ref{13.11.2025.3.ck}), it can easily be shown by induction on $t$ that the mapping $(x,N)\mapsto \eta_t(x,N)$, where $\eta$ is given by (\ref{13.11.2025.3.ck}), is positive homogenous. 

Ad $(iii)$. Let us show by induction that $\eta^{0}_t\ge\eta^{\alpha}_t$ for $t=0,\ldots,T$, where $\eta^0$ and $\eta^\alpha$ are given by the self-financing 
condition with tax rates $0$ and $\alpha$, respectively. The base case $t=0$ is trivial. Suppose that we have shown the assertion for all $s<t$. We get $G^{\alpha}_t\le G^{0}_t$, and the claim immediately follows from (\ref{eq:bankalt1}).

Ad $(iv)$. For $\lambda\in[0,1]$, $x^{(1)},x^{(2)}\in \mathbb{R}$, and $N^{(1)},N^{(2)}\in \mathcal{N}$ define $x:=\lambda x^{(1)}+(1-\lambda)x^{(2)}$ and $N:=\lambda N^{(1)}+(1-\lambda )N^{(2)}$. Let us prove by induction the stronger claim that 
\beam\label{26.12.2025.1}
\eta_t(x,N)\ge \lambda \eta_t(x^{(1)},N^{(1)})+(1-\lambda )\eta_t(x^{(2)},N^{(2)})\quad \mbox{for}\ t=0,\ldots,T.
\eeam
The base case $t=0$ is trivial. Suppose that we have shown (\ref{26.12.2025.1}) for every time~$s<t$.
On the set~$\{\Pi_t(N)=\Pi_{t-1}(N)\}$, we use the self-financing condition~(\ref{13.11.2025.3.ck}) and (\ref{26.12.2025.1}) for $t-1$ to get
\beao
&&\eta_t(x,N)- \lambda \eta_t(x^{(1)},N^{(1)})-(1-\lambda )\eta_t(x^{(2)},N^{(2)})\\
&\ge&\lambda (\Pi_t(x^{(1)},N^{(1)})-\Pi_{t-1}(x^{(1)},N^{(1)}))+(1-\lambda )(\Pi_t(x^{(2)},N^{(2)})-\Pi_{t-1}(x^{(2)},N^{(2)}))\ \ge\ 0.
\eeao
On the set $\{\Pi_t(x,N)>\Pi_{t-1}(x,N)\}$, we use (\ref{eq:bankalt1}), $\Pi_t(x^{(j)},N^{(j)})\ge \alpha G_t(x^{(j)},N^{(j)})$ for $j=1,2$,  (\ref{26.12.2025.1}) for all $s<t$, and $r\ge 0$ to conclude that
\beao
&&\eta_t(x,N)- \lambda \eta_t(x^{(1)},N^{(1)})-(1-\lambda )\eta_t(x^{(2)},N^{(2)})\\
&=&\sum_{s=1}^{t}r_s(\eta_{s-1}(x,N)- \lambda \eta_{s-1}(x^{(1)},N^{(1)})-(1-\lambda )\eta_{s-1}(x^{(2)},N^{(2)})) - \Pi_t(x,N)\\
& & + \lambda\Pi_t(x^{(1)},N^{(1)})+(1-\lambda)\Pi_t(x^{(2)},N^{(2)})\\
&\ge& \sum_{s=1}^{t}(1-\alpha)r_s(\eta_{s-1}(x,N)- \lambda \eta_{s-1}(x^{(1)},N^{(1)})-(1-\lambda )\eta_{s-1}(x^{(2)},N^{(2)}))\\
&\ge& 0.
\eeao
Ad $(v)$. This follows from part~(iii) and from the fact that the frictionless market allows for an arbitrage in a single period if it does not satisfy NA.
\end{proof}

\begin{remark}
The proof of Lemma~\ref{lemma:concave}(iv) shows that for concavity, which is essential for the further analysis, one needs that $r_t\ge 0$.
Otherwise, it may happen that the absence of taxes due to an unavoidable realization of losses is disadvantageous if they have to be paid later.
\end{remark}

\section{Closedness}

The following example shows that NA alone does not imply that the set of attainable terminal wealth is closed in probability. Furthermore, the  supremum in a log-utility maximization problem is not attained. 
The basic idea, already described in the introduction, is that in the first period the stock has the same return as the bank account for sure, but
a debt-financed purchase at time~$0$ provides a tax advantage if the stock return between $1$ and $2$ is higher than the interest rate
and a part of the stocks are sold at time~$2$. 

\begin{example}[$\nexists$ optimal strategy]\label{example:Nonclosedness} 
Let $T=3$, $\alpha\in(0,1)$, and $r\in\bbr_+\setminus\{0\}$. We introduce the random variables $Y$, $Z_2$, and $Z_3$ with
(conditional) probabilities $P(Y=k)=2^{2-k}$, $P(Z_2=c_k\ |\ Y=k)=p_{1,k}$, $P(Z_2=-r/k\ |\ Y=k)=1-p_{1,k}$,
$P(Z_3=c_k\ |\ Y=k, Z_2=c_k)=p_{2,k}$, $P(Z_3=-r/k\ |\ Y=k, Z_2=c_k)=1-p_{2,k}$, and
$P(Z_3=-1-r\ |\ Y=k, Z_2=-r/k)=1$ for all $k\in \bbn$ with $k\ge 3$, where $c_k:=(k-2)(2r+r^2)/\left(k(1+r)\right)$. The probabilities~$p_{1,k},p_{2,k}\in(0,1)$ for $k\ge 3$ are not yet specified. The filtration is given by $\mathcal{F}_0=\{\emptyset,\Omega\}$, $\mathcal{F}_1=\sigma(Y)$, $\mathcal{F}_2=\sigma(Y,Z_2)$, and $\mathcal{F}_3=\sigma(Y,Z_2,Z_3)$. 

We have a single stock with price process~$(S_t)_{t=0,1,2,3}$ given by $S_0=1$, $S_1=1+r$, $S_2=(1+r)(1+r+Z_2)$, and $S_3=(1+r)(1+r+Z_2)(1+r+Z_3)$. The interest rate process~$(r_t)_{t=1,2,3}$ is given by $r_1=r_2=r$ and $r_3=r1_{\{Z_2>0\}}$. The model obviously satisfies NA. Consider the utility maximization problem 
\beam\label{eq:OPT1}
u(1):=\sup_{N\in\mathcal{N}}\E[\ln(V^\alpha(1,N)-\alpha)]
\eeam
with the convention that $\ln(x):=-\infty$ for $x\le 0$ and $\E[\ln(V^\alpha(1,N)-\alpha)]:=-\infty$ if $\E[\ln^-(V^\alpha(1,N)-\alpha)]=\infty$. In addition, we consider the auxiliary frictionless maximization problem 
\beam\label{eq:OPT2}
\wt u(1):=\sup_{\wt N\in\wt{\mathcal{N}}}\E[\ln((1-\alpha)V^0(1,\wt N)+\alpha-\alpha)],
\eeam
where $\wt{\mathcal{N}}$ is the set of strategies with regard to the larger filtration given by $\wt \F_0=\wt\F_1=\F_1$, $\wt\F_2=\F_2$, and $\wt\F_3=\F_3$. 

Now, we choose the probabilities $p_{1,k}$ and $p_{2,k}$ such that the set of maximizers of the frictionless problem~(\ref{eq:OPT2}) is given by $\{\wt{N}\in\wt{\mathcal{N}}\ :\ \wt{N}_{0,1}+\wt{N}_{1,1}=Y\ \mbox{and}\ \wt{N}_{0,2}+\wt{N}_{1,2}+\wt{N}_{2,2}=Y/21_{\{Z_2>0\}}\}=:\wt{\mathcal{M}}\not=\emptyset$.
This means that the amounts invested in the stock during the second and third period are uniquely determined by optimality, whereas the amount invested during the first period is arbitrary. 
To see that this is possible, we first turn to discounted values by adding the constant~$-\E[\ln((1+r)^2(1+r1_{\{Z_2>0\}}))]$ to (\ref{eq:OPT2}). 
For strategies from $\wt{\mathcal{M}}$, the frictionless wealth is nondecreasing for sure, and thus the discounted terminal wealth is bounded away from zero uniformly in $Y(\omega)$, $\omega\in\Omega$. In addition, it is uniformly bounded from above. Given $Y(\omega)$, we have a complete market with a unique equivalent martingale measure~$Q$. The optimality condition that the marginal utility is proportional to $dQ/dP$ 
(see, e.g., \cite[Theorem~2.0]{kramkov.schachermayer.1999}) leads to $3$ equations with $3$ unknown that have a unique solution. Without the first period, the market is non-redundant, and the $3$-dimensional maximizer~$(\wt{N}_{0,1}+\wt{N}_{1,1},(\wt{N}_{0,2}+\wt{N}_{1,2}+\wt{N}_{2,2})1_{\{Z_2>0\}}, (\wt{N}_{0,2}+\wt{N}_{1,2}+\wt{N}_{2,2})1_{\{Z_2<0\}})$ is unique.

Next observe that by $\mathcal{N}\subseteq \wt{\mathcal{N}}$ and by 
\beam\label{8.2.2026.1}
V^\alpha(1,N)=1+G_3^\alpha-\Pi_3\le (1-\alpha)\left[1+G_3^\alpha\right]+\alpha\le (1-\alpha)V^0(1,N)+\alpha\quad\mbox{for all}\ N\in\mathcal{N},
\eeam
we have $\wt u(1)\ge u(1)$. To show equality, we consider the sequence $(N^n)_{n\in \bbn}\subseteq \mathcal{N}$ given by $N^n_{0,0}:=n$, $N^n_{0,1}:=Y1_{\{Y\le n\}}$, $N^n_{0,2}:=Y/21_{\{Y\le n, Z_2>0\}}$,  $N^n_{1,1}:=0$, and $N^n_{2,2}:=0$.
By the choice of $c_k$, we have that
\beam\label{28.12.2025.1}
(k-1)(2r+r^2) = \frac{k}2 \left(2r + c_k + r^2 + rc_k\right)\quad\mbox{for all } k\ge 3,
\eeam
which guarantees that the strategies~$N^n$ satisfy $G_2(N^n)=0$, $G_3(N^n)>0$ on $\{Y\le n, Z_2>0\}$ and $\eta_2(N^n)=\eta_3(N^n)$, $G_2(N^n)=G_3(N^n)>0$ on $\{Y\le n, Z_2<0\}$. Hence, by the choice of $r_3$, we obtain  
\beam\label{9.2.2026.1}
V^\alpha(1,N^n)=(1-\alpha)V^0(1,N^n)+\alpha\quad\mbox{on}\ \{Y\le n\}.
\eeam
This means that on $\{Y\le n\}$, strategy~$N^n$ pays taxes prematurely only when the interest rate vanishes afterwards. 
Since $\ln(V^\alpha(1,N^n)-\alpha)$ is bounded from both sides uniformly in $n\in\bbn$, (\ref{8.2.2026.1}) and (\ref{9.2.2026.1}) imply that   
\beao
\E[\ln(V^\alpha(1,N^n)-\alpha)]\to \wt{u}(1)\quad\mbox{as}\ n\to\infty,
\eeao
which yields $u(1)=\wt u(1)$. Together with inequality~(\ref{8.2.2026.1}) and the fact that the set of maximizers of (\ref{eq:OPT2}) is given 
by $\wt{\mathcal{M}}$, this implies that any maximizer of (\ref{eq:OPT1}) must satisfy the following conditions:
\begin{enumerate}
    \item[(i)] $N_{0,1}+N_{1,1}=Y$, $N_{0,2}+N_{1,2}+N_{2,2}=Y/21_{\{Z_2>0\}}$,
    \item[(ii)] $V^\alpha(1,N)=(1-\alpha)V^0(1,N)+\alpha$. 
\end{enumerate}
Let us show that no such $N\in\mathcal{N}$ exists. Assume that $N\in\mathcal{N}$ satisfies (i). W.l.o.g. we can assume that $N_{0,0}\ge 1$, that is, the initial capital is invested in the stock and not in the bank account with the same return. 
Since $N_{0,0}$ is non-random, there exists $k\in\bbn$ with $k>2N_{0,0}$, and one has $P(Y=k, Z_2>0)>0$.
On the set~$\{Y=k, Z_2>0\}$, the loss pool immediately before stocks are sold at time~$2$ is $(k-1)(2r+r^2) - r N_{1,1}$, which coincides with $k/2(1+r)(r+c_k)-r(N_{1,1}-k/2)$ by (\ref{28.12.2025.1}). 
On the other hand, on $\{Y=k, Z_2>0\}$, one has that $N_{1,1}>k/2$, and $k/2$ stocks are sold at time~$2$. Thus, even if one takes the stocks purchased at time~$1$, which have the lower book profit $(1+r)(r+c_k)$ per share, taxes must be paid prematurely. Then, it follows by $r_3>0$ on $\{Z_2>0\}$ that $P((1-\alpha)V^0(1,N)+\alpha>V^\alpha(1,N))>0$. Thus, $N$ does not satisfy (ii), and the supremum in (\ref{eq:OPT1}) is not attained.

Furthermore, for an arbitrary $\wt{N}\in\wt{\mathcal{M}}$, one has $V^\alpha(1,N^n)\to (1-\alpha)V^0(1,\wt{N})+\alpha$ in probability as $n\to\infty$, which shows that $\{V^\alpha(1,N) : N\in\mathcal{N}\} - L^0(\mathcal{F}_T; \bbr_+)$ is not closed in probability. 
\end{example}

On the other hand, for a finite probability space, the set of attainable terminal wealth is always closed---even if NA does not hold.
\begin{proposition}\label{28.11.2025.1.ck}
Let $|\Omega|<\infty$ and $x\in\bbr$. Then, $\{V^\alpha(x,N) : N\in\mathcal{N}\} - L^0(\mathcal{F}_T; \bbr_+)$ is closed (with respect to the supremum norm).
\end{proposition}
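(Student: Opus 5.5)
The plan is to reduce the statement to a finite-dimensional polyhedral fact. Since $|\Omega|<\infty$, every $\mathcal{F}_t$ is generated by a finite partition. An adapted strategy $N\in\mathcal{N}$ is therefore a vector in some $\bbr^m$: one coordinate $N_{i,t,j}(A)$ for each atom $A$ of $\mathcal{F}_t$. The constraints $N_{i,t,j}\ge 0$ and $N_{t,t,j}\ge N_{t,t+1,j}\ge\ldots\ge N_{t,T,j}=0$ are finitely many linear inequalities and equalities. Hence $\mathcal{N}$ is a polyhedral cone $C\subseteq\bbr^m$. Similarly $L^0(\mathcal{F}_T;\bbr)\cong\bbr^{n}$ with $n$ the number of atoms of $\mathcal{F}_T$, and on $\bbr^n$ convergence in probability and the supremum norm agree. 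It therefore suffices to show that
\[
A:=\{y\in\bbr^n : y\le V^\alpha(x,N)\ \text{for some } N\in C\}
\]
is a polyhedron, because polyhedra are closed.

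\textbf{Step 1: piecewise affinity.} Fix $x$. By induction on $t$, each coordinate (atom value) of $\eta_t$, $G_t$ and $\Pi_t$ is a continuous piecewise affine function of $N\in C$, i.e.\ $C$ is covered by finitely many polyhedra on each of which the function is affine.
\begin{itemize}
\item $G_t$ in (\ref{13.11.2025.1.ck}) is a sum of products $\eta_{u-1}r_u$, where $r_u$ is a fixed constant on each atom, plus terms that are linear in $N$. So $G_t$ is continuous piecewise affine whenever $\eta_{u-1}$, $u\le t$, are.
\item $\Pi_t=\alpha\max_{u\le t}G_u$ is a maximum of finitely many such functions, so it stays in the class after refining the polyhedral partition.
\item $\eta_t$ is then given by (\ref{eq:bankalt1}), a combination of these.
\end{itemize}
Consequently each coordinate $N\mapsto V^\alpha(x,N)(\omega)$ is continuous piecewise affine on $C$. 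Since $x$ is fixed here, the positive homogeneity of Lemma~\ref{lemma:concave}(ii) is not needed; the argument works directly with the affine pieces.

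\textbf{Step 2: min-representation.} By Lemma~\ref{lemma:concave}(iv) with $x^{(1)}=x^{(2)}=x$, each coordinate function is also concave on $C$. I will use the standard fact that a concave, continuous, piecewise affine function $f$ on a convex polyhedron equals the minimum of its finitely many affine pieces $\ell_1,\ldots,\ell_K$. Each $\ell_k$ is the affine extension of $f$ from a full-dimensional cell, so $\ell_k\ge f$ on $C$ by concavity (the tangent-plane inequality along segments). Conversely, each point lies in some cell, where $f$ equals one of the $\ell_k$. Lower-dimensional cells can be discarded by continuity. This yields affine maps $\ell_{k,\omega}$ with
\[
V^\alpha(x,N)(\omega)=\min_k \ell_{k,\omega}(N)\quad\text{for } N\in C .
\]

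\textbf{Step 3: projection.} With this representation,
\[
A=\pi\bigl(\{(N,y)\in\bbr^m\times\bbr^n : N\in C,\ y_\omega\le \ell_{k,\omega}(N)\ \forall k,\omega\}\bigr),
\]
where $\pi$ is the projection onto the $y$-coordinates. The set inside is a polyhedron, being defined by finitely many linear inequalities. By the Fourier–Motzkin / Minkowski–Weyl theorem, a linear image of a polyhedron is again a polyhedron. Hence $A$ is closed, and the argument uses neither NA nor any boundedness of the strategies.

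\textbf{Main obstacle.} The step needing most care is Step 1 together with the min-representation in Step 2. One has to argue cleanly that the recursive nonlinearity coming from the $\max$ in $\Pi_t$, fed back through $\eta_{u-1}r_u$ into later $G_t$, stays within the class of continuous piecewise affine functions with finitely many polyhedral cells. Concavity, supplied by Lemma~\ref{lemma:concave}(iv), is exactly what then turns the hypograph into a convex polyhedron rather than a non-convex finite union of polyhedra. A finite union of polyhedra would still be closed, but its projection is closed only because each of the finitely many pieces projects to a polyhedron. So even without the min-representation, Step 3 could be run piece by piece as a fallback.
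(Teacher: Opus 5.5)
Your proof is correct. It differs from the paper's in how the affine pieces are obtained and, more importantly, in how closedness is concluded.

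\textbf{How the affine pieces are obtained.} The paper builds them explicitly. It introduces finitely many artificial linear tax rules $\tau$ (tax $\alpha G_{\tau_t}$, with rebates), each making $N\mapsto\eta^{(\tau)}_T(N)$ affine, and shows $V^\alpha(x,N)=\min_\tau\eta^{(\tau)}_T(N)$. The minimum is attained by the rule that taxes at the first time of the running maximum of $G(N)$. This gives an economically interpretable family, but it needs the rule-by-rule improvement argument for the ``$\le$'' direction. You get the same min-of-affine structure abstractly, from piecewise affinity (induction through the recursion) plus Lemma~\ref{lemma:concave}(iv). As your fallback notes, even concavity is dispensable: the hypograph is a finite union of polyhedra, and each of them projects to a polyhedron.

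\textbf{How closedness is concluded.} You project the single polyhedron $\{(N,y): N\in C,\ y_\omega\le\ell_{k,\omega}(N)\ \forall k,\omega\}$, so one and the same $N$ must dominate all pieces. The paper instead identifies the attainable set with $\bigcap_\tau\bigl(\{\eta^{(\tau)}_T(N): N\in\mathcal{N}\}-L^0(\mathcal{F}_T;\bbr_+)\bigr)$, where $N$ may depend on $\tau$. The min-representation only yields ``$\subseteq$'', and ``$\supseteq$'' can fail.

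Take the deterministic model with $T=2$, $x=1$, $S_t=(1+a)^t$ and $a<r$ close to $r$, so that $(1-\alpha)(1+a)^2+\alpha>(1+(1-\alpha)r)^2$.
\begin{itemize}
\item One checks that the largest LUL terminal wealth is $(1-\alpha)(1+a)^2+\alpha$, obtained by holding one share from $0$ to $2$.
\item For $(\tau_1,\tau_2)=(0,0),(0,2),(1,1)$, the bank account alone gives $(1+r)^2$, $(1-\alpha)(1+r)^2+\alpha$ and $(1+(1-\alpha)r)(1+r)$. All three are strictly larger.
\item For $(\tau_1,\tau_2)=(1,2)$, a levered stock position makes $\eta^{(\tau)}_2$ unbounded.
\end{itemize}
So any constant in $\bigl((1-\alpha)(1+a)^2+\alpha,\,(1-\alpha)(1+r)^2+\alpha\bigr]$ lies in every set of the intersection but is not attainable.

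Your Step~3, applied to the affine maps $\eta^{(\tau)}_T$, repairs that step: one projects $\{(N,y): N\in\mathcal{N},\ y\le\eta^{(\tau)}_T(N)\ \forall\tau\}$ instead. So on this point your route is the sound one.
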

\begin{proof}
We consider a family of artificial linear tax rules with tax rebates for realized losses. For each tax rule, there is a random subset of time points in $\{1,\ldots,T\}$, and outside this random set no taxes are paid. To formalize this, let $(\tau_t)_{t=1,\ldots,T}$ be an adapted $\{0,\ldots,T\}$-valued process with $\tau_1\le \ldots\le \tau_T$, $\tau_t\le t$, and $\tau_{\tau_t}=\tau_t$ for $\tau_t\ge 1$. For a given strategy~$N$, we define the tax process by $\Pi^{(\tau)}_t:=\alpha G_{\tau_t}$, where $G$ is given in (\ref{13.11.2025.1.ck}), and the corresponding self-financing position in the bank account with initial capital~$x$ is given in (\ref{13.11.2025.3.ck}). The latter is denoted by $\eta^{(\tau)}(N)$. We note that---as in the original model---the recursive construction is explicit since $G_t$ does not depend on $\eta_t$. The random variable~$\tau_t$ is interpreted as the last time in $\{0,1,\ldots,t\}$ at which taxes on accumulated realized gains and losses are paid (since $G_0=0$, $\tau_t=0$ means that no taxes are payed up to time~$t$). Of course, it can happen that gains and losses are not taxed at all under such a rule. Let us show that
\beam\label{4.10.2025.1.ck}
V^\alpha(x,N) = \min_{\tau}\eta^{(\tau)}_T(N),
\eeam
where the minimum is taken over the finite set of all tax rules described above. The inequality ``$\ge$'' is obvious since for a fixed strategy~$N$, the LUL tax process coincides with the tax process of the artificial tax rule with $\tau^\star_t:=\min\{s\in\{0,\ldots,t\} : G_s(N)=\max_{u=0,\ldots,t}G_u(N)\}$. 
To show ``$\le$'', we start with an arbitrary tax rule~$\tau$ and define the tax rule
\beao
\tau^1_t:=\left\{
\begin{array}{ll} 
\tau_t & \textrm{if}\ \tau_t\ge 2\\
1 & \textrm{if}\ \tau_t\le 1\ \mbox{and}\ G_1>0\\
0 & \textrm{if}\ \tau_t\le 1\ \mbox{and}\ G_1\le 0\\
\end{array},\quad t\ge 1.\right.
\eeao
By $\{\tau^1_1=1\}=\{G_1>0\}$ and by $r\ge 0$, it follows that $\eta^{(\tau^1)}_T(N) \le \eta^{(\tau)}_T(N)$. Analogously, we construct the tax rule
\beao
\tau^2_t:=\left\{
\begin{array}{ll} 
\tau^1_t & \textrm{if}\ \tau^1_t\ge 3\ \mbox{or}\ t=1\\
2 & \textrm{if}\ \tau^1_t\le 2\ \mbox{and}\ t \ge 2\ \mbox{and}\ G_2>0\vee G_1\\
\tau^1_1 & \textrm{if}\ \tau^1_t\le 2\ \mbox{and}\ t \ge 2\ \mbox{and}\ G_2\le 0\vee G_1\\
\end{array},\quad t\ge 1,\right.
\eeao
where the process~$G$ is that associated with the tax rule~$\tau^1$.
After $T$ steps we arrive at $\tau^\star$, and $V^\alpha(x,N) = \eta^{(\tau^\star)}_T(N)\le \eta^{(\tau)}_T(N)$ is proven for 
all $\tau$.

Now, we identify a strategy~$N\in\mathcal{N}$ with the vector consisting of the nonnegative values of $N_{t,t,j}$, $t=0,\ldots,T-1$, $j=1,\ldots,d$, and $N_{i,t-1,j} - N_{i,t,j}$, $i=0,\ldots,T-1$, $t=i+1,\ldots,T$,  $j=1,\ldots,d$, on every atom of the $\sigma$-algebra~$\mathcal{F}_t$ (see, e.g., Kallsen and Muhle-Karbe~\cite[proof of Theorem~3.2]{kallsen.muhle-karbe.2011} for such an identification). The inequality $N_{i,t-1,j} - N_{i,t,j}\le N_{i,i,j} -\sum_{u=i+1}^{t-1}(N_{i,u-1,j}-N_{i,u,j})$ for $t\ge i+1$ has to hold for any decreasing sequence of atoms of $\mathcal{F}_i,\ldots,\mathcal{F}_t$ (with equality for $t=T$). With this identification, $\mathcal{N}$ is a polyhedral convex cone in $\bbr^k$ with some suitable $k\in\bbn$, and    
the mapping $N\mapsto \eta^{(\tau)}_T(N) - \eta^{(\tau)}_T(0)$ from $\mathcal{N}\subseteq\bbr^k$ to $\bbr^{|\Omega|}$ is linear for $\tau$ fixed. The latter follows from a decomposition of $\eta^{(\tau)}_T(N)$ analogously to K\"uhn~\cite[Equations~(3.2)/(3.3)]{kuehn.2019} that is for the linear tax rule with $\tau_t=t$. We leave it to the reader to write this down in detail.
Since the image of a polyhedral convex cone in $\bbr^k$ under a linear mapping is again a polyhedral convex cone (see, e.g., Rockafellar~\cite[Theorem~19.3]{rockafellar.1970}), the set~$\{ \eta^{(\tau)}_T(N) : N\in\mathcal{N}\} - \{\eta^{(\tau)}_T(0)\} - L^0(\mathcal{F}_T; \bbr_+)$ is a polyhedral convex cone and thus closed for every $\tau$. By (\ref{4.10.2025.1.ck}), one has
that $\{V^\alpha(x,N) : N\in\mathcal{N}\} - L^0(\mathcal{F}_T; \bbr_+)=\cap_\tau \left(\{ \eta^{(\tau)}_T(N) : N\in\mathcal{N}\} - L^0(\mathcal{F}_T; \bbr_+)\right)$, and the intersection is closed as well. 
\end{proof}

We introduce a slightly stronger no-arbitrage condition under which the set of attainable terminal wealth is closed in probability~(Theorem~\ref{theo:main closedness}). For this, we need the following definition.
\begin{definition}[Reaction function]
Let $\wt{\Omega}:=\Omega\times\bbr^{T(T+1)/2 d}_+$ and 
\beao
\wt{\mathcal{F}}_t:=\mathcal{F}_t\otimes\mathcal{B}(\bbr^{(t+1)(t+2)/2 d}_+)\otimes \{\emptyset,\bbr^{T(T+1)/2 d - (t+1)(t+2)/2 d}_+\},\quad t=0,\ldots,T-1. 
\eeao
A {\em reaction function}~$R=(R_{i,t,j})_{i=0,\ldots,T-1, t=i,\ldots,T-1, j=1,\ldots,d}$ is a mapping~$R:\wt{\Omega}\to \bbr^{T(T+1)/2 d}_+$ such that $R_{i,t,j}$ is $\wt{\mathcal{F}}_t$-measurable and $R_{i,t,j}\ge R_{i,t+1,j}$ for $i=0,\ldots,T-1, t=i,\ldots,T-1, j=1,\ldots,d$. 
Given a strategy~$N\in\mathcal{N}$, it delivers the strategy~$N^R\in\mathcal{N}$ with
\beao
N^R_{i,t,j}(\omega)=R_{i,t,j}(\omega,N_{0,0}(\omega),N_{0,1}(\omega),N_{1,1},N_{0,2}(\omega),\ldots,N_{t,t}(\omega),y),\quad i\le t\le T-1,
\eeao
and $N^R_{i,T,j}(\omega)=0$, where $N_{0,0}(\omega)$, for example, denotes the vector~$(N_{0,0,1}(\omega),\ldots,N_{0,0,d}(\omega))$, and the RHS does not depend on $y\in\bbr^{T(T+1)/2 d - (t+1)(t+2)/2 d}_+$ since $R_{i,t,j}$ is $\wt{\mathcal{F}}_t$-measurable.
\end{definition}
Reaction functions were introduced by Witsenhausen~\cite{witsenhausen.1971} and are nowadays a standard tool in game theory.
Here, the interpretation is that $N^R_{i,t,j}(\omega)$ may depend on the information about $\omega$ at time~$t$ and on the actions of the strategy~$N$ up to time~$t$. But, it cannot be contingent on actions of $N$ after $t$ or on actions on paths that are not realized. By a slight abuse of notation, we denote the $\bbr_+$-valued random variable~$N^R_{i,t,j}$ by $R_{i,t,j}(N)$.

We recall that a set of real-valued random variables~$\mathcal{M}$ is bounded in $L^0$ iff $\sup_{X\in \mathcal{M}}P(|X|>a)\to 0$ as $a\to\infty$.
\begin{definition} 
The market model satisfies the {\em no unbounded non-substitutable investment with bounded risk~(NUIBR)} condition iff for every initial capital~$x\in\bbr$,
there exists a reaction function~$R$ with $V^\alpha(x,R(N))\ge V^\alpha(x,N)$ a.s. for all $N\in\mathcal{N}$ such that for all $K\in\bbr_+$
\beam\label{8.11.2025.1.ck}
{\rm conv}\left\{\max_{i=0,\ldots,{T-1},\ j=1,\ldots,d}R_{i,i,j}(N)\ :\ N\in\mathcal{N}\ \mbox{with}\ V^\alpha(x,N)\ge -K\ \mbox{a.s.}\right\}\ \mbox{is bounded in}\ L^0.
\eeam
\end{definition}
A concise interpretation of the condition is that after eliminating trades that are not strictly required from the strategies, the remaining trading volumes cannot explode if the  worst-case risk of the strategies is bounded.
\begin{remark}\label{29.12.2025.1}
For a detailed discussion, we first observe that NUIBR implies NA: If $N\in\mathcal{N}$ is an arbitrage, then there exists an $\eps>0$ such that $P(V^\alpha(0,N)\ge \eps)\ge\eps$. Since $V^\alpha(0,\cdot)$ is positive homogenous~(Lemma~\ref{lemma:concave}(ii)) and the asset prices are finite random variables, (\ref{8.11.2025.1.ck}) cannot hold for $x=0$ and $K=0$---regardless how the reaction function is chosen.

In the special case of a frictionless market, that is, $\alpha=0$, we have that NUIBR is equivalent to NA. Namely, going back to Schachermayer~\cite{schachermayer.1992}, for each period, the $\bbr^d$ can be decomposed into a set of null-strategies and an orthogonal complement. Redundant securities lead to a non-trivial set of null-strategies. This decomposition is generalized to short-selling constraints in the current paper (see (\ref{def:reversible})/(\ref{29.12.2025.2})). The orthogonal complement is replaced by a set of ``purely non-reversible'' trades. Then, one can choose the reaction function~$R$ such that $R(N)$ realizes only the purely non-reversible part of the trades of the original strategy~$N$, and (\ref{8.11.2025.1.ck}) holds under NA (see Lemma~\ref{lemma:L0-bound nonreversible}(iii)). 
 
By contrast, with taxes, there can be trades that are strictly required to attain a given terminal wealth, but do not directly trigger a risk---even if NA holds, see Example~\ref{example:Nonclosedness}. The condition~NUIBR does not rule out these trades but limits them.  

If (\ref{8.11.2025.1.ck}) were only assumed for $x=0$ and $K=0$, then it would already follow from NA (by choosing $R=0$).
The requirement that (\ref{8.11.2025.1.ck}) holds for every $K\in\bbr_+$ is in the spirit of the no unbounded profit with bounded risk~(NUPBR) condition since it is also based on worst‑case losses. This means, the purely algebraic NA condition is strengthened, with its topological component being the weakest possible.
\end{remark}

\begin{theorem}\label{theo:main closedness} If the market satisfies the NUIBR condition, the stock prices are nonnegative, and the random variables~$r_t$, $t=1,\ldots,T$, are bounded, then $\mathcal{V}^\alpha(x):=\{V^\alpha(x,N)\ :\ N\in \mathcal{N}\}-L^0(\F_T; \bbr_+)$ is closed with respect to the convergence in probability.
\end{theorem}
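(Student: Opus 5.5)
Take a sequence $X^n=V^\alpha(x,N^n)-Y^n\in\mathcal{V}^\alpha(x)$, $Y^n\ge 0$, with $X^n\to X$ in probability. The goal is some $N\in\mathcal{N}$ with $V^\alpha(x,N)\ge X$. The plan follows the usual Kabanov--Stricker/Schachermayer route: get $L^0$-boundedness of the strategies, extract convex combinations that converge a.s. (Komlós-type lemma in $L^0$, i.e.\ Delbaen--Schachermayer's lemma on convex combinations), and pass to the limit using concavity and upper semicontinuity of $N\mapsto V^\alpha(x,N)$.

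\textbf{Step 1 (reduction via the reaction function).} Replace $N^n$ by $R(N^n)$, where $R$ is the reaction function from NUIBR. Since $V^\alpha(x,R(N^n))\ge V^\alpha(x,N^n)$, we may assume $N^n=R(N^n)$ and still have $X^n\in\mathcal{V}^\alpha(x)$. To use (\ref{8.11.2025.1.ck}) we need a uniform bound $V^\alpha(x,N^n)\ge -K$, which we do not have; convergence in probability only gives boundedness in $L^0$. So localize: for $K$ large, the set $A_K=\{\inf_n X^n\ge -K\}$ has probability close to $1$. On $A_K^c$, modify $N^n$ into a strategy that stops trading, e.g.\ $\tilde N^n:=N^n1_{\{\text{no } X^m<-K \text{ so far}\}}$. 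The difficulty is that $X^n$ is only $\F_T$-measurable, so ``liquidate when the wealth threatens to fall below $-K$'' must be formulated through the terminal wealth of the taxed strategy. Instead, I would pass to convex combinations first (Step~2) and then use that $V^\alpha(x,\cdot)$ is bounded below along the tail of a probability-convergent sequence. Concretely, the argument reduces to showing that the maxima $\max_{i,j}N^n_{i,i,j}$ stay bounded in $L^0$. Suppose they do not. Normalize by $\beta_n:=1\vee\max_{i,j}N^n_{i,i,j}$; then $N^n/\beta_n$ has terminal wealth $\ge(X^n-x)/\beta_n + x/\beta_n$, which is nonnegative up to vanishing terms. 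A limit of these normalized strategies (obtained via Step~2) would be an $\F_{T}$-nontrivial strategy with nonnegative terminal wealth from zero capital. By NA (implied by NUIBR, Remark~\ref{29.12.2025.1}) its wealth is $0$, but it has nonzero trading volume. The point of NUIBR is precisely to forbid this: applying (\ref{8.11.2025.1.ck}) with $K$ fixed to the strategies $N^n/\beta_n$, scaled appropriately using positive homogeneity (Lemma~\ref{lemma:concave}(ii)) and concavity (Lemma~\ref{lemma:concave}(iv)) with the zero-capital strategy, gives boundedness of the convex hull of the volumes, contradicting explosion. I expect this step to be the main obstacle: reconciling the a.s.\ lower bound $-K$ required in (\ref{8.11.2025.1.ck}) with mere $L^0$-convergence of $X^n$, and handling the fact that $R$ is nonlinear, so $R(N^n/\beta_n)\neq R(N^n)/\beta_n$. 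The plan is to work directly with the bound on conv$\{\max R_{i,i,j}(N)\}$ and exploit that convex combinations of the $N^n$ remain admissible with wealth at least the convex combination of the wealths, by concavity.

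\textbf{Step 2 (limit).} Once $\{\max_{i,j}N^n_{i,i,j}\}$ (and its convex hull) is bounded in $L^0$, apply the $L^0$ Komlós lemma time step by time step, as in Kabanov--Stricker. Choose convex combinations $\hat N^n\in{\rm conv}(N^n,N^{n+1},\dots)$ converging a.s. Since all $N_{i,t}\le N_{i,i}$, these are bounded too. The key point is that adaptedness and monotonicity $N_{i,t}\ge N_{i,t+1}$ survive the limit. By concavity, $V^\alpha(x,\hat N^n)\ge$ the corresponding convex combinations of $V^\alpha(x,N^n)$, and these still converge to $X$ in probability.

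\textbf{Step 3 (continuity).} Show $V^\alpha(x,\hat N^n)\to V^\alpha(x,N)$ a.s. This holds because $G_t$, $\Pi_t=\alpha\max_{u\le t}G_u$ and $\eta_t$ are continuous functions of finitely many positions, prices and bounded rates, built by the recursion (\ref{13.11.2025.1.ck})--(\ref{13.11.2025.3.ck}). Boundedness of $r$ and nonnegativity of prices are used to control the recursion (the telescoping of $\eta$ in (\ref{eq:bankalt1})). Hence $V^\alpha(x,N)\ge X$, i.e.\ $X\in\mathcal{V}^\alpha(x)$.
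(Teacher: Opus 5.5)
\textbf{There is a genuine gap.} Your Steps~2 and~3 match the end of the paper's argument: Koml\'os applied to the components, concavity of $V^\alpha(x,\cdot)$, and pathwise continuity of the recursion. The core of the proof, however, is the $L^0$-boundedness of the convex hull of the volumes in Step~1, and this is not established. The route you sketch would not establish it.

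\textbf{Why the normalization argument fails.} NUIBR only controls $\max_{i,j}R_{i,i,j}(N)$ for strategies with $V^\alpha(x,N)\ge -K$ \emph{a.s.}
\begin{itemize}
\item NUIBR does not forbid nonzero-volume strategies with zero terminal wealth. Example~\ref{example:Nonclosedness} is built from exactly such trades, and even for $\alpha=0$ redundant stocks produce them. So a limit of $N^n/\beta_n$ with zero wealth and nonzero volume contradicts neither NA nor NUIBR.
\item $R$ is neither linear nor homogeneous, so a bound on $R$ cannot be transferred to $R(N^n)/\beta_n$ or to its limit.
\item None of $N^n$, $N^n/\beta_n$, or their convex combinations has an a.s.\ lower bound on terminal wealth, so (\ref{8.11.2025.1.ck}) never becomes applicable along your route.
\item As you note yourself, $\{\inf_n X^n\ge -K\}$ is only $\F_T$-measurable, so it cannot be used to stop trading.
\end{itemize}

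\textbf{The missing idea: localization by a predictable loss bound.} The paper proceeds as follows.
\begin{enumerate}
\item Via wash sales (Lemma~\ref{lemma:washsales}), pass to strategies that realize losses.
\item For such strategies, Lemma~\ref{lemma:risk} gives $V^\alpha(x,N)\ge -L_T(N)$. Here $L_t(N)$ is built from the purely nonreversible parts $q_s(\sum_i N_{i,s})$, $s<t$, and is therefore $\F_{t-1}$-measurable. This is where $S\ge 0$ and $r\le \ov r$ are used, not in the continuity step: $V^\alpha(x,\cdot)$ is pathwise continuous regardless.
\item From $V^0\ge V^\alpha$ and the frictionless Lemma~\ref{lemma:L0-bound nonreversible}(iii), one gets $L_t:=\sup_n L_t(N^n)<\infty$ a.s.
\item Stop each $N^n$ at the stopping time $\tau_k=\inf\{t: L_{t+1}>M_k\}\wedge T$. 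The resulting strategies $N^{n,k}$ satisfy $V^\alpha(x,N^{n,k})\ge -M_k$ a.s. They coincide with $N^n$ on $\{L_T\le M_k\}$, where $P(L_T>M_k)\le k^{-2}$.
\item A diagonal sequence $N^{n,n}$ together with Borel--Cantelli preserves the limit.
\item Only then is $R$ applied, to the \emph{stopped} strategies rather than to the original $N^n$ as in your Step~1.
\end{enumerate}

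\textbf{What is used about $R$.} It is not homogeneity but its pathwise, non-anticipative structure. For $l\ge k$ one has $R(N^{l,l})=R(N^{l,k})$ on $\{L_T\le M_k\}$. Hence any convex combination of $\max_{i,j}R_{i,i,j}(N^{l,l})$ agrees, outside a set of probability at most $k^{-2}$, with an element of the convex hull in (\ref{8.11.2025.1.ck}) with $K=M_k$. This is how the a.s.\ requirement $V^\alpha\ge -K$ in NUIBR is reconciled with mere convergence in probability. Without this stopping construction, Step~1 remains open.
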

The proof of the theorem requires some preparation. As already announced in Remark~\ref{29.12.2025.1}, we need a decomposition of a strategy into a reversible part and a ``purely nonreversible part''.
It turns out that it is sufficient to do this decomposition period by period and on a pre-tax basis. Since we have short-selling constraints, we cannot apply orthogonal projections but we can work with the decomposition in K\"uhn and Molitor~\cite{kuehn.molitor.2019} developed for transaction costs models. 

For any $t\in\{0,\ldots,T-1\}$, we define the (convex) cone of \textit{reversible strategies} at time $t$ by
\beam\label{def:reversible}
\mathcal{R}_t:=\{\beta\in L^0(\F_t; \bbr_+^d)\ :\ \langle\beta,S_{t+1}-(1+r)S_t\rangle=0 \text{ a.s.}\}.
\eeam
\begin{lemma}[Lemma 3.3 and Lemma 3.4 in K\"uhn and Molitor~\cite{kuehn.molitor.2019}]\label{30.12.2025.1} For every $\beta\in L^0(\F_t;\bbr^{d}_+)$, there exists a unique (up to null sets) decomposition $\beta = p_t(\beta)+q_t(\beta)$ such that $p_t(\beta)\in\mathcal{R}_t$ and $q_t(\beta)\in L^0(\F_t;\bbr^{d}_+)$ satisfies 
$|q_t(\beta)|\le |\beta'|$ for all $\beta'\in L^0(\F_t;\bbr^{d}_+)$ with $\beta-\beta'\in\mathcal{R}_t$. The decomposition is positive homogenous in the sense that $q_t(\mu\beta) = \mu q_t(\beta)$ for all $\mu\in L^0(\F_t;\bbr_+)$ and $\beta\in L^0(\F_t;\bbr^{d}_+)$, and it
is continuous in the sense that for all $(\beta^n)_{n\in\bbn}\subseteq L^0(\F_t;\bbr^{d}_+)$ and all $\beta\in L^0(\F_t;\bbr^{d}_+)$
\beam\label{14.1.2026.01}
\beta^n\to \beta\ \mbox{a.s. as}\ n\to\infty\quad\implies\quad q_t(\beta^n)\to q_t(\beta)\ \mbox{a.s. as}\ n\to\infty.
\eeam
\end{lemma}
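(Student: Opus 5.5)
The plan is to reduce the statement to a deterministic problem for each fixed $\omega$, by describing $\mathcal{R}_t$ as the set of measurable selectors of a random closed convex cone. I read $(1+r)S_t$ in (\ref{def:reversible}) as $(1+r_{t+1})S_t$ and set $Y:=S_{t+1}-(1+r_{t+1})S_t$.

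\textbf{Step 1: describe $\mathcal{R}_t$ pointwise.} Fix a jointly measurable version of
\[
\phi(\omega,x):=\E[\,|\langle x,Y\rangle|\wedge 1\mid\F_t](\omega),
\]
for instance via a regular conditional distribution of $Y$ given $\F_t$. Then define
\[
C(\omega):=\{x\in\bbr^d_+ : \phi(\omega,x)=0\}.
\]
For a.e.\ $\omega$, $C(\omega)$ is the intersection of the subspace $\{x:\langle x,y\rangle=0\ \text{for all } y \text{ in the support of the conditional law of } Y\}$ with $\bbr^d_+$. Hence it is a polyhedral closed convex cone, and its graph is measurable. For $\beta\in L^0(\F_t;\bbr^d_+)$, $\F_t$-measurability of $\beta$ gives
\[
\E[\,|\langle\beta,Y\rangle|\wedge1\mid\F_t]=\phi(\cdot,\beta(\cdot)) \quad \text{a.s.}
\]
Therefore $\beta\in\mathcal{R}_t$ if and only if $\beta(\omega)\in C(\omega)$ for a.e.\ $\omega$.

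\textbf{Step 2: define $q_t$ and prove uniqueness and homogeneity.} For each $\omega$, let $F(\omega,b):=\{y\in\bbr^d_+ : b-y\in C(\omega)\}$. This set is closed, convex, and nonempty since it contains $b$. Define $q_t(\beta)(\omega)$ as the unique element of minimal Euclidean norm in $F(\omega,\beta(\omega))$; uniqueness holds by strict convexity of $|\cdot|^2$. Then set $p_t(\beta):=\beta-q_t(\beta)$.

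Measurability of $q_t(\beta)$ follows from standard measurable-selection results: it is the projection of $0$ onto a closed random set with measurable graph. By Step 1, $p_t(\beta)\in\mathcal{R}_t$.

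For minimality, let $\beta'$ satisfy $\beta-\beta'\in\mathcal{R}_t$. Then $\beta'(\omega)\in F(\omega,\beta(\omega))$ a.s., so $|q_t(\beta)|\le|\beta'|$. For uniqueness, suppose $q'$ is another decomposition component with the same minimality property. Then $q'$ is feasible and $|q'|\le|q_t(\beta)|$ a.s., so $q'=q_t(\beta)$ a.s. by uniqueness of the pointwise minimizer.

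For homogeneity, on $\{\mu>0\}$ the cone property gives $F(\omega,\mu b)=\mu F(\omega,b)$, hence $q_t(\mu\beta)=\mu q_t(\beta)$ there. On $\{\mu=0\}$ we have $0\in F(\omega,0)$, so $q_t(0)=0$.

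\textbf{Step 3: continuity (\ref{14.1.2026.01}), the main obstacle.} Fix $\omega$ outside a null set, so that $C:=C(\omega)$ is a fixed polyhedral cone and $\beta^n(\omega)\to\beta(\omega)$. Since $C=L\cap\bbr^d_+$ for a subspace $L$, the set $F(b)$ is the solution set of a finite linear system in $y$:
\[
y\ge0,\qquad b-y\ge0,\qquad A(b-y)=0,
\]
whose right-hand side depends linearly on $b$. The system is feasible for every $b\in\bbr^d_+$, so Hoffman's error bound yields a constant $c$ (depending only on $C$) such that the Hausdorff distance between $F(b)$ and $F(b')$ is at most $c|b-b'|$.

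This Lipschitz bound gives two consequences. First,
\[
|q(\beta^n)|\le|q(\beta)|+c|\beta^n-\beta|,
\]
so the sequence $(q(\beta^n))$ is bounded. Second, every limit point of $(q(\beta^n))$ lies in $F(\beta)$ by closedness of the graph, and has norm at most $|q(\beta)|$. By uniqueness of the minimal-norm point, every limit point equals $q(\beta)$, hence $q(\beta^n)(\omega)\to q(\beta)(\omega)$.

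The delicate points are the uniform Hoffman constant, which is available because $C(\omega)$ is polyhedral, and, in Step 1, the verification that $C(\omega)$ exactly captures reversibility.
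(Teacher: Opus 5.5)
Your proof is correct, but it takes a different route from the paper. The paper gives no argument of its own. It observes that $\mathcal{R}_t$ is closed in probability, reinterprets $\beta$ as ``orders'' in a transaction costs model, and imports Lemmas~3.3 and~3.4 of K\"uhn and Molitor, which work at the level of $L^0$ with the closedness of $\mathcal{R}_t$ as the key input.

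You instead use a regular conditional distribution to identify $\mathcal{R}_t$ with the selectors of the random cone $C(\omega)=L(\omega)\cap\bbr^d_+$, where $L(\omega)$ is the annihilator of the conditional support of $S_{t+1}-(1+r_{t+1})S_t$. Everything then becomes finite-dimensional convex geometry: $q_t(\beta)(\omega)$ is the minimal-norm point of $(\beta(\omega)-C(\omega))\cap\bbr^d_+$.

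Your route gives three things:
\begin{itemize}
\item a self-contained proof;
\item an explicit description of the reversible cone;
\item a single version $Q(\omega,b)$ of $q_t$ that is continuous in $b$ for every $\omega$.
\end{itemize}
With that version, the continuity property follows pathwise. Locality such as $q_t(1_M\beta^1+1_{M^c}\beta^2)=1_Mq_t(\beta^1)+1_{M^c}q_t(\beta^2)$, which is used in the proof of Lemma~\ref{lemma:CompactSetA}, is immediate. The price is the conditional-distribution and measurable-selection bookkeeping that the $L^0$ route avoids.

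Two small remarks.

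\textbf{Measurability.} Graph measurability alone does not give measurability of the projection when $\F_t$ is not complete. Here this causes no problem: $C$ is the zero set of the Carath\'eodory function $\phi$, so $F(\cdot,\beta(\cdot))$ is a measurable closed random set in the paper's sense. Its minimal-norm selector, the argmin of a normal integrand, is therefore $\F_t$-measurable. Alternatively, work with the completion and take a version.

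\textbf{The ``delicate'' step.} Neither polyhedrality nor Hoffman's bound is needed. Set $c:=b-q(b)\in C(\omega)$ and $\lambda_n:=\min\{1,\min_{j:\,c_j>0}b^n_j/c_j\}$. Then $\lambda_n\to 1$ because $c\le b$. The point $b^n-\lambda_nc$ lies in $F(b^n)$ and tends to $q(b)$. Hence $\limsup_n|q(b^n)|\le|q(b)|$, and this holds for any closed convex cone $C\subseteq\bbr^d_+$. The rest of your argument then finishes exactly as you wrote: boundedness via $0\le q(b^n)\le b^n$, the closed graph, and uniqueness of the minimal-norm point.
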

\begin{proof}
By interpreting $\beta$ as ``orders'' in the transaction costs model and observing that $\mathcal{R}_t$ is closed in probability, we can directly apply the lemmas. We note that in the proof of \cite[Lemma~3.4]{kuehn.molitor.2019}, Equation~(3.12) is neither used nor required for the subsequent arguments.
\end{proof}
Now, for each $t\in \{0,\ldots,T-1\}$, we can define the set of \textit{purely nonreversible strategies}
\beam\label{29.12.2025.2}
\mathcal{P}_t:=\{\beta\in L^0(\F_{t};\bbr_+^d)\ :\ q_{t}(\beta)=\beta\},
\eeam
which is closed in probability by (\ref{14.1.2026.01}).

For the convenience of the reader, we recall two key lemmas that are applied in the following. 
\begin{lemma}[Komlos' lemma, Lemma A1.1 of \cite{delbaen1994general}]\label{lemma:komlos}
     Let $(f_n)_{n\in \mathbb{N}} \subseteq L^0(\F_t;\bbr_+)$ be a sequence such that the set~${\rm conv}\{f_n\ :\ n\in\bbn\}$ is bounded in $L^0$. Then, there exist a sequence~$(g_n)_{n\in \mathbb{N}}$ with $g_n\in {\rm conv}~\{f_k\ :\ k\ge n\}$ and $g\in L^0(\F_t; \bbr_+)$
such that $g_n\to g$ a.s. as $n\to \infty$.
\end{lemma}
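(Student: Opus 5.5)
The plan is to follow the classical Delbaen--Schachermayer route. We replace the unbounded variables by the bounded, strictly convex transform $x\mapsto e^{-x}$ on $\bbr_+$. We then extract an $L^2$-Cauchy sequence of forward convex combinations by a minimization argument. Finally, we use the $L^0$-boundedness of ${\rm conv}\{f_n : n\in\bbn\}$ only at the end, to show that the limit is finite.

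First, for $n\in\bbn$ let
\beao
B_n:=\inf\left\{\E\left[e^{-h}\right]\ :\ h\in{\rm conv}\{f_k\ :\ k\ge n\}\right\}\in[0,1].
\eeao
Since the sets ${\rm conv}\{f_k : k\ge n\}$ decrease in $n$, the sequence $(B_n)_{n\in\bbn}$ is nondecreasing and bounded by $1$. Hence it converges to some $A\le 1$. For each $n$ choose $g_n\in{\rm conv}\{f_k : k\ge n\}$ with $\E[e^{-g_n}]\le B_n+1/n$; each $g_n$ is $\F_t$-measurable and $\bbr_+$-valued.

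Second, for $m,n\ge N$ the midpoint $(g_n+g_m)/2$ lies in ${\rm conv}\{f_k : k\ge N\}$, so $\E[e^{-(g_n+g_m)/2}]\ge B_N$. Expanding the square gives
\beao
\E\left[\left(e^{-g_n/2}-e^{-g_m/2}\right)^2\right]=\E\left[e^{-g_n}\right]+\E\left[e^{-g_m}\right]-2\E\left[e^{-(g_n+g_m)/2}\right]\le B_n+B_m+\frac1n+\frac1m-2B_N.
\eeao
Since $B_n,B_m\le A$, the right-hand side is at most $2(A-B_N)+2/N$, which tends to $0$ as $N\to\infty$. Thus $(e^{-g_n/2})_{n\in\bbn}$ is Cauchy in $L^2$ and converges in $L^2$ to some $\F_t$-measurable $Y$ with values in $[0,1]$.

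Third, pass to a subsequence $(n_j)_{j\in\bbn}$ along which $e^{-g_{n_j}/2}\to Y$ a.s. Since $n_j\ge j$, we have $g_{n_j}\in{\rm conv}\{f_k : k\ge j\}$, so relabeling $g_j:=g_{n_j}$ keeps the required form. Setting $g:=-2\ln Y\in[0,\infty]$, we obtain $g_j\to g$ a.s.

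The only point that uses the hypothesis, and the one I regard as the key step, is showing $P(g=\infty)=0$. On the event $\{g=\infty\}$ we have $g_j\to\infty$ a.s., so for every $a>0$,
\beao
P(g=\infty)\le P\left(\liminf_j\{g_j>a\}\right)\le\liminf_j P(g_j>a)\le\sup_{h\in{\rm conv}\{f_n\ :\ n\in\bbn\}}P(h>a),
\eeao
where the middle inequality is Fatou's lemma for sets. By $L^0$-boundedness the right-hand side tends to $0$ as $a\to\infty$, so $g\in L^0(\F_t;\bbr_+)$ after modification on a null set, which completes the argument.
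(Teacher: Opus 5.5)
Your argument is correct and complete: the midpoint $(g_n+g_m)/2$ lies in ${\rm conv}\{f_k : k\ge N\}$, and the identity $\E[(e^{-g_n/2}-e^{-g_m/2})^2]=\E[e^{-g_n}]+\E[e^{-g_m}]-2\E[e^{-(g_n+g_m)/2}]$ then yields the $L^2$-Cauchy property, while Fatou's lemma for sets together with $L^0$-boundedness gives $P(g=\infty)=0$. The paper does not prove this lemma and only cites Delbaen--Schachermayer, whose proof uses the same idea of optimizing a bounded strictly convex functional over the decreasing forward convex hulls, so your proposal essentially reproduces the cited argument.
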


\begin{lemma}[Lemma A.2 of \cite{schachermayer2004fundamental}]\label{Lemma: MeasurableSubsequence}
Let $t\in\{0,\ldots,T\}$. For a sequence~$(f_n)_{n\in\mathbb{N}}\subseteq 
L^0(\F_t;\bbr^{d}_+)$, there is a random subsequence~$(\tau_k)_{k\in\mathbb{N}}$, i.e., 
a strictly increasing sequence of $\mathbb{N}$-valued $\F_t$-measurable random variables such that the sequence of random variables~$(g_k)_{k\in\mathbb{N}}$ given by $g_k(\omega):=f_{\tau_k(\omega)}(\omega)$, $k\in\mathbb{N}$, converges a.s. in the 
one-point-compactification~$\bbr^{d}_+\cup\{\infty\}$ to a random variable $f\in L^0(\F_t;\bbr^{d}_+\cup\{\infty\})$. In fact, we may find the subsequence such that
				\begin{align*}
					| f|=\limsup_{n\to\infty}|f_n|\quad \text{a.s.},
				\end{align*}
				where $|\infty|=\infty$.
\end{lemma}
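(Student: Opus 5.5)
The plan is to build the random subsequence in two stages. First I make the norms $|f_{\tau_k}|$ converge to $L:=\limsup_{n\to\infty}|f_n|$. Then, on $\{L<\infty\}$, I refine the subsequence coordinate by coordinate so that the vectors converge. Throughout, every random index is defined as the first integer satisfying a countable family of $\F_t$-measurable conditions, so it is $\F_t$-measurable. Also, a composition of random subsequences, $\sigma_k(\omega):=\tau_{\rho_k(\omega)}(\omega)$, is again a strictly increasing sequence of $\F_t$-measurable $\bbn$-valued random variables. This holds because $\{\sigma_k=n\}=\bigcup_m\{\rho_k=m,\tau_m=n\}$.

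\emph{Step 1 (norms).} The random variable $L$ is $\F_t$-measurable with values in $[0,\infty]$. Put $\tau_0:=0$ and define recursively
\beao
\tau_k(\omega):=\min\{n>\tau_{k-1}(\omega)\ :\ ||f_n(\omega)|-L(\omega)|<1/k\}\quad \mbox{on}\ \{L<\infty\},
\eeao
and $\tau_k(\omega):=\min\{n>\tau_{k-1}(\omega) : |f_n(\omega)|>k\}$ on $\{L=\infty\}$. By definition of the $\limsup$, there are infinitely many $n$ with $|f_n|>L-1/k$. Eventually all $n$ satisfy $|f_n|<L+1/k$. Hence the sets over which the minima are taken are nonempty, and every $\tau_k$ is finite. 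Measurability follows by induction, since
\beao
\{\tau_k=n\}=\bigcup_{m<n}\{\tau_{k-1}=m\}\cap\bigcap_{m<l<n}\{\mbox{condition fails at } l\}\cap\{\mbox{condition holds at } n\}.
\eeao
With this choice, $|f_{\tau_k}|\to L$ a.s. On $\{L=\infty\}$, this already means that $f_{\tau_k}\to\infty$ in $\bbr^d_+\cup\{\infty\}$.

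\emph{Step 2 (coordinates).} On $\{L<\infty\}$, the sequence $h_k:=f_{\tau_k}$ is eventually bounded. I first apply the same construction to the real sequence $h^1_k$ (the first coordinate) with target $\limsup_k h^1_k$. This target is finite on $\{L<\infty\}$. The construction yields a random subsequence $\rho^1$ with $h^1_{\rho^1_k}\to\limsup_k h^1_k$. Next I apply the construction to the second coordinate along $\rho^1$, and continue in this way through coordinate $d$. Each further refinement preserves the convergence already obtained, because a subsequence of a convergent sequence converges to the same limit. After $d$ steps, the composed random subsequence $\sigma$ makes $f_{\sigma_k}$ converge in $\bbr^d_+$ on $\{L<\infty\}$.

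On $\{L=\infty\}$, I simply keep $\tau$. Formally, I set $\sigma:=\tau$ there, which is allowed because $\{L=\infty\}\in\F_t$. The resulting sequence $g_k:=f_{\sigma_k}$ converges a.s. in the one-point compactification. Its limit $f$ is $\F_t$-measurable as a pointwise limit of measurable functions.

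Finally, $|f|=L$. On $\{L<\infty\}$, this follows from continuity of the norm, since $(\sigma_k)$ is a subsequence of $(\tau_k)$ and therefore $|g_k|\to L$. On $\{L=\infty\}$, it follows from $|\infty|=\infty$.

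The only delicate point is the bookkeeping of measurability under the recursive definitions and compositions, together with the case split on $\{L=\infty\}$. I expect both to be handled by the countable-union representation given above.
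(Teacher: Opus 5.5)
Your proof is correct and is essentially the standard argument behind Schachermayer's Lemma A.2, which the paper only cites without proof: first-hitting random indices give $|f_{\tau_k}|\to\limsup_n|f_n|$, and then a coordinate-by-coordinate refinement on $\{\limsup_n|f_n|<\infty\}$ gives convergence. Measurability of the recursive and composed indices follows from countable-union representations, and you handle the case split on $\{L=\infty\}\in\F_t$ and the identity $|f|=L$ correctly.
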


\begin{lemma}\label{lemma:L0-bound nonreversible} 
For every $t=0,\ldots,T-1$, we define the set
$\mathcal{N}_t:=\{N\in \mathcal{N}\ :\ N_{i,i}=0 \text{ for all}\ i<t\}$ and the 
random variable  
\begin{align*}
    \eps_t:={\rm essinf}\{&\eps1_{A}+\infty1_{A^c}\in L^0(\F_t;[0,\infty])\ :\ \exists \eps\in L^0(\F_t;[0,1])\hspace{1ex}\exists A \in \F_t\hspace{1ex}\exists N\in \mathcal{N}_t\text{ with } \\
    &N_{t,t}\in \mathcal{P}_t \text{ and } |N_{t,t}|=1_A\text{ s.t. } P(V^0(0,N)\le - \eps \mid \F_t)\le \eps\}
\end{align*}
If the model satisfies NA (cf. Lemma~\ref{lemma:concave}(v)), we have the following results:
    \begin{enumerate}
      \item For every $ t\in \{0,\ldots,T-1\}$ and every $ N \in \mathcal{N}_t $ with $N_{t,t}\in \mathcal{P}_t$, we have
      $P(V^0(0,N)\le - \eps_t|N_{t,t}|\ |\ \F_t)\ge \eps_t$ on ${\{|N_{t,t}|> 0\}}$,
        \item $\eps_t>0$  for $ t=0,\ldots,T-1$,
        \item  For every sequence $(N^n)_n \subseteq\mathcal{N}$ with $\liminf_{n\to \infty} V^0(0,N^n)>-\infty$ a.s., we have\\
        $\limsup_{n\to \infty}|q_t(\sum_{i=0}^tN^n_{i,t})|<\infty$ a.s. for $t=0,\ldots, T-1$.
    \end{enumerate}
    \begin{proof}
Ad $(i)$. Fix $t\in \{0,\ldots,T-1\}$ and assume by contradiction that there exists $N\in \mathcal{N}_t$ with $N_{t,t}\in \mathcal{P}_t$ such that $P(B_N)>0$ for
$B_N:=\{P(V^0(0,N)\le - \eps_t|N_{t,t}|\ |\ \F_t)< \eps_t\}\cap \{|N_{t,t}|>0\}$. 
By homogeneity, we may assume $|N_{t,t}|=1_{\{|N_{t,t}|>0\}}$. 
Observe that the triplet $(1,\{|N_{t,t}|>0\},N)$ satisfies the conditions in the essential infimum.
Thus, we must have 
\beam\label{eq:epsLessInfty}
 B_N\subseteq \{|N_{t,t}|>0\}\subseteq \{\eps_t<\infty\}\quad \mbox{a.s}
\eeam
Define the random variable 
\beao
K & := & \inf\left\{k\in \mathbb{N}\ :\ 1_{B_N}(P(-\eps_t<V^0(0,N)\le -\eps_t+\frac{1}{k}\ |\ \F_t)+\frac{1}{k})\right.\\
& & \qquad\qquad\qquad \left. \le1_{B_N}\frac{\eps_t-P(V^0(0,N)\le -\eps_t\ |\ \F_t)}{2}\right\}.
\eeao
Since $1_{B_N}(P(-\eps_t<V^0(0,N)\le -\eps_t+1/k\ |\ \F_t)+1/k)$ converges to $0$ a.s. as $k \to \infty$, the infimum is attained (and thus finite) a.s.
We set $\eps_t'=(\eps_t-1/K)1_{B_N}+\infty1_{B_N^c}$. Using that $\{K=k\}\in \F_t$, $k\in \bbn$, we get
\beao
&&1_{B_N}P(V^0(0,N)\le -\eps_t'\ |\ \F_t)\\
&=&1_{B_N}P(V^0(0,N)\le -\eps_t\ |\ \F_t)+1_{B_N}P(-\eps_t<V^0(0,N)\le -\eps_t+\frac{1}{K}\ |\ \F_t)\\
&\le& 1_{B_N}\frac{\eps_t+P(V^0(0,N)\le -\eps_t\ |\ \F_t)}{2}+1_{B_N}\frac{\eps_t-P(V^0(0,N)\le -\eps_t\ |\ \F_t)}{2}-1_{B_N}\frac{1}{K}\\
&=& 1_{B_N}\eps_t'.
\eeao
By the minimality of $\eps_t$, this implies $\eps_t'\ge\eps_t$. Because of (\ref{eq:epsLessInfty}) and $K<\infty$, we obtain a contradiction.

Ad $(ii)$ and $(iii)$.
First observe that for every $N\in \mathcal{N}$ the strategy defined by 
\beam\label{eq:PureStrat}
\wh{N}_{s,s}:=q_s\left(\sum_{i=0}^sN_{i,s}\right) \text{ and } \wh N_{s,s+1}:=0 \text{ for } s=0,\ldots,T-1,
\eeam
satisfies $V^0(x,N)=V^0(x,\wh{N})$ for all $x\in \bbr$.
Let us prove $(ii)$ and the assertion $(iii)$ with $\mathcal{N}_t$ instead of $\mathcal{N}$ by a joint backwards induction
on $t=T-1,\ldots,0$. The case $t=T-1$ is a special case of the induction step.\\
Suppose we have already shown that $\eps_s>0$ 
for all $ s=t+1,\ldots,T-1$ and that we have shown $(iii)$ for all sequences $(N^n)_n\subseteq\mathcal{N}_{t+1}$ with $\liminf_{n\to \infty} V^0(0,N^n)>-\infty$. Now let us prove that $\eps_t>0$. It is easy to see that $\eps_t$ is defined over a minimum-stable set. 
Thus, there exists a sequence  
$(\eps_n,A_n,N^n)_{n\in \bbn}\subseteq L^0(\F_t;[0,1])\times\F_t\times\mathcal{N}_t$ with $N^n_{t,t}\in \mathcal{P}_t, |N^n_{t,t}|=1_{A_n}\uparrow 1_{\{\eps_t<\infty\}}$, 
$\eps_n1_{A_n}+\infty1_{A^c_n}\downarrow\eps_t$ a.s.
and $ P(V^0(0,N^n)\le - \eps_n\ |\ \F_t)\le \eps_n$ for all $n\in \bbn$. 
Thus
\beao
P(\{V^0(0,N^n)\le - \eps_n\}\cap B_t\ |\ \F_t)\le \eps_n1_{B_t},\mbox{\ where } B_t:=\{\eps_t=0\}.
\eeao
In particular, by defining $\mu_n:=(V^0(0,N^n1_{B_t}))^+$, we get $V^0(0,N^n1_{B_t})-\mu_n\to 0$ in probability. We transform each $N^n$ according to (\ref{eq:PureStrat}) into $\wh N^n$ and obtain a.s.-convergence by passing to a subsequence.

Since $|\wh N_{t,t}^n1_{B_t}|=|N^n_{t,t}1_{B_t}|=1_{B_t\cap A_n}\uparrow1_{B_t}$, we can apply Lemma~\ref{Lemma: MeasurableSubsequence} to obtain a measurable subsequence $(\tau_k)_k$ such that $\wh{N}^{\tau_k}_{t,t}1_{B_t}$ converges to some $\beta\in L^0(\F_t;\bbr_+^d)$ with $|\beta|=1_{B_t}$. Now define the strategies $N^{(1)}\in \mathcal{N}_t$ via $N^{(1)}_{t,t}=\beta$ and $N^{(1)}_{i,u}=0$ for $(i,u)\neq (t,t)$ and
\beao
{N}^{(2),n}_{i,u}:=\left\{\begin{array}{ll} \wh{N}^n_{i,u} & \text{ if } t+1\le i \le u\\ 0 & \text{ otherwise}\end{array}\right.
\eeao
Then by linearity of $N\mapsto V^0(0,N)$, we have $V^0(0,{N}^{(2),\tau_k}1_{B_t})-1_{B_t}{\mu}_{\tau_k}\to -V^0(0,N^{(1)}1_{B_t})\text{ a.s.}$
Using the induction hypotheses $(iii)$ on $({N}^{(2),\tau_k}1_{B_t})_k\subseteq\mathcal{N}_{t+1}$ in combination with Lemma~\ref{lemma:komlos} we can deduce the existence of ${N}^{(2)}\in \mathcal{N}_{t+1}$ and ${\mu}\in L^0(\F_T; \bbr_+)$ such that
$V^0(0,{N}^{(2)}1_{B_t})-{\mu}1_{B_t}=-V^0(0,{N}^{(1)}1_{B_t})$.
Since $\mathcal{P}_t$ is closed in probability (Lemma~\ref{30.12.2025.1}) and $|N^{(1)}_{t,t}1_{B_t}|=1$ on $B_t$, we know that $N^{(1)}_{t,t} 1_{B_t}\in \mathcal{P}_t$ and $P(V^0(0,N^{(1)}1_{B_t}) \neq 0\ |\ \F_t)>0$ on $B_t$. Now assume by contradiction that $P(B_t)>0$. Then, $P(V^0(0,N^{(1)}1_{B_t})\neq 0)>0$. 
If $P(V^0(0,N^{(1)}1_{B_t}) < 0)=0$, then $N^{(1)}1_{B_t}$ is an arbitrage, otherwise, if $P(V^0(0,N^{(1)}1_{B_t}) < 0)>0$, then $N^{(2)}1_{\{V^0(N^{(1)}1_{B_t}) < 0\}}$ is an arbitrage, since $\{V^0(0,N^{(1)}1_{B_t})<0\}\in \F_{t+1}$. 
This contradicts NA, and thus $\{\eps_t=0\}$ must be a null set.

Under the induction hypothesis and given that $(ii)$ holds for $t$, we now establish $(iii)$ for all sequences $(N^n)_n\subseteq\mathcal{N}_t$ with $\liminf_{n\to \infty} V^0(0,N^n)>-\infty$ a.s. Let $(N^n)_n\subseteq\mathcal{N}_t$ and define for each $n$ the corresponding strategy $\wh{N}^n$ given by the transformation (\ref{eq:PureStrat}).
Define the set $A_t:=\{\limsup_{n\to \infty}|\wh{N}_{t,t}^n| = \infty\}$
and suppose that $P(A_t)>0$.
Define a measurable subsequence by $\tau_0:=0$ and $\tau_k:=\min\{n > \tau_{k-1}:1_{A_t}|\wh{N}^n_{t,t}|\ge1_{A_t}k\}$. 
Now observe that by $\eps_t>0$, the sequence
$(\inf_{n\ge k}V^0(0,\wh{N}^{\tau_{n}})+\eps_t|\wh{N}_{t,t}^{\tau_{k}}|)_{k\in \mathbb{N}}$ converges a.s. to $\infty$ on $A_t$ as $k\to \infty$, which implies
\beam\label{eq:KeyLemmaClosedContra}
    \lim_{k\to \infty}P(\{V^0(0,\wh{N}^{\tau_{k}})+\eps_t|\wh{N}^{\tau_k}_{t,t}|\le 0\}\cap A_t)= 0.
\eeam
On the other hand, our assumption $P(A_t)>0$ together with $(i)$ and $(ii)$ yield 
\beao
\E[P(\{V^0(0,\wh{N}^{\tau_k}) +\eps_t|\wh{N}_{t,t}^{\tau_k}|\le0\}\cap A_t\ |\ \F_t)]\ge \E[\eps_t1_{A_t}]>0 \quad\forall k\in \mathbb{N}
\eeao
which contradicts (\ref{eq:KeyLemmaClosedContra}). 
Therefore,
$\limsup_{n\to\infty} |\wh{N}_{t,t}^n|<\infty$  a.s. 
Now consider the sequence $\wh{N}^{(2),n}\subseteq\mathcal{N}_{t+1}$ defined by
\beao
\wh{N}^{(2),n}_{i,u}:=\left\{\begin{array}{ll} \wh{N}^n_{i,u}&\text{for } 1\le t+1\le i\le u \\ 0 &\text{otherwise}\end{array}\right.
\eeao
Since $\limsup_{n\to\infty} |\widehat{N}_{t,t}^n|<\infty$ and $\liminf_{n\to \infty} V^0(0,\widehat{N}^n)>-\infty$ a.s., the linearity of the mapping $N\mapsto V^0(0,N)$ implies that $\liminf_{n\to \infty} V^0(0,\widehat{N}^{(2),n})>-\infty$ a.s. as well. Applying the induction hypothesis to $(\wh N^{(2),n})_{n\in \bbn}$ shows that $\limsup_{n\to\infty}|\wh N^n_{i,i}|<\infty$ for $i=t,t+1,\ldots,T-1$ which completes the proof.
\end{proof}
\end{lemma}

\begin{lemma}\label{lemma:washsales} 
For every strategy $N\in \mathcal{N}$ there exists a strategy $\wt{N}\in \mathcal{N}$ that realizes losses in the sense that
\beam\label{eq:realizedlosses}
\{S^j_i > S^j_t\}\subseteq \{\wt N_{i,t,j}=0\}\quad \mbox{a.s.},\quad i=0,\ldots,T-1, t=i+1,\ldots,T, j=1,\ldots,d,
\eeam
with the property $V^\alpha(x,N)\le V^\alpha(x,\wt{N})$ a.s. for all $x\in \bbr$.

\begin{proof}
Let $x\in \bbr$ and $N\in \mathcal{N}$. We recursively modify the strategy period by period to include wash sales when it does not realize losses. Define $N^{(0)}:=N$ and given $N^{(u-1)}\in \mathcal{N}$, define $N^{(u)}\in \mathcal{N}$, $u=1,\ldots,T-1$, by
\beao
N^{(u)}_{i,t,j}:=\left\{\begin{array}{ll} N^{(u-1)}_{i,t,j}&\text{ if } i\le t<u \text{ or } u < i\le t \\
1_{\{S_u^j\ge S_i^j\}}N^{(u-1)}_{i,t,j}&\text{ if } i<u \le t \\
N^{(u-1)}_{u,t,j}+\sum_{s=0}^{u-1}1_{\{S_u^j< S_s^j\}}N^{(u-1)}_{s,t,j}&\text{ if } i=u \le t 
\end{array}\right.
\eeao
It is straightforward to verify that
\beam\label{eq:stockgains}
    \sum_{s=1}^t\sum_{i=0}^{s-1}\langle N^{(u-1)}_{i,s-1}-N^{(u-1)}_{i,s},S_s-S_i\rangle\ge \sum_{s=1}^t\sum_{i=0}^{s-1}\langle N^{(u)}_{i,s-1}-N^{(u)}_{i,s},S_s-S_i\rangle,\quad t=1,\ldots,T.
\eeam
Let us prove by induction on $t=0,\ldots,T$ that $\eta_t(N^{(u-1)})\le \eta_t(N^{(u)})$. The base case $t=0$ holds by $\eta_0(N^{(u-1)})=x-\langle N_{0,0}^{(u-1)},S_0\rangle=x-\langle N_{0,0}^{(u)},S_0\rangle=\eta_0(N^{(u)})$. Suppose that we have shown $\eta_s(N^{(u-1)})\le \eta_s(N^{(u)}) $ for every $ s<t$. We want to prove that $\eta_t(N^{(u-1)})\le \eta_t(N^{(u)})$ holds.
Since $\sum_{i=0}^sN^{(u-1)}_{i,s}=\sum_{i=0}^sN^{(u)}_{i,s}$ for $s=0,\ldots,T$, we derive from the self-financing condition (\ref{13.11.2025.3.ck}) and the induction hypothesis for $s=t-1$ that on the set $\{\Pi_t(N^{(u)})=\Pi_{t-1}(N^{(u)})\}$ it holds that $\eta_t(N^{(u-1)})\le \eta_t(N^{(u)})$. 

By $\eta_t=x+\sum_{s=0}^{t}(r_s\eta_{s-1}1_{(s\ge 1)}+\langle\sum_{i=0}^{s-1}(N_{i,s-1}-N_{i,s})-N_{s,s},S_{s}\rangle)-\Pi_t$ and $\sum_{i=0}^{s}N^{(u-1)}_{i,s}=\sum_{i=0}^{s}N^{(u)}_{i,s}$ for $s=0,\ldots,T$, we get
\beao
\eta_t(N^{(u)})-\eta_t(N^{(u-1)})=\sum_{s=1}^tr_s(\eta_{s-1}(N^{(u)})-\eta_{s-1}(N^{(u-1)}))-\Pi_t(N^{(u)})+\Pi_t(N^{(u-1)}). 
\eeao
On the set $\{\Pi_t(N^{(u)})>\Pi_{t-1}(N^{(u)})\}$, we have $\Pi_t(N^{(u)})=\alpha G_t(N^{(u)})$. Since $\alpha G_t(N^{(u-1)})\le \Pi_t(N^{(u-1)})$, it follows from (\ref{eq:stockgains}) that
\beao
\eta_t(N^{(u)})-\eta_t(N^{(u-1)})\ge \sum_{s=1}^t(1-\alpha)r_s(\eta_{s-1}(N^{(u)})-\eta_{s-1}(N^{(u-1)}))\ge 0,
\eeao
where for the second inequality we use the induction hypothesis for all $s<t$ and $r\ge 0$. We conclude that $V^\alpha(x,N)\le V^\alpha(x,N^{(1)})\le\ldots\le V^\alpha(x,N^{(T-1)})$. Since $N^{(T-1)}$ satisfies (\ref{eq:realizedlosses}) by construction, we are done.
\end{proof}
\end{lemma}

\begin{lemma}\label{lemma:risk} Suppose that the stock prices are nonnegative and there exists $\ov r\in \bbr_+$ such that $\max_{t=1,\ldots,T}r_t\le \ov r$ a.s. 
For $x\in \bbr$ and $N\in \mathcal{N}$, we define 
\beao
L_t(N):=\left((-x)\vee 0+\sum_{s=0}^{t-1}\left\langle q_{s}\left(\sum_{i=0}^s N_{i,s}\right),S_s\right\rangle\right)(1+\ov r)^T,\quad t=0,\ldots,T.
\eeao
with the usual convention $\sum_{s=0}^{-1}\ldots :=0$. For a strategy $N\in \mathcal{N}$ and a $\{0,\ldots,T\}$-valued stopping time~$\tau$, we define the stopped strategy by 
\beam\label{10.2.2026.1}
N^{(\tau)}_{i,t}:=N_{i,t}1_{\{\tau>t\}},\quad i=0,\ldots,T-1,\ t=i,\ldots,T.
\eeam
The following statements hold:   
\begin{enumerate}
    \item $L_t(N^{(\tau)})=L_{t\wedge\tau}(N)$,\quad $t=0,\ldots,T$
    \item For every $N\in \mathcal{N}$ that realizes losses in the sense of (\ref{eq:realizedlosses}), we have $ V^\alpha(x,N)\ge - L_T(N)$.
\end{enumerate}
\end{lemma}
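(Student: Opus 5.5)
Part (i) is a direct computation from the definitions. For $s<\tau$ we have $N^{(\tau)}_{i,s}=N_{i,s}$ for all $i\le s$, and for $s\ge\tau$ we have $N^{(\tau)}_{i,s}=0$. Since $\{\tau>s\}\in\F_s$, Lemma~\ref{30.12.2025.1} (positive homogeneity of $q_s$ with the $\F_s$-measurable multiplier $1_{\{\tau>s\}}$) gives
\[
q_s\Bigl(\sum_{i=0}^s N^{(\tau)}_{i,s}\Bigr)=1_{\{\tau>s\}}\,q_s\Bigl(\sum_{i=0}^s N_{i,s}\Bigr).
\]
Substituting this into the definition of $L_t$ cuts off exactly the summands with $s\ge\tau$. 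The sum therefore runs over $s\le (t\wedge\tau)-1$, which proves (i). I would also note that $N^{(\tau)}\in\mathcal{N}$: the monotonicity $N^{(\tau)}_{i,t}\ge N^{(\tau)}_{i,t+1}$ holds because $1_{\{\tau>t\}}$ is nonincreasing in $t$, and $N^{(\tau)}_{i,T}=0$.

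For part (ii), I would first reduce from the after-tax wealth to a pre-tax quantity. By (\ref{eq:bankalt1}) at $t=T$ we have $V^\alpha(x,N)=x+G_T-\Pi_T$. When $N$ realizes losses, each realized trade with $S_t^j<S_i^j$ has quantity $0$, so every stock summand in $G_t-G_{t-1}$ is nonnegative. Only the interest part $\eta_{u-1}r_u$ can make $G$ decrease, and since $\alpha<1$, paying taxes on a running maximum removes at most a fraction $\alpha$ of realized gains. The cleanest route is to work with the bank account recursion (\ref{13.11.2025.3.ck}) and show by induction on $t$ that
\[
\eta_t\ \ge\ -\Bigl((-x)\vee 0+\sum_{s=0}^{t}\langle N_{s,s}\ \text{-type purchases},S_s\rangle\Bigr)(1+\ov r)^t .
\]
Concretely, I would show that the negative part of $\eta_t$ grows at most from new purchases plus interest at rate $\le \ov r$. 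Tax increments never exceed $\alpha$ times the realized increment, which consists of nonnegative stock gains and interest on $\eta$. Stock sales contribute $\langle N_{i,t-1}-N_{i,t},S_t\rangle\ge 0$ because prices are nonnegative, and they are taxed at most at rate $\alpha$ on the nonnegative gain, so the net contribution is still $\ge 0$. Interest $r_t\eta_{t-1}$ on a negative balance is not taxed in a harmful way, since a negative $G$ increment cannot increase $\Pi$.

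The main obstacle is replacing purchases $\sum_i\langle N_{s,s},S_s\rangle$ by the smaller quantity $\langle q_s(\sum_{i\le s}N_{i,s}),S_s\rangle$. I would handle this as follows. The reversible part $p_s(\cdot)$ earns exactly the interest rate pre-tax, so holding it is financed at no loss. Since the strategy realizes losses and the reversible part has zero excess return, the only risk is the pre-tax tax-free-equivalent wealth. I would first compare with the frictionless wealth: by Lemma~\ref{lemma:concave}(iii) and the loss-realization property, $V^\alpha\ge V^0-\alpha(\text{nonnegative realized gains})$. Then, using (\ref{eq:frictionlesswealth}) with the decomposition of (\ref{eq:PureStrat}), $V^0(x,N)=V^0(x,\wh N)\ge -\bigl((-x)\vee0+\sum_s\langle \wh N_{s,s},S_s\rangle\bigr)(1+\ov r)^T$. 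This bound holds because each one-period position $\wh N_{s,s}$ loses at most its value $\langle \wh N_{s,s},S_s\rangle(1+r_{s+1})$ given $S\ge0$, and the resulting debts compound at rate $\le \ov r$. The step I would check most carefully is that the tax term cannot push the wealth below this bound. The argument is that taxes are paid only out of realized positive gains, which already appear with a plus sign in $V^0$, and that $(1-\alpha)\cdot$gain $\ge 0$. This mirrors the estimate (\ref{8.2.2026.1}) from Example~\ref{example:Nonclosedness}, $V^\alpha\ge (1-\alpha)(x+G_T)+\alpha x$-type inequalities, which, combined with the frictionless lower bound, yields $V^\alpha(x,N)\ge -L_T(N)$.
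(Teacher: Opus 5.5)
Your part (i) is fine. Writing $q_s(\sum_i N_{i,s}1_{\{\tau>s\}})=1_{\{\tau>s\}}q_s(\sum_i N_{i,s})$ via positive homogeneity is the precise form of the paper's remark ``$q_s(0)=0$'', and checking $N^{(\tau)}\in\mathcal{N}$ is a useful addition.

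Part (ii), however, is not proved, for three reasons.
\begin{enumerate}
\item You misread (\ref{eq:realizedlosses}). It says that a lot which is under water at $t$ is \emph{not held} after trading at $t$, i.e., it is sold. So realized stock gains can certainly be negative. What the condition does give is that every lot still held has nonnegative unrealized gain, $\langle N_{i,t},S_t-S_i\rangle\ge 0$.
\item The decisive step rests on an inequality in the wrong direction. Since $\Pi_T=\alpha\max_u G_u\ge\alpha G_T$, (\ref{8.2.2026.1}) is an \emph{upper} bound, $V^\alpha\le(1-\alpha)(x+G_T)+\alpha x$; Lemma~\ref{lemma:concave}(iii) likewise only gives $V^\alpha\le V^0$. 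Under LUL, taxes paid on early gains are not refunded when those gains are lost later. So $V^0(x,N)\ge -L_T(N)$ together with ``$V^\alpha\ge V^0-\alpha\cdot$(realized gains)'' yields only $V^\alpha\ge -L_T(N)-\alpha(\cdots)$. Your claim that the slack in the frictionless bound absorbs the taxes is asserted, not shown. A period-by-period version of it is even false: if $\wh N_{t-1,t-1}=0$ and $V_{t-1}>0$ is the running maximum, the interest is taxed although the frictionless one-step estimate holds with equality.
\item Your first route cannot produce $L_T$ either. The quantity $\langle q_s(\sum_{i\le s}N_{i,s}),S_s\rangle$ involves the whole position held at $s$, not the purchases at $s$, so it is not ``smaller'' than $\langle N_{s,s},S_s\rangle$. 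The two bounds are incomparable.
\end{enumerate}

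Two ideas are missing.
\begin{enumerate}
\item Reduction to immediate realization. By the correct consequence of (\ref{eq:realizedlosses}), the realized gains of $N$ equal its mark-to-market gains minus nonnegative unrealized gains. Hence at every $t$ they are dominated by those of the immediately realizing strategy $\wh N_{t,t}:=\sum_{i\le t}N_{i,t}$, $\wh N_{t,t+1}:=0$, which has the same total holdings. The induction from Lemma~\ref{lemma:washsales} then gives $V^\alpha(x,N)\ge V^\alpha(x,\wh N)$, and $L_T(\wh N)=L_T(N)$.
\item The running-maximum bound. For $\wh N$ the after-tax wealth is a one-dimensional process, $V_t=P_t-\alpha[P_t-V^\star_{t-1}]^+$, with pre-tax value $P_t=(1+r_t)V_{t-1}+\langle\wh N_{t-1,t-1},S_t-(1+r_t)S_{t-1}\rangle$; see (\ref{eq:WealthPrTax}). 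Because $\alpha<1$, this gives $V_t\ge\min\{V^\star_{t-1},P_t\}$: whenever taxes are paid, the wealth stays at or above its running maximum, and that maximum is $\ge x\wedge 0$. Since the reversible part has zero excess return and $S_t\ge 0$, you also have $P_t\ge(1+r_t)[V_{t-1}-\langle q_{t-1}(\wh N_{t-1,t-1}),S_{t-1}\rangle]$. Induction with $0\le r_t\le\ov r$ then yields $V^\alpha(x,\wh N)\ge -L_T(N)$.
\end{enumerate}
This running-maximum mechanism, not a comparison with $V^0$, is what controls the taxes.
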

The random variable~$L_t(N)$ can be interpreted as an $\F_{t-1}$-measurable upper bound for losses if stocks are liquidated at time~$t$ after the strategy~$N$ has been followed. Since it is known one step in advance, it allows to control maximal losses. 
\begin{proof}
$(i)$ Follows from $q_s(0)=0$ (Lemma~\ref{30.12.2025.1}).

$(ii)$ Let $N\in \mathcal{N}$ satisfy (\ref{eq:realizedlosses}).
The accumulated realized gains of the stocks can be rewritten as 
\beao
    \sum_{s=1}^t\sum_{u=0}^{s-1}\langle N_{u,s-1}-N_{u,s},S_s-S_u\rangle=\sum_{u=1}^t\langle S_u-S_{u-1},\sum_{i=0}^{u-1}N_{i,u-1}\rangle-\sum_{i=0}^{t}\langle N_{i,t},S_t-\min_{u=i,\ldots,t}S_u \rangle.
\eeao
We define the strategy $\wh{N}$ by $\wh{N}_{t,t}:=\sum_{i=0}^tN_{i,t}$, $\wh{N}_{t,t+1}=0$ and obtain
\beam\label{eq:stockgains2}
\sum_{s=1}^t\sum_{u=0}^{s-1}\langle N_{u,s-1}-N_{u,s},S_s-S_u\rangle\le \sum_{s=1}^t\sum_{u=0}^{s-1}\langle \wh{N}_{u,s-1}- \wh{N}_{u,s},S_s-S_u\rangle,\quad t=1,\ldots,T.
\eeam
By repeating the arguments in the proof of Lemma~\ref{lemma:washsales}, (\ref{eq:stockgains2}) yields that $V^\alpha(x,N)\ge V^\alpha(x,\wh N)$. Therefore, we can and do assume w.l.o.g. that
$N_{t,t+1}=0$ for all $t=0,\ldots,T-1$. Since now $N$ immediately realizes its gains and losses, it is possible to define a meaningful one dimensional wealth process $(V_t)_{t=0,\ldots,T}$ by the recursion ${V}_0=x$ and
\beam\label{eq:WealthPrTax}
& & {V}_t = (1+r_t){V}_{t-1} +\langle {N}_{t-1,t-1},S_t-(1+r_t)S_{t-1}\rangle\nonumber\\
& & \qquad\ - \alpha\left[(1+r_t){V}_{t-1} + \langle{N}_{t-1,t-1},S_t-(1+r_t)S_{t-1}\rangle-{V}^\star_{t-1}\right]^+,\quad t\ge 1,
\eeam
where ${V}^\star_{t-1}:=\max_{s=0,\ldots,t-1}{V}_s$. It is straightforward to check that ${V}_T=V^\alpha(x,{N})$. In addition, one has $V_0\ge x\wedge0$ and by $S\ge 0$,
\beao
{V}_t\ge & \min\{{V}^\star_{t-1},(1+r_t){V}_{t-1} + \langle{N}_{t-1,t-1},S_t-(1+r_t)S_{t-1}\rangle\}\\
=&\min\{{V}^\star_{t-1},(1+r_t){V}_{t-1} + \langle q_{t-1}({N}_{t-1,t-1}),S_t-(1+r_t)S_{t-1}\rangle\}\\
\ge & \min\{x\wedge0,(1+r_t)\left[{V}_{t-1} - \langle q_{t-1}({N}_{t-1,t-1}),S_{t-1}\rangle\right]\},\quad t=1,\ldots,T.
\eeao
By $0\le \max_{t=1,\ldots,T}r_t\le \ov{r}$ and $\langle q_{t-1}({N}_{t-1,t-1}),S_{t-1}\rangle\ge 0$ for $t=1,\ldots,T$, this recursive bound gives us $(ii)$. 
\end{proof}
We are now ready to prove Theorem~\ref{theo:main closedness}.
\begin{proof}[Proof of Theorem~\ref{theo:main closedness}]
Suppose that the market satisfies NUIBR, the stock prices are nonnegative, and the interest rates are bounded. We note that NUIBR implies NA (Remark~\ref{29.12.2025.1}), and we can apply Lemma~\ref{lemma:L0-bound nonreversible}.

Fix $x\in \bbr$ and let $R$ be a reaction function such that (\ref{8.11.2025.1.ck}) holds for all $K\in\bbr_+$.
To simplify the notation we define $|N|_{\mathcal{N}}:=\max_{t=0,\ldots,T-1, j=1,\ldots,d} N_{t,t,j}$ and for each $K\in \bbr_+$ we define the set $\mathcal{C}_{R,K}:=\mbox{conv}\{|R(N)|_{\mathcal{N}}\ :\ N\in\mathcal{N}\ \mbox{with}\ V^\alpha(x,N)\ge -K\ \mbox{a.s.}\}$.

Let $(N^n,\mu_n)_{n\in \bbn}\subseteq \mathcal{N}\times L^0(\F_T; \bbr_+)$ be a sequence such that $V^\alpha(x,N^n)-\mu_n \to f$ in probability. By passing to a subsequence we can and do assume a.s.-convergence. By Lemma~\ref{lemma:washsales}, we may further assume that for every $n\in\mathbb{N}$, the strategy $N^n$ realizes losses in the sense of (\ref{eq:realizedlosses}).

We know from Lemma~\ref{lemma:concave}(iii) and Lemma~\ref{lemma:risk}(ii) that $V^0(x,N^n)\ge V^\alpha(x,N^n)\ge -L_T(N^n)$ for all $ n\in \mathbb{N}$. 
Since $\liminf_{n\to \infty} V^0(0,N^n)\ge f-V^0(x,0)>-\infty$ a.s., Lemma~\ref{lemma:L0-bound nonreversible}(iii) guarantees that $L_t:=\sup_{n\in \mathbb{N}}L_t(N^n) < \infty $ a.s. for $ t=0,\ldots,T$. Note that $L_t$ is nondecreasing in $t$. For every $k\in \bbn$ choose $M_k\in [(-x)\vee 0,\infty)$ such that $P(L_{T}>M_k)\le 1/k^2$ and define the stopping times~$\tau_k:=\inf\{t\in\{0,\ldots,T-1\} : L_{t+1} > M_k\}\wedge T$. For every $n \in \mathbb{N}$, we define the stopped strategies~$N^{n,k}:=(N^n)^{(\tau_k)}$ according to (\ref{10.2.2026.1}). Now, 
Lemma~\ref{lemma:risk} guarantees that for all $n,k\in\mathbb{N}$ we have $V^\alpha(x,N^{n,k})\ge -M_k$. 
Since $P(V^\alpha(x,N^n)\not=V^\alpha(x,N^{n,n})) \le P(L_{T}>M_n) \le 1/n^2$, we know by the convergence of $V^\alpha(x,N^n)-\mu_n$ and the Borel-Cantelli lemma that 
\beam \label{eq:diagonalsequence}
V^\alpha(x,N^{n,n})-\mu_n\to f\quad\mbox{a.s.}\quad \mbox{as }n\to\infty.
\eeam
Let us prove that $\mbox{conv}\{|R(N^{l,l})|_\mathcal{N}\ :\ l \in \mathbb{N}\}$ is bounded in $L^0$. Fix $\eps>0$ and choose $k\in \mathbb{N}$ such that $1/k^2\le \eps/2$. 
Since $\mathcal{C}_{R,M_k}$ is bounded in $L^0$, there exists $C_{\eps/2}>0$ such that
\beam\label{eq:L0boundedReaction}
\sup_{X\in \mathcal{C}_{R,M_k}}P(X>C_{\eps/2})\le \frac{\eps}{2}.
\eeam
For $l\ge k$, we have $N^{l,l}=N^{l,k}$ on the set $\{L_T\le M_k\}$, and thus the definition of $R(N^{l,l})$ guarantees that
$R(N^{l,l})=R(N^{l,k})$. Let $\sum_{l=1}^m\lambda_l |R(N^{l,l})|_\mathcal{N}\in\mbox{conv}\{|R(N^{l,l})|_\mathcal{N}\ :\ l \in \mathbb{N}\}$. We get
\beam\label{15.2.2026.1}
&&P(\sum_{l=1}^m\lambda_l |R(N^{l,l})|_\mathcal{N} >C_{\eps/2})\nonumber\\
&\le &P(L_{T}> M_k)+P(\{L_{T}\le M_k\}\cap\{\sum_{l=1}^m\lambda_l |R(N^{l,l})|_{\mathcal{N}} >C_{\eps/2}\})\nonumber\\
&\le & \frac{1}{k^2}+P(\{L_{T}\le M_k\}\cap\{\sum_{l=1}^{m}\lambda_l|R(N^{l,l})|_{\mathcal{N}} >C_{\eps/2}\})\nonumber\\
&= &\frac{1}{k^2} + P(\{L_{T}\le M_k\}\cap\{\sum_{l=1}^{k-1}\lambda_l|R(N^{l,l})|_{\mathcal{N}}+ \sum_{l=k}^{m}\lambda_l|R(N^{l,k})|_{\mathcal{N}} >C_{\eps/2}\})\nonumber\\
&\le & \eps,
\eeam
where the last inequality holds by (\ref{eq:L0boundedReaction}) since $\sum_{l=0}^{k-1}\lambda_l|R(N^{l,l})|_{\mathcal{N}}+ \sum_{l=k}^{m}\lambda_l|R(N^{l,k})|_{\mathcal{N}}\in \mathcal{C}_{R,M_k}$. 
We have now shown that $\mbox{conv}\{R_{i,t,j}(N^{l,l})\ :\ l \in \mathbb{N}\}$ is bounded in $L^0$ for every $0\le i\le t\le T-1$ and for every $j=1,\ldots,d$. By Lemma~\ref{lemma:komlos}, applied successively to each component of $(R(N^{n,n}))_{n\in\bbn}$, there exists a sequence of convex weights $(\lambda_n^n,\ldots,\lambda^n_{J_n})_{n\in \mathbb{N}}$ such that
\beam\label{eq:converStrat}
 \sum_{k=n}^{J_n}\lambda_k^n R(N^{k,k}) \to {N} \in \mathcal{N}\quad\text{a.s.}\quad\mbox{as}\ n\to \infty.
 \eeam
By assumption, we have $V^\alpha(x,R(N^{k,k}))\ge V^\alpha(x,N^{k,k})$ for all $k\in \mathbb{N}$. Together with the concavity of $V^\alpha(\cdot,\cdot)$, it implies that
\beao
V^\alpha(x,\sum_{k=n}^{J_n}\lambda_k^nR(N^{k,k})) &\ge&  \sum_{k=n}^{J_n}\lambda_k^nV^\alpha(x,R(N^{k,k}))\\ 
& \ge & \sum_{k=n}^{J_n}\lambda_k^nV^\alpha(x,N^{k,k})\\
& \ge & \sum_{k=n}^{J_n}\lambda_k^n(V^\alpha(x,N^{k,k})-{\mu}_k)\quad\quad \mbox{for all\ }n\in\bbn.
\eeao
By (\ref{eq:diagonalsequence}) the RHS converges to $f$ a.s. as $n\to\infty$ and by (\ref{eq:converStrat})
the LHS converges to $V^\alpha(x,N)$ a.s. as $n\to \infty$. We obtain that
$V^\alpha(x,N)\ge f$, hence $f\in \mathcal{V}^\alpha(x)$.
\end{proof}

\begin{corollary}
    If there are no redundant stocks in the sense that $\mathcal{R}_t=\{0\}$ for all $t=0,1,\dots,T-1$, then the market model satisfies the NUIBR condition and the set~$\mathcal{V}^\alpha(x)$ is closed with respect to the convergence in probability.
\end{corollary}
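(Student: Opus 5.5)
The plan is to show NUIBR with the trivial reaction function $R(N):=N$. With that choice, $V^\alpha(x,R(N))\ge V^\alpha(x,N)$ holds with equality. It then remains to prove that for every $x\in\bbr$ and $K\in\bbr_+$ the set ${\rm conv}\{|N|_{\mathcal{N}} : N\in\mathcal{N},\ V^\alpha(x,N)\ge -K\}$ is bounded in $L^0$. Closedness then follows from Theorem~\ref{theo:main closedness}, provided its hypotheses hold (nonnegative prices, bounded $r_t$); these should be carried over implicitly from the theorem. One more point to check first: $\mathcal{R}_t=\{0\}$ for all $t$ means that $q_t(\beta)=\beta$ for every $\beta\in L^0(\F_t;\bbr^d_+)$.

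Since $|N|_{\mathcal{N}}\ge0$, it suffices to bound convex combinations by sums. The plan is to reduce boundedness of the convex hull to a statement about the sequence. Assume the convex hull is unbounded in $L^0$. Then there are elements $X_n=\sum_l\lambda^n_l|N^{n,l}|_{\mathcal{N}}$ with $P(X_n>n^2 2^n)\ge\delta$ for some $\delta>0$. Set $\wt N^n:=\sum_l\lambda^n_l N^{n,l}\in\mathcal{N}$. By concavity (Lemma~\ref{lemma:concave}(iv)), $V^\alpha(x,\wt N^n)\ge -K$, and $|\wt N^n|_{\mathcal{N}}\ge X_n/(Td)$ because the components are nonnegative. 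So it is enough to show the following: every sequence $(N^n)\subseteq\mathcal{N}$ with $V^\alpha(x,N^n)\ge -K$ has $\{|N^n|_{\mathcal{N}}:n\}$ bounded in $L^0$. To prove this, argue by contradiction with $P(|N^n|_{\mathcal{N}}>c_n)\ge\delta$ and $c_n\to\infty$. Lemma~\ref{lemma:concave}(iii) gives $V^0(x,N^n)\ge -K$, hence $\liminf_n V^0(0,N^n)\ge -K-V^0(x,0)>-\infty$. Note that NUIBR is not being assumed here, only NA. NA is needed for Lemma~\ref{lemma:L0-bound nonreversible}, so it has to be assumed; I expect the corollary is meant in that setting (otherwise an arbitrage already violates NUIBR). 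Now Lemma~\ref{lemma:L0-bound nonreversible}(iii) together with $q_t={\rm id}$ gives $\limsup_n|\sum_{i\le t}N^n_{i,t}|<\infty$ a.s. for each $t$. Since $N^n_{t,t}\le\sum_{i\le t}N^n_{i,t}$, this yields $\limsup_n|N^n|_{\mathcal{N}}<\infty$ a.s.

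The main obstacle is passing from this a.s. $\limsup$ bound, which is a statement along the fixed sequence, to $L^0$-boundedness. The problem is that a.s. finiteness of $\limsup$ does not contradict $P(|N^n|_{\mathcal{N}}>c_n)\ge\delta$ for all $n$. The remedy is to pass to a subsequence before applying the lemma. Choose a subsequence along which the events $\{|N^{n_k}|_{\mathcal{N}}>c_{n_k}\}$ occur infinitely often with positive probability. This is possible by the reverse Fatou lemma: $P(\limsup_k A_k)\ge\limsup_k P(A_k)\ge\delta$. On that event $\limsup_k|N^{n_k}|_{\mathcal{N}}=\infty$, which contradicts Lemma~\ref{lemma:L0-bound nonreversible}(iii) applied to $(N^{n_k})_k$, since the $\liminf$ condition on $V^0(0,\cdot)$ is inherited by subsequences. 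This establishes \eqref{8.11.2025.1.ck}, so NUIBR holds, and Theorem~\ref{theo:main closedness} yields the closedness of $\mathcal{V}^\alpha(x)$.
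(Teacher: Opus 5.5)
Your NUIBR argument is the paper's: the identity reaction function, Lemma~\ref{lemma:concave}(iv) to dispense with the convex hull, and Lemma~\ref{lemma:concave}(iii) combined with Lemma~\ref{lemma:L0-bound nonreversible}(iii) and $q_t=\mathrm{id}$. The ``main obstacle'' you describe is not one. The inequality $P(\limsup_n\{|N^n|_{\mathcal{N}}>c_n\})\ge\limsup_n P(|N^n|_{\mathcal{N}}>c_n)\ge\delta$ already holds for the full sequence, so a.s. finiteness of the $\limsup$ is contradicted directly and no subsequence is needed. This is exactly the paper's one-line contradiction. You are right that NA has to be assumed. The paper's proof also relies on Lemma~\ref{lemma:L0-bound nonreversible}. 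Moreover, a single stock with deterministic $S_1>(1+r_1)S_0$ has $\mathcal{R}_0=\{0\}$ but violates NA, and hence NUIBR.

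The genuine gap is the closedness claim. You derive it from Theorem~\ref{theo:main closedness} by importing $S\ge 0$ and bounded $r_t$. The corollary does not assume these, and the paper states explicitly that the theorem is therefore not directly applicable. Those hypotheses enter the theorem only through Lemma~\ref{lemma:risk}. That lemma supports the stopping at $\tau_k$, which is needed to place the approximating strategies in a set $\{V^\alpha(x,\cdot)\ge -M_k\}$ where the NUIBR bound can be used.

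Without redundant stocks this detour is unnecessary; the missing idea is that the original approximating sequence is already a.s. bounded. Suppose $V^\alpha(x,N^n)-\mu_n\to f$ a.s. Then $\liminf_n V^0(0,N^n)\ge f-V^0(x,0)>-\infty$. Since $q_t=\mathrm{id}$, Lemma~\ref{lemma:L0-bound nonreversible}(iii) gives $\sup_n N^n_{t,t,j}<\infty$ a.s. directly. Hence $\mathrm{conv}\{N^n_{i,t,j} : n\in\bbn\}$ is dominated by an a.s. finite random variable and is bounded in $L^0$. Lemma~\ref{lemma:komlos} then yields forward convex combinations converging a.s. to some $N\in\mathcal{N}$. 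Concavity and the pathwise continuity of $V^\alpha(x,\cdot)$ give $V^\alpha(x,N)\ge f$, exactly as at the end of the proof of Theorem~\ref{theo:main closedness}. Neither Lemma~\ref{lemma:washsales} nor the truncation is needed, and this is what removes the sign and boundedness assumptions.
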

\begin{proof}
Let us fix $x\in\bbr$, $K\in\bbr_+$ and show that \eqref{8.11.2025.1.ck} holds with reaction function~$R_{i,t,j}(N):=N_{i,t,j}$. By Lemma~\ref{lemma:concave}(iv), the 
set~$\left\{\sum_{i=0}^{T-1} \sum_{j=1}^{d} N_{i,i,j}\ :\ N \in \mathcal{N}\ \text{with}\ V^\alpha(x,N) \ge -K \ \text{a.s.}\right\}$ is convex. This means that, by $\max_{i=0,\ldots,T-1,\ j=1,\ldots,d} N_{i,i,j} \le \sum_{i=0}^{T-1} \sum_{j=1}^{d} N_{i,i,j}$, the convex hull in \eqref{8.11.2025.1.ck} is not required if the reaction function is the identity. 
We have to show that the above set is $L^0$-bounded. Assume by contradiction that there are $t\in\{0,\ldots,T-1\}$, $j\in\{1,\ldots,d\}$, $\eps > 0$, and a sequence $(N^n)_{n\in\mathbb{N}} \subseteq \mathcal{N}$ with $V^\alpha(x,N^n) \ge -K$ a.s. and
${P}\left( N^n_{t,t,j} > n\right)>\eps$ for all $n\in\bbn$. Since $q_t(\cdot)$ is the identity by $\mathcal{R}_t=\{0\}$,  
$V^\alpha(x,N^n) \ge -K$ a.s. for all $n\in\bbn$ implies that $\limsup_{n\to\infty} N^n_{t,t,j}\le \limsup_{n\to\infty}\left(\sum_{i=0}^{t} N^n_{i,t,j}\right)< \infty$ a.s. by
Lemma~\ref{lemma:concave}(iii) and Lemma~\ref{lemma:L0-bound nonreversible}(iii), a contradiction.

The closedness of $\mathcal{V}^\alpha(x)$ does not follow directly from Theorem~\ref{theo:main closedness}, since we no longer assume $S \ge 0$ and $r \le \overline{r} \in \mathbb{R}_+$. However, the proof is similar and even simpler since it is sufficient to argue only with the strategies of the original sequence.  
Let $(N^n,\mu_n)_{n\in\bbn} \subseteq \mathcal{N} \times L^0(\mathcal{F}_T;\mathbb{R}_+)$ be a sequence such that $V^\alpha(x,N^n) - \mu_n \to f$ in probability as $n\to\infty$. Passing to a subsequence, we may assume a.s.-convergence. Since $q_t(\cdot)$ is the identity, we obtain again by Lemma~\ref{lemma:concave}(iii) and Lemma~\ref{lemma:L0-bound nonreversible}(iii) that $\limsup_{n\to\infty}N^n_{t,t,j}< \infty$ for all $t=0,\dots,T-1$, $j=1,\ldots,d$. This implies that the set $\mbox{conv}\left\{\, \max_{i=0,\ldots,T-1,\ j=1,\ldots,d} N^n_{i,i,j}  \ :\ n \in \mathbb{N} \right\}$ is bounded in $L^0$. We may now proceed exactly as in the proof of Theorem~\ref{theo:main closedness} 
after Equation~(\ref{15.2.2026.1}) by taking the reaction function~$R_{i,t,j}(N)=N_{i,t,j}$ and replacing $(N^{n,n})_{n\in\bbn}$ by $(N^n)_{n\in\bbn}$.
\end{proof}

\section{Utility maximization}\label{29.12.2025.3}

We analyze the utility maximization problem under the natural constraint that the terminal wealth must be nonnegative.
Apart from this, we impose only minimal assumptions on the utility function: 
\begin{assumption}\label{14.2.2026.2} 
Let $U: \bbr\to\bbr\cup\{-\infty\}$ be a function that takes the value~$-\infty$ on $\bbr_-\setminus\{0\}$, whose restriction to $\bbr_+\setminus\{0\}$ is $\bbr$-valued, nondecreasing, and concave, and that satisfies $U(0)=\lim_{x\searrow 0}U(x)$.
\end{assumption}
We introduce the sets 
\beao
\mathcal{A}^\alpha(x):=\{N\in \mathcal{N}\ :\ V^\alpha(x,N)\ge 0 \text{ a.s.}\}
\eeao
and 
\beao
\mathcal{V}_{\ge0}^\alpha(x):=\{V^\alpha(x,N)\ :\ N\in \mathcal{A}^\alpha(x)\}-L^0(\F_T;\bbr_+).
\eeao
For $x\in\bbr_+\setminus\{0\}$, $N\in \mathcal{N}$ we define the expected utility as 
\beao
\E[U(V^\alpha(x,N))]:=\E[U^+(V^\alpha(x,N))]-\E[U^-(V^\alpha(x,N))]
\eeao
with the convention $\E[U(V^\alpha(x,N))]:=-\infty$ if $\E[U^-(V^\alpha(x,N))]=\infty$, and consider the maximization problem
\beam\label{eq:sup utility}
    u^\alpha(x):=\sup_{N\in\mathcal{A}^\alpha(x)}\E[U(V^\alpha(x,N))].
\eeam
In frictionless markets, that is $\alpha=0$, it is a well-known result that $u^0(x)< \infty$ for some $x>0$ implies that $u^0(x)<\infty$ for all $x>0$. We extend this result to the tax rate~$\alpha<1$.
\begin{proposition}\label{lemma:IntegrabilityAllx}
If there exists $\alpha_0\in [0,1)$ and $x_0>0$ such that $u^{\alpha_0}(x_0)<\infty$, then
\beao
u^\alpha(x)<\infty \quad\forall x >0 \quad\forall\alpha\in [0,1).
\eeao
\begin{proof}
We only have to show that for a fixed $\alpha\in (0,1)$ and a fixed $x>0$, $u^\alpha(x)< \infty$ implies that $u^0(x)< \infty$. The rest follows from
R\'asonyi and Stettner~\cite[Remark~1.1]{rasonyi.stettner.2005} and Lemma~\ref{lemma:concave}(iii).

{\em Step 1:} For $N\in \mathcal{A}^0(x)$ define the process $\xi_t:=\sum_{s=1}^t1_{\{V^0_{s}(x,N)\ge 0\}}$. Let us show that: 
\beam\label{11.8.2025.2.ck}
V^\alpha(x,\wt{N}) \ge (1-\alpha)^{\xi_T} V^0(x,N),\quad\mbox{where}\ \wt{N}_{t,t}:=(1-\alpha)^{\xi_t} \sum_{i=0}^t N_{i,t},\ \wt{N}_{t,t+1}:=0
\eeam
for $t=0,\ldots,T-1$. Since $\wt{N}$ realizes gains and losses immediately, it is possible to define, as in (\ref{eq:WealthPrTax}), a meaningful one-dimensional wealth {\em process}~$\ov{V}$ of $\wt{N}$ by the recursion $\ov{V}_0=x$ and
\beao
& & \ov{V}_t = (1+r_t)\ov{V}_{t-1} +\langle \wt{N}_{t-1,t-1},S_t-(1+r_t)S_{t-1}\rangle\\
& & \qquad\ - \alpha\left[(1+r_t)\ov{V}_{t-1} + \langle\wt{N}_{t-1,t-1},S_t-(1+r_t)S_{t-1}\rangle-\ov{V}^\star_{t-1}\right]^+,\quad t=1,\ldots,T,
\eeao
where $\ov{V}^\star_{t-1}:=\max_{s=0,\ldots,t-1}\ov{V}_s$.  
We note that by $\alpha<1$, after the payment of taxes the wealth still attains its running maximum.
On the other hand, we define the fictitious wealth process~$\un{V}$ of $\wt{N}$ that would occur with a wealth tax due at the end of each period---but only when the wealth is nonnegative---by $\un{V}_0=x$ and 
\beao
& & \un{V}_t = (1+r_t)\un{V}_{t-1} +\langle \wt{N}_{t-1,t-1},S_t-(1+r_t)S_{t-1}\rangle\\
& & \qquad\ - \alpha\left[(1+r_t)\un{V}_{t-1} + \langle\wt{N}_{t-1,t-1},S_t-(1+r_t)S_{t-1}\rangle\right]^+,\quad t=1,\ldots,T.
\eeao
Using that $\ov{V}^\star_{t-1}\ge x>0$ and $\alpha<1$, one shows by induction on $t$ that $\ov{V}_t\ge \un{V}_t$ for $t=0,\ldots,T$.
Then, let us show by induction on $t$ that $\un{V}_t=(1-\alpha)^{\xi_t} V^0_t(x,N)$ for $t=0,\ldots,T$. Assume that the assertion is already proven for $t-1$. 
We use that $V^0_t(x,N)$ satisfies the recursion given in (\ref{eq:frictionlesswealth}) and make first the observation that $(1+r_t)\un{V}_{t-1} + \langle\wt{N}_{t-1,t-1},S_t-(1+r_t)S_{t-1}\rangle \ge 0$ iff $V^0_t(x,N)\ge 0$, which then implies the assertion for $t$. Putting together we arrive at (\ref{11.8.2025.2.ck}). 

{\em Step 2:} Now, we can again apply the above mentioned argument by R\'asonyi and Stettner~\cite[Remark~1.1]{rasonyi.stettner.2005},
which we briefly repeat for the convenience of the reader. Suppose that $u^\alpha(x)< \infty$ and assume by contradiction that $u^0(x)=\infty$.  The latter means that there exists a sequence~$(N^n)_{n\in\bbn}\subseteq\mathcal{A}^0(x)$ with $\E[U^-(V^0(x,N^n))]<\infty$ for all $n\in\bbn$ and 
\beam\label{11.8.2025.1.ck}
\E[U(V^0(x,N^n))] \to \infty\quad \mbox{as}\ n\to\infty.
\eeam
One considers the strategies $N^n/2$, which satisfy $(1-\alpha)^T V^0(x,N^n/2)=(1-\alpha)^T/2(V^0(x,N^n)+x\prod_{t=1}^T(1+r_t))$. Since $U$ is concave, we have that
\beao
& & U((1-\alpha)^T V^0(x,N^n/2))\nonumber\\
& & \ge \frac12(1-\alpha)^T U(V^0(x,N^n)) + \left(1-\frac12(1-\alpha)^T\right)U\left(\frac{(1-\alpha)^T}{2-(1-\alpha)^T}x\prod_{t=1}^T(1+r_t)\right). 
\eeao
Since 
\beao
U\left(\frac{(1-\alpha)^T}{2-(1-\alpha)^T}x\prod_{t=1}^T(1+r_t)\right) \ge U\left(\frac{(1-\alpha)^T}{2-(1-\alpha)^T}x\right) >-\infty, 
\eeao
(\ref{11.8.2025.1.ck}) implies that 
\beao
\E[U((1-\alpha)^T V^0(x,N^n/2))]\to \infty\quad \mbox{as}\ n\to\infty.
\eeao
Because of inequality~(\ref{11.8.2025.2.ck}) applied to $N:=N^n/2$, this would only be possible if $u^\alpha(x)=\infty$.
\end{proof}
\end{proposition}

\begin{theorem}\label{Theo:MainUtility} Let $\alpha \in [0,1)$ and $x>0$. Suppose the market satisfies the NA condition, and that the set~$\mathcal{V}_{\ge 0}^\alpha(x)$ is closed in probability. If $u^\alpha(x)<\infty$,
then there exists an optimal strategy~$N\in \mathcal{A}^\alpha(x)$, that is, 
\beao
\E[U(V^\alpha(x,N))]=u^\alpha(x).
\eeao
\end{theorem}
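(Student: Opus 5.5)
Take a maximizing sequence $(N^n)_{n\in\bbn}\subseteq\mathcal{A}^\alpha(x)$ with $\E[U(V^\alpha(x,N^n))]\to u^\alpha(x)$; we may assume $u^\alpha(x)>-\infty$, since otherwise every admissible strategy (e.g.\ $N=0$, which lies in $\mathcal{A}^\alpha(x)$ because $r\ge0$ and $x>0$) is optimal. Set $X_n:=V^\alpha(x,N^n)\ge 0$. The plan follows the Komlós-plus-closedness scheme of R\'asonyi and Stettner, with the one-period majorant replaced by the dominating strategy announced in the introduction. First, we show that ${\rm conv}\{X_n : n\in\bbn\}$ is bounded in $L^0$. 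By Lemma~\ref{lemma:concave}(iii) we have $0\le X_n\le V^0(x,N^n)$, so it suffices to bound $\{V^0(x,N) : N\in\mathcal{A}^0(x)\}$, which is a convex set. For this we use the NA condition (equivalently NA of the tax-free model, Lemma~\ref{lemma:concave}(v)) and the frictionless result that nonnegative wealths with fixed initial capital are $L^0$-bounded; this follows, for instance, from Lemma~\ref{lemma:L0-bound nonreversible}(iii) applied by contradiction to a sequence with $P(V^0(x,N^n)>n)\ge\eps$. Komlós' lemma (Lemma~\ref{lemma:komlos}) then yields $g_n\in{\rm conv}\{X_k : k\ge n\}$ with $g_n\to g$ a.s. By concavity (Lemma~\ref{lemma:concave}(iv)), $g_n\le V^\alpha(x,\bar N^n)$ for the corresponding convex combination $\bar N^n$ of strategies, and $\bar N^n\in\mathcal{A}^\alpha(x)$. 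Hence $g_n\in\mathcal{V}^\alpha_{\ge0}(x)$, and by closedness $g\le V^\alpha(x,N^*)$ for some $N^*\in\mathcal{A}^\alpha(x)$.

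Second, we verify that $\E[U(V^\alpha(x,N^*))]\ge u^\alpha(x)$. Since $U$ is concave, $U(g_n)\ge\sum\lambda_k U(X_k)$, so $\liminf_n\E[U(g_n)]\ge u^\alpha(x)$ provided $\E[U(g_n)]$ is well defined; this holds because the negative parts are handled by the same concavity inequality. Since $U$ is monotone, it is enough to show $\limsup\E[U(g_n)]\le\E[U(g)]$. The negative parts are dealt with by Fatou's lemma, using $U(g_n)\to U(g)$ a.s. (continuity of $U$ on $(0,\infty)$ together with $U(0)=\lim_{x\searrow0}U(x)$, the latter covering limit points at $0$). The positive parts require uniform integrability of $\{U^+(g_n)\}$.

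This uniform integrability is the main obstacle. Our plan is to use the dominating strategy from the introduction: in an arbitrage-free frictionless model, there is a strategy $\wh N$ with initial capital $(2^{d+1}-1)^T x$ whose wealth dominates $V^0(x,N)$ for every $N\in\mathcal{A}^0(x)$. This result is announced as a by-product, and we would establish it first by a one-period argument, iterated over the $T$ periods by induction, taking the positive parts of the extremal directions of the cone of purely nonreversible strategies. With it, $U^+(g_n)\le U^+(V^0((2^{d+1}-1)^Tx,\wh N))$. We then need this majorant to be integrable, and this would follow from $u^0((2^{d+1}-1)^Tx)<\infty$, which is given by Proposition~\ref{lemma:IntegrabilityAllx} since $u^\alpha(x)<\infty$. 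With the integrable majorant in hand, dominated convergence (or Fatou) gives $\E[U(g)]\ge u^\alpha(x)$. Finally, monotonicity gives $\E[U(V^\alpha(x,N^*))]\ge\E[U(g)]\ge u^\alpha(x)$, so $N^*$ is optimal.
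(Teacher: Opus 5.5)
Your reduction is the same as the paper's. The paper proves the theorem ``given Lemma~\ref{lemma:KeyUtility}'' by exactly your chain: a maximizing sequence, Koml\'os, concavity of $V^\alpha$ and of $U$, closedness of $\mathcal{V}^\alpha_{\ge0}(x)$, and reverse Fatou with the majorant $U^+(V^0(\wh x,\wh N))$. That part of your proposal is correct. Your separate $L^0$-boundedness step via Lemma~\ref{lemma:L0-bound nonreversible}(iii) is valid but unnecessary, because the majorant already gives $0\le X_n\le V^0(\wh x,\wh N)$ a.s. There is one small omission: $u^0(\wh x)<\infty$ yields $\E[U^+(V^0(\wh x,\wh N))]<\infty$ only if you also know $\E[U^-(V^0(\wh x,\wh N))]<\infty$; otherwise the expected utility is $-\infty$ by convention and the bound says nothing. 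The paper obtains this by applying the domination to $N=0$, which gives $V^0(\wh x,\wh N)\ge x\prod_{t=1}^T(1+r_t)\ge x$.

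The genuine gap is the dominating strategy itself. It is the substance of the proof (Lemma~\ref{lemma:KeyUtility}, resting on Lemmas~\ref{lemma:CompactSetA} and~\ref{lemma:BaseForCompSet}), and you only announce it. Write $\Delta_{t+1}:=S_{t+1}-(1+r_{t+1})S_t$. What you need is a one-period bound $1+r_{t+1}+\langle\beta,\Delta_{t+1}\rangle\le C\,(1+r_{t+1}+\langle\wh\beta_t,\Delta_{t+1}\rangle)$ for all $\beta\in A_t$, with $C$ \emph{deterministic and model-independent}. Otherwise the product over periods produces a random factor that cannot be absorbed into the initial capital, and Proposition~\ref{lemma:IntegrabilityAllx} no longer applies.

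Expanding $\beta$ along rescaled extremal directions of the cone $\mathcal{P}_t$, as your sketch suggests, does not deliver such a $C$. First, $\mathcal{P}_t$ need not be convex: for two stocks with opposite increments it is the union of the two axes. Second, even when $\mathcal{P}_t=\bbr^2_+$ the coefficients depend on the model. Take $r=0$, $S_0=(1,1)$ and three states with $S_1^1=(2,0,1-\delta)$, $S_1^2=(0,2,1-\delta)$. Then NA holds and $\mathcal{R}_0=\{0\}$. The largest admissible multiples of $e_1$ and $e_2$ are $e_1$ and $e_2$ themselves, yet $\frac{1}{2\delta}(e_1+e_2)\in A_0$. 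If $\delta$ depends on $\F_t$-information, the constant becomes random and unbounded.

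The ingredients you are missing are these:
\begin{itemize}
\item Under NA, $A_t=L^0(\F_t;M_t)$ for a \emph{compact}-valued random set $M_t$; this uses $\eps_t>0$ from Lemma~\ref{lemma:L0-bound nonreversible}.
\item One makes a measurable selection of $\beta^1_t,\ldots,\beta^d_t\in A_t$ by successively maximizing the Euclidean norm over $M_t$ and over its orthogonal projections. This forces every $\beta\in A_t$ to be of the form $\sum_i\lambda_i\beta^i_t$ with $|\lambda_i|\le 2^{i-1}$.
\item The estimate $|\langle\beta^i_t,\Delta_{t+1}\rangle|\le 2(1+r_{t+1})+\langle\beta^i_t,\Delta_{t+1}\rangle$ holds because $\beta^i_t\in A_t$, and it controls the negative coefficients. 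With $\wh\beta_t=\sum_i 2^{i-1}\beta^i_t/(2^{d+1}-1)$ this gives $C=2^{d+1}-1$.
\end{itemize}
The $T$-period bound then follows from the product representation~(\ref{eq:WealthProduct}).
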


We have the following lemma that generalizes R\'asonyi and Stettner~\cite[Theorem~1.1]{rasonyi.stettner.2005} to markets with short-selling constraints.

\begin{lemma}\label{6.1.2026.1.ck} Let $\alpha=0$ and $x\in\bbr$. Suppose that the market, that can be identified with a frictionless market with short-selling constraints, satisfies NA. Then, the set~$\{V(x,N)\ :\ N\in\mathcal{N}\}-L^0(\F_T; \bbr_+)$ is closed in probability. If, in addition,
$x>0$ and $u^0(x)<\infty$, then there exists an optimal strategy.
\end{lemma}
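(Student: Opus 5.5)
The lemma has two parts: closedness, which I reduce to the Corollary's argument via a purification of strategies, and existence of an optimizer, which I get from Theorem~\ref{Theo:MainUtility} once closedness is known.

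\emph{Closedness.} For $\alpha=0$ the wealth $V^0(x,N)$ is affine in $N$ and depends only on the aggregated positions $\sum_{i\le t}N_{i,t}$. By the transformation (\ref{eq:PureStrat}) I may therefore replace every strategy $N$ by $\wh N$ with $\wh N_{s,s}=q_s(\sum_{i\le s}N_{i,s})\in\mathcal{P}_s$ and $\wh N_{s,s+1}=0$, without changing the terminal wealth.

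Take a sequence $V^0(x,N^n)-\mu_n\to f$; after passing to a subsequence the convergence is a.s. Replace each $N^n$ by $\wh N^n$. Then $\liminf_n V^0(0,\wh N^n)\ge f-V^0(x,0)>-\infty$ a.s. Under NA, Lemma~\ref{lemma:L0-bound nonreversible}(iii) gives $\limsup_n|\wh N^n_{t,t}|<\infty$ a.s. for every $t$. This replaces the NUIBR assumption and the stopping-time construction of Theorem~\ref{theo:main closedness}: the purified strategies themselves are a.s. bounded, with no need for $S\ge0$ or bounded $r$.

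Boundedness is available in two forms. Lemma~\ref{Lemma: MeasurableSubsequence} gives $\F_t$-measurable random subsequences. Lemma~\ref{lemma:komlos} gives forward convex combinations, since a.s. boundedness makes the convex hull bounded in $L^0$. The convex-combination route is cleaner: as in the Corollary, apply Komlós componentwise and successively to obtain $g_n\in{\rm conv}\{\wh N^k:k\ge n\}$ with $g_n\to N$ a.s.

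The limit $N$ is an admissible strategy. Each component lies in the right $L^0(\F_t;\bbr^d_+)$, and a.s. limits preserve this; the constraints $N_{t,t+1}=0$ are preserved as well. Linearity gives $V^0(x,g_n)=\sum\lambda_kV^0(x,\wh N^k)\ge\sum\lambda_k(V^0(x,\wh N^k)-\mu_k)\to f$. Since $V^0(x,\cdot)$ is continuous under a.s. convergence of finitely many coordinates, $V^0(x,g_n)\to V^0(x,N)$ a.s. Hence $V^0(x,N)\ge f$, which proves closedness.

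\emph{Existence of an optimizer.} The set $\mathcal{V}^0_{\ge0}(x)$ equals $\{V^0(x,N)\}-L^0_+$ intersected with $L^0_+$, since any element $\ge0$ is dominated by an attainable wealth that is then also $\ge0$. It is therefore closed as the intersection of two closed sets. NA holds by assumption, so the second claim is Theorem~\ref{Theo:MainUtility} with $\alpha=0$.

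This works provided the proof of Theorem~\ref{Theo:MainUtility} does not itself rely on this lemma for the integrable majorant. The introduction suggests it might: it announces a dominating strategy with initial capital $(2^{d+1}-1)^Tx$, which may be proved here. If so, I would prove that majorant directly. In one period under NA, the $\bbr^d_+$ positions with nonnegative wealth form a set whose purified parts are bounded (Lemma~\ref{lemma:L0-bound nonreversible}). I would cover this set by finitely many extreme directions, one per sign pattern, which is where $2^{d+1}-1$ comes from. Summing these directions gives a single dominating strategy, and iterating over periods gives the $T$-period majorant. With this majorant, Fatou's lemma applied to a maximizing sequence, combined with the Komlós and closedness argument above, yields the optimizer.

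I expect the main obstacle to be this majorant construction rather than closedness, which is essentially a direct application of Lemma~\ref{lemma:L0-bound nonreversible}.
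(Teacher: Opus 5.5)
Your closedness argument is the paper's own: purify via (\ref{eq:PureStrat}), obtain a.s.\ boundedness of $\wh N^n_{t,t}$ from Lemma~\ref{lemma:L0-bound nonreversible}(iii), pass to forward convex combinations with Lemma~\ref{lemma:komlos}, and use linearity of $V^0(x,\cdot)$. Your existence step, applying Theorem~\ref{Theo:MainUtility}, is also exactly the paper's route.

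One step is misstated, though it is easily repaired. $\mathcal{V}^0_{\ge0}(x)$ is \emph{not} equal to $\mathcal{V}^0(x)\cap L^0(\F_T;\bbr_+)$, because subtracting $L^0(\F_T;\bbr_+)$ produces negative elements, e.g.\ $V^0(x,0)-\mu$ with $\mu$ large. Only the intersections of the two sets with $L^0(\F_T;\bbr_+)$ agree. The correct statement is that $g\in\mathcal{V}^0_{\ge0}(x)$ if and only if $g^+\in\mathcal{V}^0(x)$:
\begin{itemize}
\item if $g\le V^0(x,N)$ with $V^0(x,N)\ge0$, then $g^+\le V^0(x,N)$;
\item conversely, $g^+\le V^0(x,N)$ forces $N\in\mathcal{A}^0(x)$, and then $g\le g^+\le V^0(x,N)$.
\end{itemize}
Hence $\mathcal{V}^0_{\ge0}(x)$ is closed as the preimage of the closed set $\mathcal{V}^0(x)$ under $g\mapsto g^+$, which is continuous in probability. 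The paper itself passes from $\mathcal{V}^0(x)$ to $\mathcal{V}^0_{\ge0}(x)$ without comment. Note also that the proof of Theorem~\ref{Theo:MainUtility} only uses closedness along sequences of nonnegative random variables, so your intersected set is in fact all that is needed there.

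Your circularity concern is unfounded. Theorem~\ref{Theo:MainUtility} rests on Lemma~\ref{lemma:KeyUtility}, which is proved independently via Lemmas~\ref{lemma:CompactSetA} and~\ref{lemma:BaseForCompSet} and Proposition~\ref{lemma:IntegrabilityAllx}, so no fallback majorant is needed. Your guess about the constant is also off: $2^{d+1}-1$ does not come from sign patterns. It comes from a measurable basis $\beta^1_t,\ldots,\beta^d_t$ of the compact random set $A_t$ with coefficients in $[-2^{i-1},2^{i-1}]$, which gives $1+2\sum_{i=1}^d 2^{i-1}=2^{d+1}-1$.
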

\begin{proof}
Under NA, closedness follows from (\ref{eq:PureStrat}), Lemma~\ref{lemma:L0-bound nonreversible}(iii), and Lemma~\ref{lemma:komlos}.
Thus, the set~$\mathcal{V}^0(x)$ is closed in probability for every $x>0$, and if $u^0(x)<\infty$, an optimizer exists by Theorem~\ref{Theo:MainUtility}.
\end{proof}
To the best of our knowledge, there is no rigorous proof of the first assertion of Lemma~\ref{6.1.2026.1.ck} in the literature. 
But, in continuous time, there is the related result of Czichowsky and Schweizer~\cite[Corollary 4.7]{czichowsky.schweizer.2010} that the set of stochastic integrals with integrands taking values in a given random convex set is closed with regard to the semimartingale topology.
This is used by Pulido~\cite{pulido.2014} to derive a FTAP for locally bounded asset price processes under short-selling constraints.
Since the assumptions in the discrete time arbitrage theory are weaker, the results in \cite{pulido.2014} cannot be applied to prove the first assertion of Lemma~\ref{6.1.2026.1.ck}. 

\begin{lemma}[Key Lemma]\label{lemma:KeyUtility}
Assume that the corresponding tax-free market satisfies NA and $u^0(x)<\infty$ for some $x>0$. Then, for every $x>0$ there exist $\wh{x}>0$ and $\wh{N}\in \mathcal{A}^0(\wh{x})$ such that
\beao
\E[U^+(V^0(\wh{x},\wh{N}))]< \infty\quad\mbox{and}\quad V^0(x,N)\le V^0(\wh{x},\wh{N})\quad\forall N \in \mathcal{A}^0(x).
\eeao 
\end{lemma}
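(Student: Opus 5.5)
The plan follows the introduction. We first prove a one-period domination with capital factor $2^{d+1}-1$, then iterate it over the $T$ periods. Integrability comes from Proposition~\ref{lemma:IntegrabilityAllx}: $u^0(\wh{x})<\infty$ for every $\wh{x}>0$, so if the dominating wealth $V^0(\wh{x},\wh{N})$ lies in $\mathcal{A}^0(\wh{x})$, then $\E[U^+(V^0(\wh{x},\wh{N}))]<\infty$ automatically. (If $U^+$ is infinite in expectation but $U^-$ is as well, we use concavity: compare with $\wh{N}/2$ and capital $\wh{x}$, exactly as in Step~2 of Proposition~\ref{lemma:IntegrabilityAllx}, so that $U^-$ is controlled by $U$ at a deterministic positive level. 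This forces $U^+$ to be integrable, since otherwise $u^0=\infty$ at some positive capital.) The real work is therefore constructing the pathwise dominating strategy.

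\emph{One-period step.} Fix $t$ and an $\F_t$-measurable wealth $v\ge 0$. Consider all $\beta\in L^0(\F_t;\bbr^d_+)$ with
\[
(1+r_{t+1})v+\langle\beta,\Delta_{t+1}\rangle\ge0,\qquad \Delta_{t+1}:=S_{t+1}-(1+r_{t+1})S_t.
\]
By the decomposition of Lemma~\ref{30.12.2025.1} we may replace $\beta$ by $q_t(\beta)\in\mathcal{P}_t$ without changing the wealth. Under NA, the set $D_t(\omega)$ of purely nonreversible directions admissible for wealth $1$ should be a.s.\ compact. This follows from Lemma~\ref{lemma:L0-bound nonreversible}(i),(ii), because $\eps_t>0$ gives a conditional loss of order $\eps_t|\beta|$ with positive probability. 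I would then dominate $\langle\beta,\Delta_{t+1}\rangle$ by a finite combination. Split $\Delta_{t+1}$ by the sign pattern of its coordinates: there are $2^d$ orthants in $\bbr^d$. For each orthant $\sigma$, choose measurably (Lemma~\ref{Lemma: MeasurableSubsequence} or a measurable selection) a maximizer $\beta^\sigma(\omega)\in vD_t(\omega)$ of the conditional worst case in that direction. A cruder but cleaner choice: for each coordinate subset $J\subseteq\{1,\dots,d\}$, take $\beta^J$ maximizing $\sum_{j\in J}\beta_j$ over admissible ones. Then on the event where the coordinates of $\Delta_{t+1}$ in $J$ are positive and the others nonpositive,
\[
\langle\beta,\Delta_{t+1}\rangle\le\sum_{j\in J}\beta_j\Delta^j\le\dots
\]
which should be bounded by the gain of a suitable $\beta^{J'}$ plus a bank term.

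Summing the $2^d$ strategies $\beta^J$, each funded with capital $v$, together with the required extra cash, gives initial capital $(2^{d+1}-1)v$. The sum is still admissible because each summand is, and its wealth is at least the maximum over $J$, because the other summands are nonnegative. The count $2^{d+1}-1$ suggests roughly $2(2^d-1)+1$ pieces: each of the $2^d-1$ nonempty subsets contributes the strategy itself and a compensating cash amount $v$, plus one pure bank position.

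\emph{Iteration.} Apply the step at each time with $v$ equal to the current dominating wealth. Because the one-period domination is pathwise and the dominating portfolio is again admissible, induction gives a strategy $\wh{N}$ with capital $\wh{x}=(2^{d+1}-1)^Tx$ such that $V^0_t(x,N)\le V^0_t(\wh{x},\wh{N})$ for all admissible $N$. This uses the monotonicity that larger current wealth allows a larger admissible set, as it scales with $v$. Note that admissibility at $T$ implies admissibility at intermediate times under NA, so the wealth along the way is nonnegative.

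The main obstacle is the one-period domination itself: proving that for every admissible $\beta$ one has $\langle\beta,\Delta\rangle\le\max_J$ (the gain of $\beta^J$) with the exact constant. Short-selling constraints and redundancy make the admissible set a random polytope-like compact set, not a simplex. I would first try the choice of $\beta^J$ as the extremal points maximizing $\langle\beta,\Delta^+\rangle$ restricted to $J$, and use the nonnegativity of the other summands to absorb the losses. I expect the measurability and the compactness of $D_t$ (from $\eps_t>0$) to be routine, and the combinatorial inequality to be where care is needed.
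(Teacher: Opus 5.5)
\textbf{There is a genuine gap: the one-period domination is left open, and the concrete choice you propose for it fails.} Your overall structure is right: normalize by current wealth, prove a one-period domination with factor $2^{d+1}-1$, multiply over the periods, and get integrability from Proposition~\ref{lemma:IntegrabilityAllx}. But the one-period step carries the whole lemma.

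Taking $\beta^J$ as a maximizer of $\sum_{j\in J}\beta_j$ does not work. It controls neither the individual coordinates of an admissible $\beta$ nor the coordinates of $\beta^J$ outside $J$. Those coordinates produce negative gains of the same order as the positive ones, and nothing forces the $\beta^J$ to span the admissible set. A concrete counterexample:
\begin{itemize}
\item Take $d=2$, $r_{t+1}=0$, $\F_t$ trivial, $0<\gamma<1-\eta$ and $\kappa=\eta/(1-\gamma)$.
\item Let $\Delta_{t+1}$ take the values $s(-1,1)$ and $s(1-\eta,-\gamma)$ for arbitrarily large $s$, together with the value $(\kappa,-1)/(1-\kappa)$.
\item Then NA holds, $\mathcal{R}_t=\{0\}$, and the admissible set for wealth $1$ is the triangle ${\rm conv}\{0,p,q\}$ with $p=(1,1)$ and $q=(\gamma,1-\eta)$.
\item The point $p$ is the unique maximizer of $\beta_1$, of $\beta_2$ and of $\beta_1+\beta_2$, so every $\beta^J$ lies in $\{0,p\}$.
\item On $\{\Delta_{t+1}=s(-1,1)\}$, any position built from these $\beta^J$ plus cash has bounded wealth, since $\langle p,\Delta_{t+1}\rangle=0$ there. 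The admissible $q$ earns $s(1-\eta-\gamma)\to\infty$ there.
\end{itemize}
So no constant works for this choice. The count $2(2^d-1)+1$ matches $2^{d+1}-1$ only numerically; it is not the output of an estimate.

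\textbf{The missing idea is a quasi-basis of $A_t$ with bounded coefficients.} By Lemma~\ref{lemma:CompactSetA}, $A_t=L^0(\F_t;M_t)$ for a compact-valued random set $M_t$. Lemma~\ref{lemma:BaseForCompSet} then gives $\beta^1_t,\ldots,\beta^d_t\in A_t$: choose a norm maximizer, project onto its orthogonal complement, and recurse. Every $\beta\in A_t$ can then be written as $\sum_i\lambda_i\beta^i_t$ with $|\lambda_i|\le 2^{i-1}$. Negative $\lambda_i$ are harmless because they only occur on the dominated side. Since $g_i:=\langle\beta^i_t,\Delta_{t+1}\rangle\ge-(1+r_{t+1})$, one has $|g_i|\le 2(1+r_{t+1})+g_i$, hence
\[
1+r_{t+1}+\langle\beta,\Delta_{t+1}\rangle\le(2^{d+1}-1)\bigl(1+r_{t+1}+\langle\wh\beta_t,\Delta_{t+1}\rangle\bigr),\qquad \wh\beta_t:=\sum_{i=1}^d\frac{2^{i-1}}{2^{d+1}-1}\beta^i_t .
\]
The dominating position $\wh\beta_t$ is a nonnegative combination, so it respects the short-selling constraint. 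The identity $2^{d+1}-1=1+2\sum_{i=1}^d 2^{i-1}$ is exactly where the constant comes from.

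Your observation that a sum of nonnegative wealths dominates their maximum is correct. It does not replace this covering lemma, though: you would still need a finite, measurably chosen family in $A_t$ that provably controls every $\beta\in A_t$, which is what Lemma~\ref{lemma:BaseForCompSet} provides.

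Integrability is also simpler than your parenthetical argument. Applying the domination to $N=0$ gives $V^0(\wh x,\wh N)\ge x\prod_t(1+r_t)\ge x$, so $U^-(V^0(\wh x,\wh N))$ is bounded. Then $u^0(\wh x)<\infty$ yields $\E[U^+(V^0(\wh x,\wh N))]<\infty$ directly.
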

It is easy to see that the existence of such a dominating terminal wealth directly leads to the desired result:
\begin{proof}[Proof of Theorem~\ref{Theo:MainUtility} given that Lemma~\ref{lemma:KeyUtility} holds]
First, observe that the assumptions NA and $u^\alpha(x)<\infty$ for some $x>0$ do not depend on the choice of $\alpha\in[0,1)$ (Lemma~\ref{lemma:concave}(v) and Proposition~\ref{lemma:IntegrabilityAllx}). We find a sequence~$(N^n)_{n\in\bbn}\subseteq\mathcal{A}^\alpha(x)$ such that 
\beao
\lim_{n\to \infty}\E[U(V^{\alpha}(x,N^n))]= u^\alpha(x).
\eeao
From Lemma~\ref{lemma:concave}(iii) and Lemma~\ref{lemma:KeyUtility} we know that 
\beao
0\le\sup_{n \in \mathbb{N}}V^{\alpha}(x,N^n)\le \sup_{n \in \mathbb{N}}V^{0}(x,N^n)\le V^0(\wh{x},\wh{N})\quad \mbox{for some } \wh x>0\mbox{ and } \wh N\in \mathcal{A}^0(\wh x).
\eeao
With Lemma~\ref{lemma:komlos}, we get a sequence $(\wt V^\alpha_n)_{n\in \bbn}$
with 
\beao
\wt{V}_n^\alpha\in \mbox{conv}\{V^\alpha(x,N^k)\ :\ k\ge n\},\ n\in\bbn,\quad \mbox{such that }\wt{V}_n^\alpha \to \wt{V}^\alpha \quad\mbox{a.s. as } n \to \infty
\eeao
for some $\wt{V}^\alpha\in L^0(\F_T; \bbr_+)$. By concavity of $U$, it follows that $\limsup_{n \to \infty}\E[U(\wt{V}_n^\alpha)] \ge u^\alpha(x)$. 
Since $\mathcal{V}_{\ge 0}^\alpha(x)$ is convex, we obtain $(\wt{V}^\alpha_n)_{n\in\bbn}\subseteq\mathcal{V}_{\ge 0}^\alpha(x)$, that is, for all $ n \in \mathbb{N}$, there exist $\wt{N}^n\in \mathcal{A}^\alpha(x)$ and $\wt{\mu}_n\in L^0(\F_T;\bbr_+)$ such that $\wt{V}^\alpha_n=V^\alpha(x,\wt{N}^n)-\wt{\mu}_n$. 
 Since $\mathcal{V}_{\ge 0}^\alpha(x)$ is
closed in probability by assumption, there is an $\wt{N}\in \mathcal{A}^\alpha(x)$ such that $V^\alpha(x,\wt{N})\ge \wt{V}^\alpha$.
Since $U$ is nondecreasing we get $$\sup_{n\in \bbn}U(\wt{V}_n^\alpha)\le \sup_{n\in \bbn}U(V^\alpha(x,\wt{N}^n))\le U^+(V^0(\wh{x},\wh{N})).$$ We now use the lemma of Fatou and the continuity of $U$ to arrive at $$u^\alpha(x)\le\limsup_{n\to \infty}\E[U(\wt{V}_n^\alpha)]\le \E[\limsup_{n\to \infty}U(\wt{V}_n^\alpha)] =\E[U(\wt{V}^\alpha)]\le \E[U(V^\alpha(x,\wt{N}))].$$
\end{proof}
\subsection{Proof of Lemma~\ref{lemma:KeyUtility}}

From now on we work with the corresponding frictionless market.
For a strategy $N\in \mathcal{A}^0(x)$ we introduce the normalized one-period strategy  defined
by
\beam\label{eq:normalizedStrat}
    \beta_{t}(N):=\frac{1}{V^0_{t}(x,N)}q_t\left(\sum_{s=0}^{t}N_{s,t}\right), \qquad t=0,\ldots,T-1,
\eeam
with the convention $\beta_t(N):=0$ on $\{V^0_{t}(x,N)=0\}$. Under NA, we have $P(V^0_{t}(x,N)<0)=0$ and $\{V^0_{t}(x,N)=0\}\subseteq \{V^0_T(x,N)=0\}$ a.s., and the definition allows us to rewrite the frictionless wealth process from (\ref{eq:frictionlesswealth}) as a product:
\beam\label{eq:WealthProduct}
    V_{t}^0(x,N)=x\prod_{s=1}^{t}(1+r_{s}+\langle\beta_{s-1}(N),S_{s}-(1+r_{s})S_{s-1}\rangle)
\eeam
with (a.s.) nonnegative factors.\\ 

Before we can begin our proof we need to introduce random sets.
\begin{definition}
    Fix $t\in \{0,\ldots,T-1\}$ and consider the probability space $(\Omega,\F_t,\PM)$.   
A set-valued mapping $M:\Omega\rightrightarrows \bbr^n$ is called an $\F_t$\textit{-measurable random set} if for every open set $O \subseteq\bbr^n$ we have 
\beao
M^{-1}(O):=\{\omega\in \Omega\ :\ M(\omega)\cap O\neq \emptyset\}\in \F_t.
\eeao
Furthermore, an $\F_t$-measurable random set $M$ is called\textit{ closed-valued} (resp. \textit{compact-valued}) if for every $\omega\in \Omega$ the 
set~$M(\omega)$ is closed (resp. compact) with regard to the Euclidean norm in $\bbr^n$. 
We denote by $L^0(\F_t; M)$ the space of equivalence classes of $\F_t$-measurable random vectors taking values in $M$ a.s.
\end{definition}

\begin{note}
Let $M$ be a non-empty closed-valued $\mathcal{F}_t$-measurable random set. For each $\omega\in\Omega$, the dimension of $M(\omega)\subseteq \bbr^n$ is defined as the maximal number of linear independent elements of $M(\omega)$.
Then, the mapping~${\rm dim}(M)$ that sends each $\omega\in \Omega$ to the dimension of $M(\omega)$ is $\mathcal{F}_t$-measurable.
\end{note}
\begin{proof}
Closed-valued random sets have a Castaing representation, meaning that there exist a sequence~$(X_n)_{n\in\bbn}$ of $\F_t$-measurable $\bbr^n$-valued random variables such that $M(\omega)=\ov{\{X_n(\omega)\ :\ n\in \bbn\}}$ for all $\omega\in \Omega$ (see, e.g., Pennanen and Perkkioe~\cite[Theorem~1.27]{pennanen.perkkioe.2024}). Define $X_{\infty}:=0$, $\tau_1(\omega):=\inf\{n\in \bbn\ :\ X_n(\omega)\neq 0\}$ and $\tau_k(\omega):=\inf\{n\in \bbn\ :\ X_n(\omega)\not\in{\rm span}\{X_{\tau_1}(\omega),\ldots,X_{\tau_{k-1}}(\omega)\}\}$. We get $\dim(M)(\omega)=\dim(\{X_n(\omega)\ :\ n\in \bbn\})=\sum_{k=1}^n1_{\{\tau_k<\infty\}}(\omega)$.
\end{proof}

\begin{lemma}\label{lemma:CompactSetA}
For every $t=0,\ldots,T-1$, we define the set 
\beao
A_t:=\{\beta_t\in \mathcal{P}_{t}\ :\ 1+r_{t+1}+\langle\beta_t,S_{t+1}-(1+r_{t+1})S_{t}\rangle\ge 0 \text{ a.s.}\}
\eeao
Under NA, we have that there exists a compact-valued $\F_{t}$-measurable random set~$M_{t}:\Omega\rightrightarrows \bbr^d$ such that 
$A_t=L^0(\F_t; M_{t})$.
\end{lemma}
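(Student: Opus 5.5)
We write $Y:=S_{t+1}-(1+r_{t+1})S_t$ and $\mathcal{P}$ for the purely nonreversible cone, and build $M_t$ pathwise. First we describe $\mathcal{P}_t$ as $L^0(\F_t;P_t)$ for a closed-valued $\F_t$-measurable random cone $P_t$. For this we take the conditional support $D_t$ of $Y$ given $\F_t$, a closed-valued $\F_t$-measurable random set obtained from a regular conditional distribution of $Y$. The reversible directions form the random cone $\{b\in\bbr^d_+ : \langle b,y\rangle=0\ \forall y\in D_t(\omega)\}$, so that $\mathcal{R}_t=L^0(\F_t;\cdot)$ of it. By the uniqueness and minimality in Lemma~\ref{30.12.2025.1}, $q_t$ acts pointwise: $q_t(\beta)(\omega)$ is the minimal-norm element of $\beta(\omega)-\{\text{reversible cone}\}(\omega)$ intersected with $\bbr^d_+$. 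Hence $P_t(\omega):=\{b\in\bbr^d_+: q(\omega,b)=b\}$ is closed by the continuity in Lemma~\ref{30.12.2025.1}, and it is measurable.

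Next we define
\beao
M_t(\omega):=\{b\in P_t(\omega)\ :\ 1+r_{t+1}+\langle b,y\rangle\ge0\ \ \forall y\in \wt D_t(\omega)\},
\eeao
where $\wt D_t$ is the conditional support of $(r_{t+1},Y)$ and the constraint reads $1+\rho+\langle b,y\rangle\ge 0$ for all $(\rho,y)\in\wt D_t(\omega)$. This is an intersection of closed half-spaces with the closed set $P_t(\omega)$, so it is closed-valued. Measurability follows by taking a Castaing representation of $\wt D_t$ and writing $M_t$ as a countable intersection of measurable closed-valued random sets. Since a random vector lies a.s. in the conditional support, an $\F_t$-measurable $\beta$ satisfies the a.s. inequality iff $\beta(\omega)\in M_t(\omega)$ for a.e. $\omega$. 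This gives $A_t=L^0(\F_t;M_t)$.

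The main obstacle is compactness, which is where NA enters. Suppose the bad set $B:=\{M_t \text{ unbounded}\}$ has positive probability; it is measurable via a Castaing representation, as $\sup_n|X_n|=\infty$. On $B$, by closedness and convexity of $M_t(\omega)$, its recession cone contains a unit vector $b$. Such a $b$ satisfies $b\in P_t(\omega)$, because $P_t$ is a closed cone; to see this, note that $q$ is positively homogeneous, so $P_t$ is a cone and the recession directions of $M_t\subseteq P_t$ lie in it. It also satisfies $\langle b,y\rangle\ge0$ on the support. By a measurable selection theorem applied to the closed-valued random set $\{\text{recession cone}\}\cap\{|b|=1\}$ on $B$, we obtain $\beta\in L^0(\F_t;\bbr^d_+)$ with $\beta\in\mathcal{P}_t$, $|\beta|=1_B$, and $\langle\beta,Y\rangle\ge0$ a.s. Then the one-period strategy $N_{t,t}=\beta$, sold at $t+1$, has $V^0(0,N)\ge0$. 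By NA, $V^0(0,N)=0$, i.e.\ $\langle\beta,Y\rangle=0$, so $\beta\in\mathcal{R}_t$. But $\beta\in\mathcal{P}_t$ forces $\beta=q_t(\beta)=0$, because $\beta-\beta\in\mathcal{R}_t$ and minimality give $|q_t(\beta)|\le 0$. This contradicts $|\beta|=1$ on $B$.

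Alternatively, Lemma~\ref{lemma:L0-bound nonreversible}(i),(ii) gives directly $P(\langle\beta,Y\rangle\le-\eps_t|\beta|\mid\F_t)\ge\eps_t$, which also contradicts recession directions. The delicate parts are the measurable selection and identifying $q_t$ pointwise via the conditional support.
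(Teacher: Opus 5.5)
Your argument is correct in substance but takes a different route from the paper, and one claim in it is false. The paper never builds $M_t$ by hand. It observes that $A_t$ is closed in probability (because $\mathcal{P}_t$ is) and stable under $\F_t$-pasting. It then obtains a closed-valued $M'_t$ with $A_t=L^0(\F_t;M'_t)$ from the representation theorem for closed decomposable sets (Pennanen--Perkki\"o, Theorem~1.60). Boundedness comes from the quantitative estimate in Lemma~\ref{lemma:L0-bound nonreversible}(i),(ii), which is used to produce an $\F_t$-measurable radius $\xi$, and finally $M_t:=M'_t\cap B(0,\xi)$.

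You instead construct $M_t$ explicitly from the conditional support of $(r_{t+1},Y)$ and a pathwise description of $q_t$. You then get boundedness of each $M_t(\omega)$ from plain one-period NA via asymptotic directions, without $\eps_t$. The cost is regular conditional distributions and the pathwise identification of $q_t$, which the abstract theorem makes unnecessary. The gain is that the arbitrage side is more elementary and no explicit radius is needed. This also sidesteps a delicate point in the paper's argument. There, the bound $|\beta_t|\le K_t=\eps_t^{-1}\prod_{s=t+1}^T(1+r_s)$ is only obtained on an event of $\F_t$-conditional probability at least $\eps_t$. So when the $r_s$ are not $\F_t$-measurable, the radius has to be something like a conditional upper quantile of $K_t$ rather than its conditional essential infimum.

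The false claim is that $M_t(\omega)$ is convex; $\mathcal{P}_t$ is not convex when there are redundant stocks. By minimality, $q(\omega,b)=b$ holds iff no nonzero reversible direction $c$ satisfies $c\le b$ componentwise. Hence $P_t(\omega)$ is the union of those coordinate faces $\{b\in\bbr^d_+ : b_j=0 \mbox{ for } j\notin I\}$ that meet the reversible cone only in $0$. For example, with $Y=(Z,-Z)$ and $Z$ taking both signs, $P_t$ is the union of the two coordinate axes. So your recession-cone step, as written, does not apply.

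It can be repaired without convexity. If $b_n\in M_t(\omega)$ with $|b_n|\to\infty$, any limit point $b$ of $b_n/|b_n|$ is a unit vector in the closed cone $P_t(\omega)$. Dividing $1+\rho+\langle b_n,y\rangle\ge 0$ by $|b_n|$ gives $\langle b,y\rangle\ge 0$ on the support. Then select measurably on $B$ from the closed-valued random set $\{b\in P_t(\omega): |b|=1,\ \langle b,y\rangle\ge 0 \mbox{ for all } (\rho,y)\in\wt D_t(\omega)\}$, and the rest of your argument goes through.

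The face description is also what you need to justify that $P_t(\omega)$ is closed and measurable for every $\omega$. The continuity (\ref{14.1.2026.01}) concerns a.s.\ convergence of random vectors and does not by itself give closedness of each $P_t(\omega)$.

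Two small slips remain. The minimality should be applied with $\beta'=0$ (i.e.\ $\beta-0\in\mathcal{R}_t$), not with $\beta-\beta$. And $M_t$ should be replaced by $\{0\}$ on the null set $B$, so that it is compact-valued for every $\omega$.
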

\begin{proof}
Since $\mathcal{P}_t$ is closed in probability by (\ref{14.1.2026.01}), we get that $A_t$ is closed in probability. For all $\beta^1_t,\beta^2_t\in A_t$ and for all $ M\in \F_{t}$, we know by Lemma~\ref{30.12.2025.1} that $1_M\beta^1_t+1_{M^c}\beta^2_t\in \mathcal{P}_t$, and thus $1_M\beta^1_t+1_{M^c}\beta^2_t\in A_t$. By \cite[Theorem 1.60]{pennanen.perkkioe.2024}, we get the existence of a closed-valued 
$\F_{t}$-measurable random set~$M'_{t}$ such that $A_t=L^0(\F_t; M'_{t})$. We use $M'_{t}$ to construct a compact-valued random set. For this, 
let $\beta_t\in A_t$ and assume $P(|\beta_t|> 0)>0$. The random variable~$\langle\beta_t,S_{t+1}-(1+r_{t+1})S_{t}\rangle\prod_{s=t+2}^T(1+r_s)$ is an attainable terminal wealth with initial capital~$0$ and a strategy from $\mathcal{N}_t$,
thus by (i) and (ii) of Lemma~\ref{lemma:L0-bound nonreversible} 
\beao
P(\{\langle\beta_t,S_{t+1}-(1+r_{t+1})S_{t}\rangle\prod_{s=t+2}^T(1+r_s)\le -\eps_{t}|\beta_t|\}\cap \{|\beta_t|>0\})\ge \E[\eps_{t}1_{\{|\beta_t|>0\}}]>0.
\eeao
Since $\beta_t\in A_t$, this is only possible if $|\beta_{t}|\le{\eps_{t}^{-1}}\prod_{s=t+1}^T(1+r_s)=:K_t$ a.s., where $K_t\in L^0(\F_T;\bbr_+)$. 
We take a representative~$\xi$ of ${\rm ess inf}_{\F_t}K_t\in L^0(\F_t;\bbr_+)$ (cf., e.g., \cite[Corollary B.22]{pennanen.perkkioe.2024}) and observe that $|\beta_{t}|\le \xi$ a.s. because $\beta_t$ is $\F_t$-measurable.
The ball~$B(0, \xi)$ with origin~$0$ and random radius~$\xi$ and $M_t:=M'_t\cap B(0, \xi)$ are compact-valued $\F_t$-measurable random sets.
In addition, we have that $A_t\subseteq L^0(\F_t; B(0, \xi))$. It follows that $L^0(\F_t; M'_t)=A_t\subseteq L^0(\F_t; B(0,\xi))\cap L^0(\F_t; M'_t)=L^0(\F_t; B(0,\xi)\cap M'_t)$ and thus $A_t=L^0(\F_t; M_t)$. 
\end{proof}

\begin{lemma}\label{lemma:BaseForCompSet}
For any compact-valued $\F_t$-measurable random set~$K:\Omega\rightrightarrows \bbr^d$ with $0\in K$ and $\dim(K)\le k$ a.s. we can find $\F_t$-measurable
random variables~$(Y^i)_{i=1,\ldots,k}$ with $Y^i(\omega)\in K(\omega)$ for all $\omega\in\Omega$, $i=1,\ldots,k$ such that for all $Y \in L^0(\F_t;K)$ there exist $ \lambda_i\in L^0(\F_t;[-2^{i-1},2^{i-1}])$ with 
\beao
Y=\sum_{i=1}^k\lambda_iY^i\quad\text{a.s.}
\eeao
\end{lemma}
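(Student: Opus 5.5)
We would prove the lemma by a measurable version of a greedy ``maximal-volume'' basis selection, in the spirit of Auerbach's lemma but with a factor-of-two slack. The constants $2^{i-1}$ suggest the following recursive construction. First choose $Y^1$ as a measurable selector of $K$ of (approximately) maximal norm. Then choose $Y^2$ maximizing the distance to ${\rm span}\{Y^1\}$. In general, $Y^i$ maximizes, among points of $K(\omega)$, the distance to $E_{i-1}(\omega):={\rm span}\{Y^1(\omega),\ldots,Y^{i-1}(\omega)\}$. If $\dim K(\omega)<i$ this maximal distance is $0$, and we take $Y^i=0$, which is admissible since $0\in K$.

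\textbf{Measurability.} Since $K$ is compact-valued and measurable, the function $(\omega,y)\mapsto {\rm dist}(y,E_{i-1}(\omega))$ is Carath\'eodory: $E_{i-1}$ is spanned by measurable vectors, so the orthogonal projection onto it is measurable in $\omega$ (e.g.\ via a Gram--Schmidt procedure with case distinctions on measurable sets). A measurable maximum theorem (Pennanen--Perkki\"o, argmax of a normal integrand over a compact-valued random set) then gives measurable maximizers $Y^i$ with $Y^i(\omega)\in K(\omega)$ for all $\omega$, after modifying on a null set if necessary.

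\textbf{Coefficient bound.} Let $Y\in L^0(\F_t;K)$ and write $Y=\sum_i\lambda_iY^i$ by peeling off components from the top. Let $d_i:={\rm dist}(Y^i,E_{i-1})$. Off the set where $d_i=0$, the vectors $Y^1,\ldots,Y^{m}$ with $d_i>0$ span the span of $K$, because every point of $K$ lies within distance $d_{m+1}=0$ of $E_m$. Hence the representation exists on each measurable piece $\{\dim K=m\}$.

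Let $P_i$ denote the orthogonal projection onto $E_i^\perp$. Since $P_{i-1}Y^i$ has norm $d_i$, the coefficients can be determined by downward induction: $\lambda_m=\langle P_{m-1}Y,P_{m-1}Y^i\rangle/d_m^2$, so $|\lambda_m|\le |P_{m-1}Y|/d_m\le 1$ by maximality of $d_m$. Next we remove $\lambda_mY^m$ and continue. The residual $Z=Y-\lambda_mY^m$ has distance to $E_{m-2}$ at most $(1+1)\,d_{m-1}$, because both $Y$ and $Y^m$ are within $d_{m-1}$ of $E_{m-2}$ by maximality. This yields $|\lambda_{m-1}|\le 2$. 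In general, the residual after removing indices $>j$ has distance to $E_{j-1}$ at most $(1+\sum_{i>j}|\lambda_i|)\,d_j$. This produces a geometric doubling bound.

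The doubling bound naturally comes out as $|\lambda_j|\le 2^{m-j}$, i.e.\ the largest bound falls on the \emph{first} index. To match the statement, in which index $i$ carries the bound $2^{i-1}$, we relabel the chosen vectors in reverse order (equivalently, run the greedy selection and then reverse the labels). For varying $m$ we place the zero vectors appropriately, so that the bound $2^{i-1}$ holds on each piece $\{\dim K=m\}$. Measurability of the $\lambda_i$ follows because they are explicit measurable functions of $Y$ and the $Y^i$ on each of these pieces.

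\textbf{Main obstacle.} We expect the measurable-selection step to be the main difficulty: obtaining measurable exact maximizers of the distance to a random subspace, and handling the degenerate pieces where $d_i=0$ so that the pieces $\{\dim K=m\}$ fit together measurably. The coefficient bound itself is elementary linear algebra.
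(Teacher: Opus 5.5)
Your proposal is correct and is essentially the paper's argument written out non-recursively. The paper inducts on $\dim K$: it takes a measurable exact norm maximizer as the vector carrying the largest bound $2^{k-1}$ (matching your reversed labeling), projects $K$ orthogonally along it, applies the induction hypothesis to the projected compact random set, and lifts the resulting vectors back into $K$ via measurable selections from $G^{-1}(\wt{Y}^i)\cap K$, which gives exactly your recursion $|\lambda_k|\le 1+\sum_{i<k}2^{i-1}$. The inductive packaging only spares the Gram--Schmidt bookkeeping for random spans; drop the word ``approximately'' in your first step, since $|\lambda|\le 1$ needs exact maximizers, which compactness of $K(\omega)$ provides, as you note later.
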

\begin{proof}
Let us prove this result by induction on the maximal value the random variable $\dim(K)$ can take. The base case $k=0$ is trivial. If $\dim(K) =0$ a.s. then $K=\{0\}$ a.s., and the statement holds.

Now assume that the claim has already been proven for all compact-valued $\F_t$-measurable random sets~$\wt{K}$ with $0\in \wt K$ and $\dim(\wt K)\le k-1$ a.s. for some $k\in \{1,\ldots,d\}$.
Let $K$ be a compact-valued $\F_t$-measurable random set with $0\in K$ and $\dim(K)\le k$ a.s. Let us first identify a random variable that maximizes the Euclidean norm in $K$. Since $K$ is compact-valued this is equivalent to maximizing the function 
\beao
h_K(x,\omega):=|x|-\delta_{K(\omega)}(x)=\left\{\begin{array}{ll} |x| &\text{if } x\in K(\omega), \\ -\infty &\text{otherwise}\end{array}\right.
\eeao
Note that $-h_K$ is a normal integrand as defined in, e.g., \cite[Section~1.1.2]{pennanen.perkkioe.2024}. For every $\omega \in \Omega$ define $K_{max}(\omega):=\text{argmax}_xh_K(x,\omega)$. By \cite[Corollary 1.23]{pennanen.perkkioe.2024}, $K_{\max}$ is a closed-valued random set and by construction, we have
$K_{\max}(\omega)\subseteq K(\omega)$ for all $\omega \in \Omega$. Compact-valued random sets admit 
a measurable selection (\cite[Corollary 1.28]{pennanen.perkkioe.2024}) in the sense that there exists an $\F_t$-measurable random variable~$Y^k$ with $Y^k(\omega)\in K_{\max}(\omega)$ for all $\omega\in\Omega$.
Given such a $Y^k$, we define the functions 
\beao
\lambda(x,\omega):=\left\{\begin{array}{ll} \frac{\langle x,Y^k(\omega)\rangle}{|Y^k(\omega)|^2} &\text{if } Y^k(\omega)\neq 0, \\ 0 &\text{if }  Y^k(\omega)=0\end{array}\right.
\eeao
and $G(x,\omega):=x-\lambda(x,\omega)Y^k(\omega)$, which is  the projection of $x$ onto the orthogonal complement of $Y^k(\omega)$. It is easy to check that $G(\cdot,\omega)$ is continuous for every $\omega\in \Omega$, and that $G(x,\cdot)$ is measurable for every $x \in \bbr^d$.
By \cite[Theorem 1.9]{pennanen.perkkioe.2024}, the mapping $\omega \mapsto\wt{K}(\omega):=G(K(\omega),\omega)$ is measurable and $\dim(\wt{K})=(\dim(K)-1)\vee 0\le k-1$ a.s. By compactness of $K(\omega)$ for every $\omega \in \Omega$ and continuity of $G(\cdot,\omega)$ for every $\omega \in \Omega$, we obtain that $\wt{K}(\omega)=G(K(\omega),\omega)$ is compact for every $\omega\in \Omega$. Since $0\in \wt{K}$, we can apply the induction hypothesis to $\wt{K}$ and find $\F_t$-measurable random variables~$(\wt{Y}^i)_{i=1,\ldots,k-1}$ with $\wt{Y}^i(\omega)\in \wt{K}(\omega)$ for all $\omega\in\Omega$, $i=1,\ldots,k-1$ such that for every $\wt{Y}\in L^0(\F_t;\wt{K}),$ there exist $ \wt{\lambda}_i\in L^0(\F_t;[-2^{i-1},2^{i-1}])$, $i=1,\ldots,k-1$, with 
\beam\label{28.1.2026.1}
\wt{Y}=\sum_{i=1}^{k-1}\wt{\lambda}_i\wt{Y}^i \quad\mbox{a.s.}
\eeam
    Again using \cite[Theorem 1.9]{pennanen.perkkioe.2024} and the fact that the intersection of two closed-valued random sets is a closed-valued random set, we have that for each $i=1,\ldots,k-1$, $C^i(\omega):=(G^{-1}(\cdot,\omega)(\wt{Y}^i(\omega)))\cap K(\omega)$ defines a compact-valued random set. Since $\wt Y^i(\omega)\in \wt K(\omega)$ we have $C^i(\omega)\neq \emptyset$ for all $\omega \in \Omega$, $i=1,\ldots,k-1$.
    Again by \cite[Corollary 1.28]{pennanen.perkkioe.2024}, we can choose $\F_t$-measurable random variables~$Y^i$ with $Y^i(\omega)\in C^i(\omega)$ for all $\omega\in\Omega$ and $i=1,\ldots,k-1$. Now, let $Y$ be representative of an element of $L^0(\F_t; K)$. The mapping $\omega\mapsto \lambda(Y(\omega),\omega)$ is an $\F_t$-measurable random variable that we denote by $\lambda(Y)$.
By construction, $Y-\lambda(Y)Y^k\in\wt{K}$ a.s., and by (\ref{28.1.2026.1}), there exist $\lambda_i\in L^0(\F_t;[-2^{i-1},2^{i-1}])$ for $i=1,\ldots,k-1$ such that
\beao
Y-\lambda(Y)Y^{k}=\sum_{i=1}^{k-1}\lambda_i\wt{Y}^{i}= \sum_{i=1}^{k-1}\lambda_i({Y}^{i}-\lambda({Y}^{i}){Y}^{k})=\sum_{i=1}^{k-1}\lambda_i{Y}^{i}-\left(\sum_{i=1}^{k-1}\lambda_i\lambda({Y}^{i})\right)Y^k\quad\mbox{a.s.}
\eeao
The $k$-th weight is set to $\lambda_k:=\lambda(Y)-\sum_{i=1}^{k-1}\lambda_i\lambda({Y}^{i})$. 
Since $Y^{k}$ maximizes the Euclidean norm in $K$, we have $|\lambda(Y)|\le 1$ a.s. and $|\lambda(Y^i)|\le 1$ for $i=1,\ldots,k$. This guarantees that 
$|\lambda_k|\le 1+ \sum_{i=1}^{k-1} 2^{i-1}=2^{k-1}$ a.s., which finishes the proof.
\end{proof}

\begin{figure}[htbp] \centering 
\begin{subfigure}{0.47\textwidth}
\centering
\begin{tikzpicture}[scale=1] 

 \draw[->] (0,0) -- (4,0) node[right] {$\beta^1$};
  \draw[->] (0,0) -- (0,3) node[above] {$\beta^2$};

  \draw[thick, blue, domain=-98:108, samples=100]
    plot ({2*cos(\x)*cos(30) - sin(\x)*sin(30)+1},
          {2*cos(\x)*sin(30) + sin(\x)*cos(30)+1});

  \draw[->, red, thick] (0,0) -- ({3.4*cos(37)},{3.4*sin(37)}) node[midway, below] {$Y^2$};

  \draw[-, red, dashed] ({-2*cos(127)},{-2*sin(127)}) -- ({3.4*cos(127)},{3.4*sin(127)});

\def\c{0.83}
   \draw[-, red, dashed] ({-\c*cos(127)},{-\c*sin(127)}) -- ({4*cos(37)-\c*cos(127)},{4*sin(37)-\c*sin(127)});
\def\d{1.22}
\draw[-, red, dashed] ({\d*cos(127)},{\d*sin(127)}) -- ({2*cos(37)+\d*cos(127)},{2*sin(37)+\d*sin(127)});

\draw[|-|, blue, thick] ({-\c*cos(127)},{-\c*sin(127)}) -- ({\d*cos(127)},{\d*sin(127)});

\draw[->, red, thick] (0,0) -- ({\d*cos(127)},{\d*sin(127)}) node[midway,left] {$\wt{Y}^1$};

\draw[->, red, thick] (0,0) -- (0.15,1.645) node[midway,right] {$Y^1$};

\def\e{(0.15,1.645)}

\end{tikzpicture}
\caption{The vector~$Y^2$ maximizes the Euclidean norm on $A_t$ and $\wt{Y}^1$ that on the orthogonal projection of $A_t$  along $Y^2$. $Y^1\in A_t$ is a vector that is projected on $\wt{Y}^1$.}\label{fig:left}
\end{subfigure}
\hfill 
\begin{subfigure}{0.47\textwidth}
\centering
\begin{tikzpicture}[scale=0.6] 

 \draw[->] (0,0) -- (4,0) node[right] {$\beta^1$};
  \draw[->] (0,0) -- (0,3) node[above] {$\beta^2$};

    \draw[thick, blue, domain=-98:108, samples=100]
    plot ({2*cos(\x)*cos(30) - sin(\x)*sin(30)+1},
          {2*cos(\x)*sin(30) + sin(\x)*cos(30)+1});

\draw[-, red, thick] ({3.4*cos(37)+0.15},{3.4*sin(37)+1.645}) -- ({-3.4*cos(37)+0.15},{-3.4*sin(37)+1.645}) node[above,left] {$Y^{1}-Y^{2}$};

\draw[-, red, thick] ({-3.4*cos(37)-0.15},{-3.4*sin(37)-1.645}) -- ({3.4*cos(37)-0.15},{3.4*sin(37)-1.645}) node[below,right] {$-Y^{1}+Y^{2}$};

\draw[-, red, thick] ({3.4*cos(37)-0.15},{3.4*sin(37)-1.645}) -- ({3.4*cos(37)+0.15},{3.4*sin(37)+1.645}) node[above] {$Y^{1}+Y^{2}$};

\draw[-, red, thick] ({-3.4*cos(37)+0.15},{-3.4*sin(37)+1.645}) --   ({-3.4*cos(37)-0.15},{-3.4*sin(37)-1.645}) node[below] {$-Y^{1}-Y^{2}$};

\end{tikzpicture}
\caption{The parallelogram with vertices $\pm Y^1 \pm Y^2$ does not contain the set $A_t$, but the larger one with vertices $\pm Y^1 \pm 2 Y^2$ does.}\label{fig:right}
\end{subfigure}
\caption{The set $A_t\subseteq \bbr_+^2$ for two non-redundant stocks. Its boundary is given by the blue curve.} \label{fig:gesamt}
\end{figure}
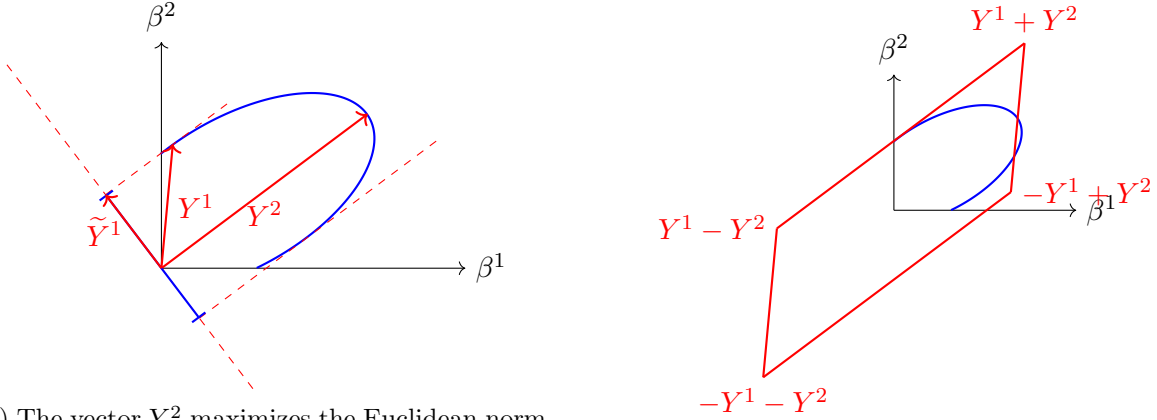

We are now ready to prove Lemma~\ref{lemma:KeyUtility}. 
\begin{proof}[Proof of Lemma~\ref{lemma:KeyUtility}] 
Let $t\in\{0,\ldots,T-1\}$. By Lemma~\ref{lemma:CompactSetA}, we can apply Lemma~\ref{lemma:BaseForCompSet} to the random set that is generated by $A_t$ and obtain: there are $\beta^i_t\in A_t$, $i=1,\ldots,d$, such that for all $\beta_t\in A_t$ there exist $\lambda_i\in L^0(\F_{t};[-2^{i-1},2^{i-1}])$ with $\beta_t=\sum_{i=1}^d\lambda_i\beta^i_t$.
For each $t=0,\ldots,T-1$ we define the one-period normalized strategy 
\beao
\wh \beta_t:=\sum_{i=1}^d\frac{2^{i-1}}{2^{d+1}-1}\beta^i_t\in L^0(\F_{t};\bbr^d_+).
\eeao
As dynamic strategy, we consider $\wh N\in \mathcal{N}$ such that $\beta_t(\wh N)$ from (\ref{eq:normalizedStrat}), with initial capital $\wh x:=(2^{d+1}-1)^{T}x$, coincides with $q_t(\wh \beta_t)$ for each $t=0,\ldots,T-1$. By (\ref{eq:WealthProduct}), the wealth process reads
$$V^0_t(\wh{x},\wh{N})=\wh{x}\prod_{s=1}^t(1+r_s+\langle\wh\beta_{s-1},S_s-(1+r_s)S_{s-1}\rangle),\quad t=0,\ldots,T.$$
This means, the dominating wealth is constructed by starting with initial capital~$\wh{x}$ and investing the product of $\wh{\beta}_t$ and the wealth at the beginning of each period in the stocks.

Let $N\in \mathcal{A}^0(x)$ and define $\beta_t(N)$ as in (\ref{eq:normalizedStrat}) with initial capital~$x$. Under NA, one has $\beta_t(N)\in A_t$  for all $t=0,\ldots,T-1$. This implies that for a fixed $t=0,\ldots,T-1$ there exist $\lambda^N_i\in L^0(\F_t;[-2^{i-1},2^{i-1}])$ such that 
\beao
\beta_t(N)=\sum_{i=1}^d\lambda^N_i\beta^i_t.
\eeao
Next observe that we have the estimates:
\beam\label{eq:boundsforbetagains}
    |\langle\beta^i_t,S_{t+1}-(1+r_{t+1})S_{t}\rangle|\le 2(1+r_{t+1})+\langle\beta^i_t,S_{t+1}-(1+r_{t+1})S_{t}\rangle,\quad i=1,\ldots,d,
\eeam
which can be checked separately on the sets 
$\{0\ge \langle\beta^i_t,S_{t+1}-(1+r_{t+1})S_{t}\rangle\}$ and
$\{0<\langle\beta^i_t,S_{t+1}-(1+r_{t+1})S_{t}\rangle\}$, using the property $1+r_{t+1}+\langle\beta^i_t,S_{t+1}-(1+r_{t+1})S_{t}\rangle\ge 0$. We obtain
\beao
0&\le &1+r_{t+1}+\langle\beta_t(N),S_{t+1}-(1+r_{t+1})S_t\rangle\\
&= &1+r_{t+1}+\sum_{i=1}^d\lambda^N_i\langle\beta^i_t,S_{t+1}-(1+r_{t+1})S_t\rangle\\
&\le &1+r_{t+1}+\sum_{i=1}^d|\lambda^N_i||\langle\beta^i_t,S_{t+1}-(1+r_{t+1})S_t\rangle|\\
& \le &1+r_{t+1}+\sum_{i=1}^d2^{i-1}(2(1+r_{t+1})+\langle\beta^i_t,S_{t+1}-(1+r_{t+1})S_t\rangle)\\
& = & (1+r_{t+1})(2^{d+1}-1)+\langle\sum_{i=1}^d2^{i-1}\beta^i_t,S_{t+1}-(1+r_{t+1})S_t\rangle\\
& = & (2^{d+1}-1)(1+r_{t+1}+\langle\wh\beta_t,S_{t+1}-(1+r_{t+1})S_t\rangle),\quad t=0,\ldots,T-1,
\eeao
where the third inequality is by (\ref{eq:boundsforbetagains}). With the product representation of the wealth process from (\ref{eq:WealthProduct}), we can multiply both sides over all $t$ and obtain
\beam\label{eq:Majorant}
V_T^0(x,N)&=&x\prod_{t=0}^{T-1}(1+r_{t+1}+\langle\beta_t(N),S_{t+1}-(1+r_{t+1})S_t\rangle)\nonumber \\
&\le& x\prod_{t=0}^{T-1}(2^{d+1}-1)(1+r_{t+1}+\langle\wh\beta_t,S_{t+1}-(1+r_{t+1})S_t\rangle)\\
& =& V^0_T(\wh x,\wh N). \nonumber
\eeam
Applying (\ref{eq:Majorant}) to $N=0$ yields that $V^0_T(\wh x, \wh N)\ge x\prod_{t=1}^T(1+r_t)\ge x$ and thus $\E[U^-(V^0(\wh x,\wh N))]<~\infty$. Then, Proposition~\ref{lemma:IntegrabilityAllx} yields that $\E[U^+(V^0(\wh x,\wh N))]<\infty$. Since $\wh{x}$ and $\wh{N}$ do not depend on $N$, we are done.
\end{proof}

\begin{remark} 
Let us interpret our line of argument and compare it with that of R\'asonyi and Stettner~\cite{rasonyi.stettner.2005}.
We show that the (random) polytope with vertices~$(2^{i-1}Y^i)_{i=1,\ldots,k}$ and $(-2^{i-1}Y^i)_{i=1,\ldots,k}$ contains $A_t$ (cf. Lemma~\ref{lemma:BaseForCompSet} and Figure~\ref{fig:right}). Then, we can use that for nonnegative weights~$\lambda_i\in [0,2^{i-1}]$, one 
has $\lambda_i\langle Y^i,S_{t+1}-(1+r_{t+1})S_t\rangle\le 2^{i-1}\left[1+r_{t+1} + \langle Y^i,S_{t+1}-(1+r_{t+1})S_t\rangle\right]$ and for negative weights~$\lambda_i\in [-2^{i-1},0)$, one still has the weaker estimate $\lambda_i\langle Y^i,S_{t+1}-(1+r_{t+1})S_t\rangle\le 2^{i-1}\left[2(1+r_{t+1}) + \langle Y^i,S_{t+1}-(1+r_{t+1})S_t\rangle\right]$. Both holds since $\langle Y^i,S_{t+1}-(1+r_{t+1})S_t\rangle \ge -(1+r_{t+1})$ by $Y^i\in A_t$.
Putting together, we can construct a one-period strategy that requires a multiple of the initial capital, but then dominates (a.s.) all other strategies. The multiple only depends on $d$, which is the reason why the estimate can be extended in a straightforward way to multiperiod models.

In the proof of R\'asonyi and Stettner~\cite[Lemma~2.3]{rasonyi.stettner.2005}, a similar estimate is provided. But they estimate against a maximum of finitely many trading gains of strategies from $A_t$, which is itself in general no trading gain.   
This is the reason why their approach to obtain an integrable majorant does not directly extend to the multiperiod model, and they apply other arguments instead. 

There is yet another estimate in the monograph of F\"ollmer and Schied~\cite[proof of Theorem~3.3]{foellmer.schied.2016}. Under the constraint that stock prices are nonnegative, a strategy with larger initial capital is constructed whose wealth dominates (a.s.) that of all strategies with initial capital~$1$. But, the required initial capital depends on the initial stock prices, which would be random from the second period onward, and thus, the result also does not extend directly to multiple periods.  
\end{remark}

\begin{remark}
To include short-selling in the model with taxes, it would be natural to introduce random variables~$N'_{i,t,j}$ that specify the short-positions in the stocks separately. As in the model of Constantinides~\cite[Subsection~4.1]{constantinides.1983} with the FUL tax rule, the investor could  
divest a stock from her portfolio but defer the tax payments due by trading in a market for short sell contracts.
Since this does not seem to be a quite realistic option, especially for retail investors, we exclude short-selling of the stocks.

However, mathematically, the short-selling constraints are used in only two places. In Theorem~\ref{theo:main closedness} we assume that $S\ge 0$, which means that shortable stocks would have to be bounded from above as well.
Furthermore, the decomposition into reversible and purely nonreversible strategies is under short-selling constraints.
This decomposition is an alternative to the orthogonal projection on the set of reversible strategies that can (only) be used in the case without short-selling constraints, in which strategies form a linear space. 
But, the decomposition can also be applied to to the unconstraint problem by identifying it with an artificial constraint problem that has double the number of stocks:
for $\beta,\beta'\in L^0(\mathcal{F}_t; \bbr_+^d)$, the gains are given by $\langle \beta,S_{t+1}-(1+r_{t+1})S_t\rangle + \langle \beta',(1+r_{t+1})S_t-S_{t+1}\rangle$. Applying Lemma~\ref{30.12.2025.1} to the artificial model, we obtain a purely nonreversible part~$q_t(\beta,\beta')\in L^0(\mathcal{F}_t; \bbr_+^{2d})$ that possesses the necessary properties. However, $(q(\beta,\beta')^j-q_t(\beta,\beta')^{j+d})_{j=1,\ldots,d}$ need not coincide with the orthogonal projection of $(\beta^j-(\beta')^j)_{j=1,\ldots,d}$ since for the latter, the components of the reversible and the purely nonreversible parts need not have the same sign as those of the strategy itself.
\end{remark}

If the market is frictionless and contains no redundant securities, and the utility function is strictly concave, then the maximizer is unique (\cite[Theorem~1.2]{rasonyi.stettner.2005}). In the following, we show that with taxes there can be multiple maximizers even if the corresponding frictionless market model has no redundant securities in the sense that  $\mathcal{R}_t=\{0\}$ for all $t=0,1,\ldots,T-1$. The minimalist example is a deterministic two-period model with $d=1$, $\alpha\in(0,1)$, $x=1$, $r>0$, $S_0=1$, $S_1=1+a$, and $S_2=(1+a)^2$, where $a\in ((1-\alpha)r,r)$ is given by the solution of the equation 
\beao
(1+a)^2(1-\alpha) + \alpha = (1+(1-\alpha)r)^2.
\eeao
Both investing the entire initial capital in the bank account and investing it in the stock are optimal, regardless of the utility function. However, the drawback of the example is that any debt-financed investment in stocks leads to sure losses. 
One might be led to conjecture that in a model with a single stock, maximizers are unique in those periods where there is a positive probability of outperforming the bank account.
We therefore provide a more sophisticated example showing that this is not the case. By concavity of $V^\alpha(x,\cdot)$ (Lemma~\ref{lemma:concave}(iv)), the optimal terminal wealth has to be unique if the utility function is strictly concave.
But, two different strategies (and their convex combinations) generate the same optimal terminal wealth.
\begin{example}[Non-uniqueness of maximizers]\label{30.1.2026.1}
Let $T=3$, $d=1$, $\alpha\in(0,1/9)$, $x=1$, $\Omega=\{\omega_1,\omega_2\}$ with $P(\{\omega_1\})=P(\{\omega_2\})=1/2$, and $\omega$ is revealed at time~$2$. The interest rate is constant~$r\in(0,1/3)$, and the price process of the single stock is given by $S_0=1$, $S_1=1+a$, $S_2(\omega_1)=(1+a)(1+4r)$, $S_2(\omega_2)=(1+a)(1-r)$, $S_3(\omega_1)=(1+a)(1+4r)(1+b)$, and $S_3(\omega_2)=0$. The parameters $a$ and $b$ are not yet specified. They should be strictly smaller than~$r$, but $r-a$ and $r-b$ should be small. This means that in the first and third period, the stock's deterministic return is lower than the interest rate, but the advantage from deferring taxes would outweigh the difference.

First, we let $\lambda \ge 1$ and consider the strategy~$N^\lambda$ with $N^\lambda_{0,0}:=N^\lambda_{0,1}:=1$, $N^\lambda_{0,2}:=1_{\{\omega_1\}}$, $N^\lambda_{1,1}:=\lambda-1$, $N^\lambda_{1,2}:=3/4 (\lambda-1) 1_{\{\omega_1\}}$, and $N^\lambda_{2,2}:=0$. This means, the initial capital~$1$ is invested in the stock, $\lambda-1$~stocks are additionally purchased on credit at time~$1$, and at time~$2$ on $\{\omega_1\}$, the investor sells as many stocks as possible without triggering immediate tax payments. The stocks bought at $1$ that have lower book profits (if $a>0$) she sells first. If $r-b>0$ 
is small enough such that
\beam\label{15.12.2025.1.ck}
(1+4r)(1+b)(1-\alpha) + \alpha > (1+(1-\alpha)4r)(1+(1-\alpha)r), 
\eeam
this dominates a strategy that pay taxes already at time~$2$. We denote by $\un{r}$ the unique number $b$ that satisfies (\ref{15.12.2025.1.ck}) with equality. We note that $\un{r}\in((1-\alpha)r,r)$.

We want to construct a second strategy~$\wt{N}^\lambda$ with $\wt{N}^\lambda_{0,0}>1$ but $\wt{N}^\lambda_{1,1}<\lambda-1$ (for $\lambda>1$) that allows to sell more stocks at time~$2$ without triggering immediate tax payments because of larger realized losses in the bank account. We want to choose the parameters $a$ and $b$ (that do not depend on $\lambda$) such that $V^\alpha(1,\wt{N}^\lambda)=V^\alpha(1,N^\lambda)$ for all $\lambda\ge 1$. The requirement that $V^\alpha(1,\wt{N}^\lambda)(\omega_2)$ coincides with $V^\alpha(1,N^\lambda)(\omega_2)=(1+a)(1-(2\lambda-1)r)$ leads to the equation
\beam\label{15.12.2025.2.ck}
& & (\wt{N}^\lambda_{0,0}-1)\left[(1+a)(1-r) - (1+r)^2\right]\nonumber\\
& & + (\wt{N}^\lambda_{1,1}-(\lambda-1))\left[(1+a)(1-r) - (1+a)(1+r)\right] = 0,
\eeam
using that no taxes are paid on $\{\omega_2\}$ if $\wt{N}^\lambda_{0,1}\ge 1$. Now, we compare the realized losses of the candidate strategy~$\wt{N}^\lambda$ and $N^\lambda$ up to time~$2$ on $\{\omega_1\}$. By (\ref{15.12.2025.2.ck}), the difference reads 
\beam\label{15.12.2025.3.ck}
& & (\wt{N}^\lambda_{0,0}-1)(2r+r^2) + (\wt{N}^\lambda_{1,1}-(\lambda-1))(1+a)r\nonumber\\
& & = (\wt{N}^\lambda_{0,0}-1)\left[2r+r^2 - (1+a)r\frac{(1+a)(1-r) - (1+r)^2}{(1+a)(1-r) - (1+a)(1+r)}\right]\nonumber\\
& & =: (\wt{N}^\lambda_{0,0}-1)\left[2r+r^2 - (1+a)r g(a)\right]\nonumber\\
& & =: (\wt{N}^\lambda_{0,0}-1)L(a).
\eeam
We observe that $L(a)$ is positive since $2r+r^2 - (1+a)r g(a)
=2r+r^2 - (3r-a+r^2+ar)/2>0$ for $a\in((1-\alpha)r,r)$.
By (\ref{15.12.2025.3.ck}) and again (\ref{15.12.2025.2.ck}), we can express the difference~$V^\alpha(1,\wt{N}^\lambda)(\omega_1) - V^\alpha(1,N^\lambda)(\omega_1)$ as a function~$f$ of $\wt{N}^\lambda_{0,0}-1>0$. Crucial is that the function does not depend on $\lambda$.
After being taxed at time $3$, there are gains from three investment intervals to compare, from $0$ to $3$, from $1$ to $2$, and from $1$ to $3$: 
\beao
& & f(\wt{N}^\lambda_{0,0}-1)\\ 
& & = (\wt{N}^\lambda_{0,0}-1)(1-\alpha)\left[(1+a)(1+4r)(1+b) - (1+r)^3\right]\\
& & + \frac{(\wt{N}^\lambda_{0,0}-1)L(a)}{(1+a)4r}(1-\alpha)(1+a)\left[(1+4r)(1+r)-(1+r)^2\right]\\ 
& & + \left[-(\wt{N}^\lambda_{0,0}-1)g(a)-\frac{(\wt{N}^\lambda_{0,0}-1)L(a)}{(1+a)4r}\right](1-\alpha)(1+a)\left[(1+4r)(1+b)-(1+r)^2\right]\\
& & =: (\wt{N}^\lambda_{0,0}-1) A + \frac{(\wt{N}^\lambda_{0,0}-1)L(a)}{(1+a)4r} B + \left[-(\wt{N}^\lambda_{0,0}-1)g(a)-\frac{(\wt{N}^\lambda_{0,0}-1)L(a)}{(1+a)4r}\right] C.
\eeao
Since $f$ is linear, it vanishes if its slope, that depends on $a$ and $b$, is zero. For $a=r$ and $b<r$, one has that $g(a)=1$, $A=C$, and $B>A$. 
Thus, the slope is positive. 
For $b=r$ and $a<r$, we have $B=C$ and $A<C$. Together with $g(a)\ge 1$ and $C>0$, this implies that the slope is negative.
Now, we vary $a$ between $\un{r}$ and $r$ and set $b=\un{r}+r-a$.  As the slope is negative for $a=\un{r}$ and positive for $a=r$, there exists an intermediate value such that $f=0$. We take such an intermediate value for $a$ and $b$. Next, the strategy $\wt{N}^\lambda$ is formally defined.
First, we define $\wt{N}^\lambda_{0,0}$ by
\beam\label{13.2.2026.1}
(\wt{N}^\lambda_{0,0}-1)g(a) + \frac{(\wt{N}^\lambda_{0,0}-1)L(a)}{(1+a)4r} = 3/4(\lambda-1).
\eeam
Given $\wt{N}^\lambda_{0,0}$, $\wt{N}^\lambda_{1,1}$ is derived from (\ref{15.12.2025.2.ck}). 
We complete the definition by setting $\wt{N}^\lambda_{0,1}:=\wt{N}^\lambda_{0,0}$, $\wt{N}^\lambda_{0,2}:=\wt{N}^\lambda_{0,0}1_{\{\omega_1\}}$, $\wt{N}^\lambda_{1,2}(\omega_2):=0$, and 
$\wt{N}^\lambda_{2,2}:=0$. By (\ref{13.2.2026.1}), the formal definition of $\wt{N}^\lambda$ is in line with its earlier description. 
If $\omega_1$ occurs, just the stocks with the lower book profits are completely liquidated at time~$2$.
We arrive at $V^\alpha(1,\wt{N}^\lambda)=V^\alpha(1,N^\lambda)$ (and by linearity of $f$, all convex combinations of $N^\lambda$ and $\wt{N}^\lambda$ have the same terminal wealth as well). We observe that $N^\lambda$ and $\wt{N}^\lambda$ differ for $\lambda>1$.

Now, we consider a strategy~$N$ with $\lambda:=N_{0,0}<1$. Either $N$ can be strictly dominated by another strategy or we must have that 
$N_{1,1}=N_{2,2}=0$, $N_{0,1}=N_{0,0}$, and $N_{0,2}=N_{0,0}1_{\{\omega_1\}}$. We denote this strategy by $N^\lambda$ for $\lambda<1$. Indeed, by the choice of $a,b$, an investor who wants to buy debt-financed stocks is indifferent between buying a part of them already at time~$0$ or buying the entire position only at time~$1$. This makes a strategy with $N_{0,0}<x=1$ and $N_{1,1}>0$ strictly suboptimal because 
the situation is similar but the strategy has to pay taxes on the gains in the bank account already at time~$1$. Since strategies with $N_{0,0}\ge 1$ are also dominated, we arrive at
\beao
\{V^\alpha(1,N)\ :\ N\in \mathcal{N}\}-L^0(\F_T; \bbr_+)  = \{V^\alpha(1,N^\lambda)\ :\ \lambda\in\bbr_+\}-L^0(\F_T; \bbr_+),
\eeao
and a utility maximization problem boils down to a one-dimensional maximization problem over $\lambda$. 
Considered as functions of $\lambda$ the terms $V^\alpha(1,N^\lambda)(\omega_1)$ and $V^\alpha(1,N^\lambda)(\omega_2)$ are differentiable at $\lambda=2$,
and the derivatives at this point, denoted by $y_1$ and $y_2$, satisfy $y_1>0>y_2$ and $y_1 > -y_2$. The latter follows from $(1-\alpha)3r(1+(1-\alpha)r) > 2r(1+r)$.
In addition, one has $V^\alpha(1,N^2)(\omega_1) > V^\alpha(1,N^2)(\omega_2)=(1+a)(1-3r)(1+r)>0$. 
Now, we can take an arbitrary utility function~$U$ that satisfies Assumption~\ref{14.2.2026.2} and that is differentable at $V^\alpha(1,N^2)(\omega_1)$, $V^\alpha(1,N^2)(\omega_2)$ with 
\beam\label{14.2.2026.1}
\frac12 y_1 U'(V^\alpha(1,N^2)(\omega_1)) + \frac12 y_2 U'(V^\alpha(1,N^2)(\omega_2)) = 0. 
\eeam
By construction of the parameters $a,b$, we have that $V^\alpha(1,N^{(1-\beta)\lambda_1+\beta\lambda_2})\ge V^\alpha(1,(1-\beta)N^{\lambda_1}+\beta N^{\lambda_2})$ for all $\lambda_1,\lambda_2\ge 0$, $\beta\in[0,1]$. Together with Lemma~\ref{lemma:concave}(iv), this implies that the 
mappings~$\lambda\mapsto V^\alpha(1,N^\lambda)(\omega_i)$, $i=1,2$ are concave. Therefore, (\ref{14.2.2026.1}) implies that $\E[U(V^\alpha(1,N^2))]=\sup_{N\in\mathcal{N}}\E[U(V^\alpha(1,N))]$. Since $N^2$ and $\wt{N}^2$ generate the same terminal wealth, we are done.
\end{example}

\end{document}